\providecommand{\tabularnewline}{\\}
\numberwithin{equation}{section}
\numberwithin{figure}{section}
\theoremstyle{remark}
\newtheorem{rem}{\protect\remarkname}[section]
\theoremstyle{definition}
\newtheorem{defn}{\protect\definitionname}[section]
\theoremstyle{plain}
\newtheorem{thm}{\protect\theoremname}[section]
\theoremstyle{plain}
\newtheorem{prop}{\protect\propositionname}[section]
\theoremstyle{definition}
\newtheorem{example}{\protect\examplename}[section]
\theoremstyle{definition}
\newtheorem{condition}{\protect\conditionname}
\newtheorem{lem}{\protect\lemmaname}[section]
\newtheorem*{assumptionDQM}{Assumption DQM}
\newtheorem*{assumptionC}{Assumption C}
\newtheorem*{assumptionG}{Assumption G}
\providecommand{\conditionname}{Condition}
\providecommand{\definitionname}{Definition}
\providecommand{\examplename}{Example}
\providecommand{\lemmaname}{Lemma}
\providecommand{\propositionname}{Proposition}
\providecommand{\remarkname}{Remark}
\providecommand{\theoremname}{Theorem}
\begin{document}
\sloppy

\newgeometry{verbose,tmargin=1in,bmargin=1in,lmargin=1.25in,rmargin=1.25in,footskip=1cm}

\title{Treatment Choice with Nonlinear Regret{\thanks{%
We would like to thank Isaiah Andrews, Tommaso Denti, Takashi Hayashi, Keisuke Hirano, Charles Manski, Francesca Molinari, Jos\'e Luis Montiel Olea, Collin Raymond,  J\"org Stoye, 
Aleksey Tetenov, Davide Viviano, Kohei Yata, and the participants at ASSA 2023, Bristol Econometric Study Group Conference, Boston College, Chicago, Cowles Foundation Conference, Cornell Brown Bag Seminar, Greater NY Metropolitan Area Econometrics Colloquium, IAAE 2022,
Indiana Bloomington, Interactions Conference 2022, Michigan State, NASMES 2022, Northwestern and Penn State for helpful comments.
The authors gratefully acknowledge financial support from ERC grants (numbers 715940 for Kitagawa and 646917 for Lee), the
ESRC Centre for Microdata Methods and Practice (CeMMAP) (grant number RES-589-28-0001) and NSF grant (number SES-2315600).}}}
\author{Toru Kitagawa\thanks{ Department of Economics, Brown University and Department of Economics, University College London. Email: toru\_kitagawa@brown.edu} \and Sokbae Lee\thanks{Department of Economics, Columbia University. Email: sl3841@columbia.edu} \and Chen Qiu\thanks{Department of Economics, Cornell University. Email: cq62@cornell.edu}}
\date{September 2024}
\maketitle

\begin{abstract}

The literature focuses on the mean of welfare regret, which can lead to undesirable treatment choice due to sensitivity to sampling uncertainty. We propose to minimize the mean of a \textit{nonlinear transformation} of regret and show that singleton rules are not essentially complete for nonlinear regret. Focusing on mean square regret, we derive closed-form fractions for finite-sample Bayes and minimax optimal rules. Our approach is grounded in decision theory and extends to limit experiments. The treatment fractions can be viewed as the strength of evidence favoring treatment. We apply our framework to a normal regression model and sample size calculations.\\ 

\textsc{Keywords}: Statistical decision theory, treatment assignment rules, mean square regret, limit experiments
\end{abstract}

\restoregeometry

\newpage{}

%\begin{comment}{We can insert our comments using the ``todonotes'' package.} 
%\end{comment}
\onehalfspacing

\section{Introduction}
Evidence-based policy making using randomized control trial data is becoming increasingly common in various fields of economics. How should we use data to inform an optimal policy decision in terms of social welfare? Building on the framework of statistical decision theory as laid out in \cite{Wald50}, the literature on statistical treatment choice initiated by \cite{manski2004statistical} analyzes how to use data to inform a welfare optimal policy. Following \cite{Savage51} and \cite{manski2004statistical}, researchers often focus on the average of welfare regret across the sampled data (called \emph{expected regret}) and obtain an optimal decision rule by minimizing a worst-case expected regret. %%See \cite{schlag2006eleven}, \cite{stoye2009minimax, stoye2012minimax}, \cite{tetenov2012statistical}, \cite{ishihara2021}, and \cite{yata2021} for finite sample minimax regret treatment rules, and \cite{HiranoPorter2009, HiranoPorter2020} for asymptotic analysis on treatment rules with limit experiments.

When it comes to the ranking of different statistical decision rules, once we eliminate those that are stochastically dominated, it becomes less obvious how we should compare decision rules that do not stochastically dominate each other. Focusing on the expected regret, as suggested by \cite{manski2004statistical}, provides a natural starting point.\footnote{There is, however, no compelling argument why we should limit our attention to the mean of regret, as has been acknowledged by \citet{manski2014quantile} and \citet{manski2021econometrics}. }
In general, regardless of whether we consider a Bayes or minimax criterion, optimal decision rules defined in terms of their expected regret are singleton rules, i.e., given a sample, optimal decision rules either treat everyone, or no-one in the population. As an artificial example, suppose that the outcome of interest is $+1$ or $-1$ (success or failure) and imagine that we observe 100 successes and 99 failures (the status quo is zero for everyone). The empirical success (ES) rule, which is asymptotically optimal in terms of the mean of regret, suggests that everyone in the entire population should be treated.
If there is a swing of one outcome from $+1$ to $-1$ though, then the same ES rule now dictates that no-one should be treated. Such high sensitivity and aggressiveness of treatment decisions with respect to sampling uncertainty can incur a large welfare loss due to random sampling errors, especially when the sample size is small. Axiomatically, the mean regret paradigm fails to capture some important and empirically relevant factors in decision making, e.g.,  a decision maker may be worried about large welfare loss in particularly bad states of the world.

%implies that,  only focusing on  the mean of regret ignores other features of the distribution of regret (e.g., second- or higher-order moments and tail probabilities) and  
%When it comes to the ranking of different statistical decision rules, once we eliminate those that are stochastically dominated (\emph{inadmissible}), it becomes less obvious how we should compare decision rules that do not stochastically dominate each other \citep{manski2021econometrics}. \cite{Wald50}'s proposal is to focus on the mean performance across samples (e.g., expected regret). However, regardless of whether we consider Bayes or minimax criterion, optimal decision rules in terms of the expected welfare regret are generally nonrandomized, i.e., given a sample, the optimal decision rules either treat every one, or no-one in the population. Moreover, focusing solely on the mean of the regret and ignoring other features of the regret distribution (e.g., second- or higher-order moments and the tail probability) can lead to rules that suffer from a high chance of large welfare loss.%

To address these concerns in the mean regret framework, this paper proposes a novel, simple and tractable approach to treatment choice with finite data  by optimizing a \textit{nonlinear transformation} of welfare regret. In what follows, we let $g(\cdot)$ be a nonlinear transformation of the regret. We assess the performance of each treatment rule via the expected value of the transformed regret loss that it delivers. In the spirit of \cite{Wald50}, this average nonlinear regret over realizations of the sampling process becomes the risk function. We refer to this risk as a \emph{nonlinear regret risk}. Due to the nonlinearity of $g(\cdot)$, information relating to other moments of the regret distribution is encoded in the risk function. 
For example, when $g(r)=r^2$, the associated risk function is the sum of the squared expected regret and the variance of regret, penalizing decision rules that lead to a high variance of regret.
We refer to this nonlinear regret risk as \emph{mean square regret}.\footnote{As shown in Remark \ref{rem:v.regret.equal.v.welfare}, the variance regret is equivalent to the variance of welfare. Thus, compared to the expected regret criterion, our mean square regret can also be viewed as penalizing rules with a high variance of welfare.} We stress that our nonlinear regret approach is not an ad-hoc modification of the existing mean regret paradigm. In fact, it finds counterparts in decision theory from  \cite{hayashi2008regret} and \cite{stoye2011axioms} who build a rich  axiomatic model for a class of regret-driven choices, including mean square regret and many other nonlinear regret criteria (corresponding to what \cite{hayashi2008regret} calls \emph{regret aversion}). These criteria can better accommodate decision maker's aversion to large welfare loss in bad states of the world. We discuss the connection of our approach and the results of \cite{hayashi2008regret} more in detail in Section \ref{sec:discussion.axiom}. 

%\footnote{Regret-based choices are also studied theoretically \citep[][among others]{stoye2011axioms,qin2015model,hayashiy2016risk,heydari2024regret}, experimentally \citep[e.g.,][]{camille2004involvement,coricelli2005regret,strack2021too,fioretti2022dynamic}, and are empirically relevant (e.g., \citealt{filiz2007auctions,boeri2013role,frydman2016neural}). }

This shift of criterion towards a nonlinear transformation of regret changes optimal rules drastically. We show that, for many nonlinear transformations and a large class of distributions, singleton rules are either incomplete or inadmissible, offering novel decision-theoretic justifications for implementing fractional treatment assignment rules. Our approach also stands out in terms of the tractability, simplicity and interpretability of the associated optimal rules. We are able to provide general formula on Bayes and minimax optimal rules based on nonlinear regret risks. For mean square regret, we derive closed-form solutions for both Bayes and minimax optimal decision rules, not only in Gaussian finite samples with known variance but also asymptotically. These optimal rules are fractional and surprisingly easy to calculate and interpret. For example, the minimax optimal treatment assignment fraction has the following  logistic structure:
\[
\frac{\exp\left(2\cdot1.23\cdot\text{t-statistic}\right)}{\exp\left(2\cdot1.23\cdot\text{t-statistic}\right)+1},
\]
where the $t$-statistic is for the average treatment effect estimated from experimental data, and which  coincides with the posterior probability-matching assignment under the least favorable
prior.\footnote{The posterior probability-matching assignment, known as the Thompson sampling algorithm \citep{thompson1933likelihood}, possesses a desirable exploration-exploitation property in bandit problems. Our results show that the posterior probability-matching assignment can be justified in terms of minimax mean square regret, even in the static treatment choice problem where the exploration motive does not exist.} For example, our asymptotically minimax optimal rule for the previous artificial example would only allocate 54\% of the population to the treatment, dropping to 46\% if one outcome switches. Due to their fractional nature, our rules are useful even beyond the treatment decision paradigm: researchers may conveniently interpret our rule as a summary statistic that quantifies the strength of evidence in support of the treatment versus control. 
See Section \ref{sec:discussion.interpretation} for further discussions on this matter.

Given a nonlinear regret risk and a prior for the underlying potential outcome distributions, we obtain the Bayes optimal rules. 
Consistent with our incompleteness and inadmissibility results, Bayes optimal rules are, in general, also fractional rules. 
For mean square regret, we show that the Bayes optimal rule is a \textit{tilted} posterior-probability matching rule, where the probability of random assignment corresponds to the posterior probability tilted by a certain weighting term. In a special case where the prior for the average treatment effect is supported only on two symmetric points, the tilting term is nullified and the Bayes optimal rule boils down to the Thompson-sampling type posterior-probability matching rule. 
For the minimax optimal rule in a Gaussian experiment with known variance, we can show that a least favorable prior is supported on two symmetric points. 
Hence, the minimax optimal rule follows the posterior-probability matching assignment rule, and is a logistic transformation of the sample mean. This minimax mean square regret rule is easy to compute and tuning- or hyper-parameter free. 
%Finding the minimax optimal rule is more involved in our setting, since deterministic rules are no longer admissible and it is essential to find the exact location of the support of the least favorable distribution. We tackle this issue by using the properties of P\'{o}lya type distributions developed in \cite{karlin1957polya}. Similar techniques are also used by \cite{casella1981estimating} who find the minimax estimator for a bounded normal mean.

Imagine the outcome of interest now follows a normal distribution $N(1,1)$ with unit mean and unit variance, whereas the status quo is zero for everyone. In this scenario, the infeasible optimal rule is to treat everyone and the regret of any decision rule is supported on $[0,1]$. Suppose the planner observes one observation from the $N(1,1)$ distribution and needs to make a treatment choice. The ES rule is optimal in terms of expected regret, but could be far from ideal in terms of other features of the regret and welfare distribution. 
In fact, if the planner adopted ES rule, then there would be a mass of 16\% probability that she ended up with the largest possible regret of one (and the smallest possible welfare of zero). In contrast with the mean regret criterion commonly used in the literature, our mean square regret criterion penalizes rules with large variance of the regret distribution (and equivalently, large variance of welfare). If, instead, the planner implemented our proposed minimax rule, she could avert such high chance of welfare loss: the probability of incurring a regret larger than 0.95 (and a welfare smaller than 0.05) is only 1.4\%.  Also see Figure \ref{fig:regret.distribution} for a comparison of the distributions of the regret and welfare for ES rule and our proposed mean square regret minimax optimal rule.

\begin{figure}[htbp]
\begin{tabular}{ccccc}
\toprule 
\multirow{2}{*}{} & ES rule  & Our proposed minimax rule \tabularnewline
\midrule
Mean of regret & 0.1587 & 0.2087 &\tabularnewline
Standard deviation of regret & 0.3654 & 0.2698 \tabularnewline
Mean of welfare & 0.8413 & 0.7913 \tabularnewline
Standard deviation of welfare & 0.3654 & 0.2698 \tabularnewline
Mean square regret & 0.1587 & 0.1133 \tabularnewline
\bottomrule
\end{tabular}
    \centering
    \includegraphics[scale=0.5]{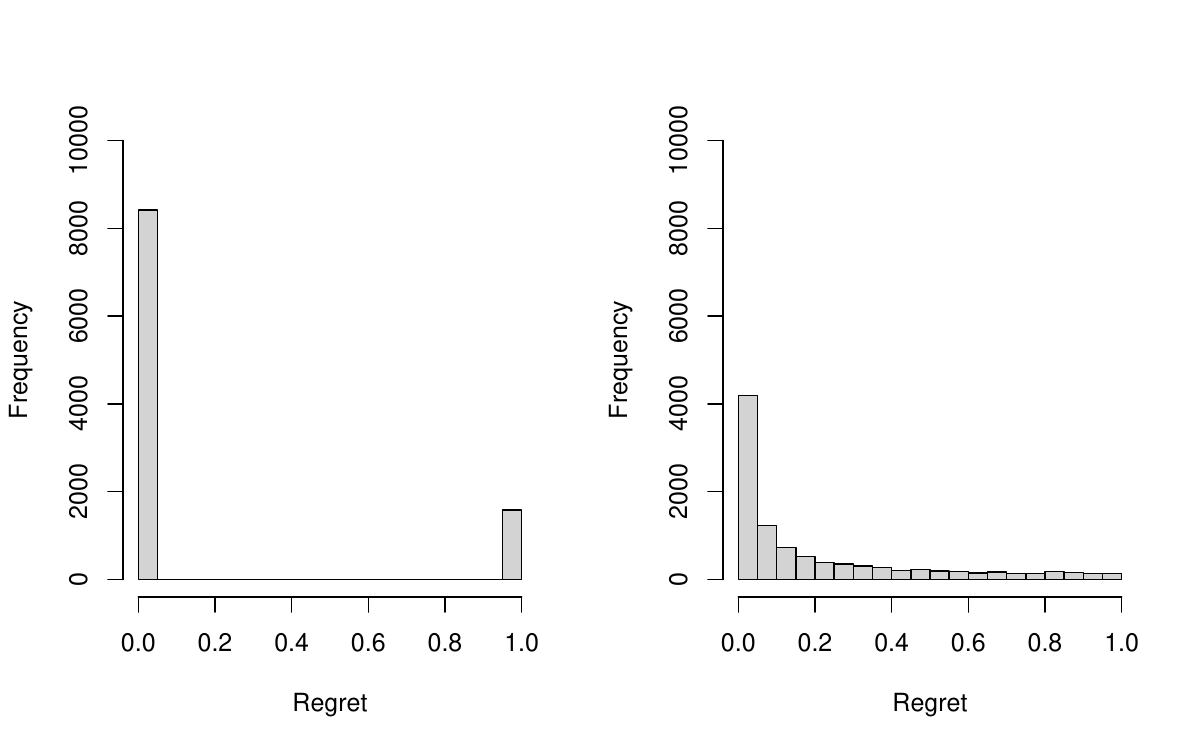}
     \includegraphics[scale=0.5]{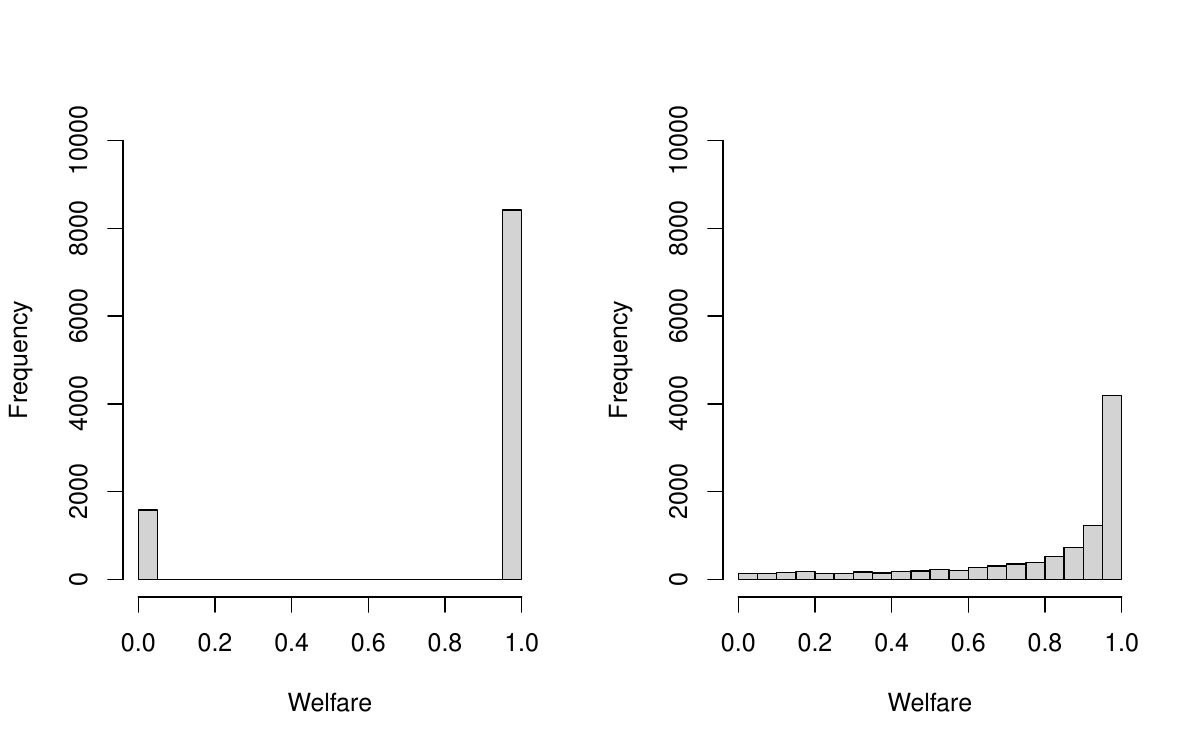}
    \caption{Summary statistics and empirical distributions of regret and welfare for the ES rule (left) and our proposed minimax optimal rule (right) in one $N(1,1)$ experiment, 10000 simulations.}
    \label{fig:regret.distribution}
\end{figure}

We demonstrate the usefulness of our approach in two applications. First, using our general theory, we derive a minimax optimal rule in a normal regression model with binary treatment, a specification frequently used by many practitioners. Second,  in practice, the planner often has a preference for singleton rules, and calculates a sufficient sample size for their randomized experiment based on these singleton rules. We show that implementing these singleton rules can lead to a large efficiency loss in terms of mean square regret. %For example, to guarantee the same mean square regret with our proposed minimax optimal rule, ES rule and hypothesis testing rule require 40\% and 1100\% more observations, respectively. 

Following \cite{HiranoPorter2009, HiranoPorter2020}, we extend our finite sample results to a large sample setting by engaging with the limit experiments framework introduced by \cite{le2012asymptotic}. Even when potential outcome distributions are non-Gaussian but belong to a regular parametric class, we can obtain a Gaussian limit experiment with known variance.
%If potential outcome distributions belong to a regular parametric class (that includes non-Gaussian distributions), we obtain a Gaussian limit experiment with a known variance.%
Therefore, we can apply our results from a finite sample Gaussian experiment to a limit experiment and find feasible and asymptotically optimal rules with some efficient estimator of the parameters. Interestingly, in the limit experiment, the Bayes optimal rule under the mean square regret remains different from the minimax optimal rule, although the resulting mean square regret is quantitatively similar between the two rules. This is in contrast with the \emph{linear} regret risk, for which it is known that the Bayes optimal and minimax optimal rules in the limit experiment are the same empirical success rule.

%The motivation behind our approach is that we hope to evaluate the merit of decision rules by going beyond the expectation of their regret distributions. Given a statistical decision rule designed from sample data, \cite{Savage51} suggests to look at its regret as the loss, which becomes a random variable due to the sampling uncertainty. Following \cite{Wald50}, \cite{manski2004statistical} advocates to rank the decision rules by the expected value of the regret loss. This implies that only the information from the first moment of the regret distribution is used. 

Our justification for implementing a fractional treatment assignment rule differs from those given in the existing literature so far, which all use the conventional expected regret as the criterion. See, for example, \cite{manski20092009} for a detailed review of fractional rules with standard regret under ambiguity and other non-standard settings, including nonlinear welfare, interacting treatments, learning and other non-cooperative aspects. More specifically, when the linear welfare is partially identified, \cite{Manski2000,manski2005social,manski2007identification, Manski2007} shows that minimax regret optimal rules are fractional even with the true knowledge of the identified set.\footnote{There are two approaches to go without assuming the true knowledge of the identified set. One approach is to plug in an estimate of the identified set, treating it \emph{as if} it is the true object \citep{manski2013public,cassidy2019tuberculosis, manski2021probabilistic}. The other approach is to directly consider finite sample minimax regret optimal rules, which can be also fractional \citep{ stoye2012minimax,yata2021,manski2022identification} if model ambiguity is sufficiently large compared to statistical uncertainty.}  \cite{manski2007admissible} and \cite {manski20092009} justify fractional rules via a nonlinear welfare in a point-identified setting. \cite*{kock2022functional,kock2023treatment}   show that optimal treatment rules may be fractional if the decision maker targets a functional of the outcome distribution that is not quasi-convex. Fractional rules also arise when agents response with strategic behavior \citep{munro2020learning}. Our results justify fractional rules in a standard setting without model ambiguity, nonlinear welfare or other strategic aspects.

In a series of papers, Manski and Tetenov have explored optimal treatment rules in frameworks that go beyond the classical paradigm of the statistical decision theory laid out by \cite{Wald50}. \cite{manski1988ordinal,manski2011actualist} argues to maximize a functional of the welfare distribution that at least weakly respects stochastic dominance.  \cite{manski2014quantile} consider the performance of a statistical treatment rule measured in terms of quantiles of the welfare. Our approach is distinct from the approaches taken by the aforementioned papers. In particular, we  select treatment rules based on the distributions of their regret. Motivated by the risk aversion of policy makers, \cite{manski2007admissible} consider a concave and monotone transformation of welfare measured in terms of a binary outcome, and define regret in terms of the transformed welfare. That is, \cite{manski2007admissible} take a concave transformation of the welfare and keep the associated regret linear, while our approach looks at a nonlinear transformation of regret, keeping the welfare linear. These two approaches share a similar motivation: they advocate that decision makers may wish to take other aspects of the regret distribution into consideration when ranking decision rules. 
%As a result, in both of these approaches, fractional rules cannot be eliminated. 
However, the strength of our approach lies in its technical tractability and practical implementability. See Online Appendix \ref{sec:app.compare} for further discussions on these matters.

The literature on the treatment choice problem has become an area of active research since the pioneering works of \cite{Manski2000, manski2002treatment,manski2004statistical} and \cite{Dehejia2005} introduced a decision theoretic framework to the problem. When the welfare is point-identified, minimax regret treatment choice rules for finite samples are derived, in different settings, by  \cite{schlag2006eleven}, \cite{stoye2009minimax},  \cite{tetenov2012statistical}, \cite{masten2023minimax}, and \cite{chen2024note}.  \cite{manski2014quantile} and \cite*{guggenberger2024minimax} study treatment choice problems when quantiles are of interest. \cite{HiranoPorter2009,HiranoPorter2020} introduce an asymptotic framework to analyze treatment rules with limit experiments. When the welfare is partially identified, \cite{Manski2000,manski2005social,manski2007identification, Manski2007, manski20092009} analyzes the treatment choice given the knowledge of the identified set. Treatment allocation analyses without the knowledge of the identified set but with finite sample data include \cite{stoye2012minimax}, \cite{christensen2020}, \cite{ishihara2021}, \cite{yata2021}, \cite{manski2022identification}, \cite{ishihara2023bandwidth} and \cite{montielolea2023decision}.  \cite{Chamberlain2011} investigates a Bayesian approach to treatment choice, and \cite{christensen2020} and \cite{Giacomini2021} discuss a robust Bayesian approach.

There is a growing literature on learning in the context of individualized treatment rules that map an individual's observable characteristics to a treatment. See \cite{manski2004statistical}, \cite{BhattacharyaDupas2012}, \cite{kitagawa2018should, KT21}, \cite{MT17}, \cite{AW17}, \cite{adjaho2022externally}, \cite{han2023optimal}, and \cite{cui2023individualized}, among others, for analyses in different settings.  Our analysis does not incorporate individuals' observable covariates. Since the nonlinear regret risk aggregates the conditional nonlinear regret risk additively, it is straightforward to incorporate observable discrete covariates into our analysis, i.e., an optimal individualized fractional assignment rule that applies an optimal fractional assignment rule to each subpopulation of individuals sharing the same covariate value. 

The rest of the paper is organised as follows. Section \ref{sec:setup} introduces our setup. Section \ref{sec:admit} studies the admissibility and completeness of decision rules with nonlinear regret risk. Section \ref{sec:finite} presents finite sample results on Bayes and minimax optimal decision rules. In Section \ref{sec:discuss}, we discuss the axiomatic foundation of our criteria and the interpretation of our rules as a measure of strength of evidence. Section \ref{sec:asymptotic} extends our results to the limit experiment framework and derives asymptotically optimal decision rules. Section \ref{sec:applications} applies our theory to an example of treatment choice in a normal regression model and an example of sufficient sample size calculation in randomized control trials. Section \ref{sec:conclude} concludes. Proofs and lemmas are reserved for the Appendix.

\section{Setup} \label{sec:setup}

Consider the assignment of a binary treatment $D \in \left\{ 1,0\right\}$ to an infinitely large population of individuals whose treatment effects can be heterogeneous. Let $Y(1)$ be the potential outcome when $D=1$ (with treatment)  and $Y(0)$ be the potential outcome when  $D=0$ (no treatment). Denote by $P\in \mathcal{P}$ the joint distribution of $(Y(1),Y(0))$, where $\mathcal{P}$ is a set of distributions under consideration. Define $\mu_{1} := \mathbb{E}[Y(1)]$ and $\mu_{0} := \mathbb{E}[Y(0)]$
as the means of the potential outcomes $Y(1)$ and $Y(0)$ under the distribution $P$. We assume that the welfare of the planner is determined by the mean outcome in the population. Defining the population average treatment effect as $\tau:=\tau (P):=\mu_{1}-\mu_{0}$, the infeasible optimal treatment policy is as follows:  allocate $D=1$ to each individual in the population if $\tau\geq0$ and allocate everyone $D=0$ otherwise.

We also assume the decision problem is nontrivial in the sense that the image of the mapping $\tau(P),P\in\mathcal{P}$ contains both positive and negative values.
%, and that the support of $P$ is the same for all $P\in\mathcal{P}$. 
Since the sign of $\tau$ is unknown, the planner collects an experimental sample of the observed outcomes of $n$ units randomly drawn from the population $P$, and the experimental design is known to the planner. The experiment generates a random vector 
$Z_{n}:=\left\{ Y_{i},D_{i}\right\} _{i=1}^{n} \in \mathbf{Z}_{n}$, where  $Y_{i}$ is the observed outcome of unit $i$,  $D_{i}$ is the treatment status of unit $i$, and $\mathbf{Z}_{n}$ is the sampling space. Let  $P^{n}$ be the sampling distribution of $Z_{n}$, which depends on $P$ as well as the known experimental design.\footnote{For example, for a randomized control trial with a known treatment probability $0<\pi<1$, the joint likelihood of $Z_n$ is written as $P^n(Z_n)=\prod_{i=1}^{n}(\pi f_1(Y_i(1)))^{D_i}((1-\pi)f_0(Y_i(0)))^{1-D_i}$, where $f_1$ and $f_0$ are the marginal densities of $Y(1)$ and $Y(0)$, respectively. Our setup also accommodates other experimental designs. The derivation of $P^n$ is analogous but may be more involved if the experimental design is complicated. Also, we use the uppercase letter $Z_n$ to denote a random vector and use lowercase letter $z_n$ to denote a realized value of $Z_n$.}
After observing data $Z_{n}$, the planner chooses a statistical
treatment rule $\hat{\delta}$ that maps $Z_{n} \in \mathbf{Z}_{n}$ 
to a real number between 0 and 1, i.e., 
\[
\hat{\delta}:\mathbf{Z}_{n}\to[0,1],
\]
where $\hat{\delta}(z_{n})$ is the proportion of the population receiving the treatment. Denote by $\mathcal{D}$ the set of all statistical decision rules under consideration. 
\begin{rem}
In our setting, the action space of the planner is $[0,1]$, instead of $\{0,1\}$. That is, the planner is allowed to make fractional treatment allocation and to differentiate the treatment statuses of individuals in the population. Following the terminology from \cite{manski2004statistical,manski2021econometrics}, we say $\hat{\delta}$ is a \emph{singleton} 
rule if $\hat{\delta}(z_{n})\in \{0,1\}$ for almost all $z_{n}\in\mathbf{Z}_{n}$. We say $\hat{\delta}$ is \emph{fractional} if $\hat{\delta}$ is not a singleton rule.
Intuitively, after observing data, a singleton rule either treats everyone, or no one in the population, whereas a fractional rule allocates an interior fraction of the population to the treatment, leaving the rest of the population untreated. A fractional rule $\hat{\delta}(z_{n})$ may be implemented according to some randomization device after observing $Z_{n}=z_{n}$.
\end{rem}
Applying the statistical treatment rule  $\hat{\delta}$ to the population yields a welfare of
\[
W(\hat{\delta}):=W(\hat{\delta},P):=\mu_{1}\hat{\delta}+\mu_{0}(1-\hat{\delta})
\]
to the planner. The infeasible optimal treatment policy that maximizes welfare is $\delta^{*}:=\mathbf{1}\{\tau\geq0\}$.
Following \cite{Savage51} and \cite{manski2004statistical}, we define the regret of $\hat{\delta}$ as its welfare compared
to the welfare of $\delta^{*}$, i.e.,
\[
Reg(\hat{\delta}):=Reg(\hat{\delta},P):=\tau[1\{\tau\geq0\}-\hat{\delta}].
\]
Since $Reg(\hat{\delta})$ is a random object that depends on realizations of the random vector $Z_{n}$,  \cite{manski2004statistical} follows \cite{Wald50} in measuring the performance of $\hat{\delta}$ using its risk, i.e., the expected regret across realizations of the sampling process:
\[
R(\hat{\delta},P):=\mathbb{E}_{P^{n}}[Reg(\hat{\delta})]:=\int_{z_{n}\in \mathbf{Z}_{n}} Reg(\hat{\delta}(z_{n}))dP^{n}(z_{n}),
\]
where $\mathbb{E}_{P^{n}}$ denotes the expectation with respect to $P^{n}$.

The risk criterion $R(\hat{\delta},P)$ ranks treatment rules according to their mean regret. We, instead, consider a planner whose assessment of the performance of statistical treatment rules depends not only on the mean of regret but also on some other features of the regret distribution. To take other features of the regret distribution into consideration, we look at the nonlinear transformation of regret:
\[
g(Reg(\hat{\delta})),
\]
where $g:\mathbb{R}^{+}\rightarrow\mathbb{R}^{+}$ is some nonlinear function. The planner's preference over statistical decision rules $\hat{\delta}$ is measured by the expected value of $g(Reg(\hat{\delta}))$ with respect to realizations of $Z_n$:
\begin{equation}\label{eq:nonlinearregretrisk}
R_g(\hat{\delta},P) := \mathbb{E}_{P^n}[ g(Reg(\hat{\delta}))].
\end{equation}
We refer to the criterion $R_g(\hat{\delta},P)$ as the \textit{nonlinear regret risk}.\footnote{The conventional mean regret risk function is a special case of our approach by taking $g$ to be linear.} Due to the nonlinearity of $g(\cdot)$, $R_g(\hat{\delta},P)$ depends not only on the mean but also on other features of the regret distribution, including its higher-order moments.  For instance, if we specify the quadratic function $g(r)=r^2$,  the squared regret is
\[
(Reg(\hat{\delta}))^2=\tau^{2}[1\{\tau\geq0\}-\hat{\delta}]^{2}.
\]
This squared regret constitutes the new loss function, and we can evaluate the performance of $\hat{\delta}$ via \emph{mean square regret}:
\[
R_{sq}(\hat{\delta},P):=\tau^{2}\mathbb{E}_{P^{n}}[1\{\tau\geq0\}-\hat{\delta}]^{2}.
\] 

\begin{rem}\label{rem:v.regret.equal.v.welfare}
Similar to the classical estimation theory, we can decompose 
\begin{align*}
R_{sq}(\hat{\delta},P) & =\left[R(\hat{\delta},P)\right]^{2}+V(\hat{\delta},P),
\end{align*}
where $R(\hat{\delta},P)$ is the mean regret risk, and $V(\hat{\delta},P)$
is the variance of the regret $Reg(\hat{\delta})$ defined as
\begin{align*}
V(\hat{\delta},P) & :=\mathbb{E}_{P^{n}}\left[\tau(1\{\tau\geq0\}-\hat{\delta})-\tau\mathbb{E}_{P^{n}}[1\{\tau\geq0\}-\hat{\delta}]\right]^{2}\\
 & =\mathbb{E}_{P^{n}}\left[\tau\hat{\delta}-\tau\mathbb{E}_{P^{n}}[\hat{\delta}]\right]^{2}.
\end{align*}
Thus, ranking
treatment rules by the mean square regret criterion has the benefit
of penalizing rules with large regret variance. As
\[
\tau\hat{\delta}-\tau\mathbb{E}_{P^{n}}[\hat{\delta}]=W(\hat{\delta},P)-\mathbb{E}_{P^{n}}\left[W(\hat{\delta},P)\right],
\]
for each rule $\hat{\delta}$, the variance of its regret
in fact equals the variance of the resulting welfare. The mean square
regret criterion has the following equivalent representation
\begin{align*}
R_{sq}(\hat{\delta},P) & =\left[R(\hat{\delta},P)\right]^{2}+\underset{\text{variance of welfare}}{\underbrace{\mathbb{V}_{P^n}\left[W(\hat{\delta},P)\right]}},
\end{align*}
where $\mathbb{V}_{P^n}[\cdot]$ denotes the variance under sampling distribution $P^n$. Thus, compared to the mean regret criterion $R(\hat{\delta},P)$,
the mean square regret penalizes rules that lead to a large variance
of welfare. A decision maker who uses mean square regret criterion
is averse to the volatility of their welfare (with respect to sampling
uncertainty), while a decision maker who prefers the mean
regret criterion ignores the volatility of their welfare.
\end{rem}

\section{Incompleteness and inadmissibility of singleton rules} \label{sec:admit}

Viewing the nonlinear regret risk $R_g(\hat{\delta},P)$ defined in (\ref{eq:nonlinearregretrisk}) as the risk criterion within Wald's framework of statistical decision theory, we introduce the following standard definition of admissibility of a statistical treatment rule and the essential completeness of a class of rules. 
\begin{defn}[Admissibility, inadmissibility and essential completeness under nonlinear regret risk]\  
\begin{itemize}
\item[(i)] A statistical treatment choice rule $\hat{\delta}:\mathbf{Z}_n \to [0,1]$ is admissible under the nonlinear regret risk $R_g(\hat{\delta},P) = \mathbb{E}_{P^n}[ g(Reg(\hat{\delta}))]$ if no $\hat{\delta}^{\prime} \neq \hat{\delta}$ dominates $\hat{\delta}$, i.e., there is no $\hat{\delta}^{\prime}$ such that $R_g(\hat{\delta}^{\prime},P) \leq R_g(\hat{\delta},P)$ holds for all $P\in \mathcal{P}$ with the inequality strict for some $P\in\mathcal{P}$. 

\item[(ii)] A statistical treatment choice rule $\hat{\delta}:\textbf{Z}_n \to [0,1]$ is inadmissible under the nonlinear regret $R_g(\hat{\delta},P)$ if there exists
a decision rule $\hat{\delta}^{\prime} \neq \hat{\delta}$ that dominates $\hat{\delta}$.

\item[(iii)] A class $C^*$ of statistical treatment rules, $C^*\subset\mathcal{D}$, is essentially complete in $\mathcal{P}$, if, for an arbitrary rule $\hat{\delta}\in\mathcal{D}$ not in $C^*$, there exists a $\hat{\delta}_0\in C^*$ such that 
$R_g(\hat{\delta},P)\geq R_g(\hat{\delta}_0,P)$ for all $P\in\mathcal{P}$.
%performs as well as $\hat{\delta}$ in terms of nonlinear regret risk.
\end{itemize}

\end{defn}
As in standard statistical decision theory, admissibility defined through nonlinear regret
is a minimal requirement that a desirable statistical treatment rule should satisfy. The notion of an essentially complete class simplifies the task of finding a good decision rule, as there is no need to consider rules outside this class.

\begin{assumptionG}[Nonlinear transformation]
The nonlinear transformation $g:\mathbb{R}^+ \to \mathbb{R^{+}}$ is continuously differentiable and $g(\cdot)$ is strictly increasing on $\mathbb{R}^{+}$ with $g'(0) = 0$.
\end{assumptionG}

Assumption G puts mild restrictions on the shape of the nonlinear transformation. Together with Assumption G, our first theorem shows that, in terms of nonlinear regret risk, a wide class of singleton rules are not essentially complete, implying that fractional rules cannot be eliminated from consideration in general. Recall for each $P\in\mathcal{P}$, $P^n$ denotes the distribution of data $Z_n$ and depends on $P$ while keeping the experimental design intact.\footnote{For example, for a randomized control trial with treatment assignment probability $\pi$, we have that  for $P_*\in\mathcal{P}$, 
$P^n_*(Z_n)=\prod_{i=1}^{n}(\pi f_{1}^*(Y_i(1)))^{D_i}((1-\pi)f^*_{0}(Y_i(0)))^{1-D_i}$.} For an event $A$, write $\mathbb{P}_{P^{n}}(A):=\int\mathbf{1}\{z_{n}\in A\}dP^{n}(z_{n})$ as the probability of event $A$ under the sampling distribution $P^n$.

\begin{thm} \label{thm:incomplete}

%Suppose $\left\{ P_{1},P_{2}\right\} %\subseteq\mathcal{P}$ are such
%that $\tau(P_{1})=c,\tau(P_{2})=-c$ \textup{for a constant }$0<c<\infty$. Consider the following class of singleton rules
%\begin{equation}
%\mathcal{D}_{S}:=\left\{ %\hat{\delta}_{S}\in\mathcal{D}\mid\hat{\delta}_{S%}\in\{0,1\},\text{and }\mathbb{P}_{P_{j}^{n}}%%(\hat{\delta}_{S}=1)\in(0,1)\text{ for each %}j=1,2\right\} \cup\left\{ %\hat{\delta}_{\infty},\hat{\delta}_{-%\infty}\right\}.\label{eq:set.singleton-1}
%\end{equation}

%[Theorem 3.1 modified]
%$\mathcal{D}_{S}$ be a class of singleton decision rules such that there exists $\{P_1, P_2 \} \subset \mathcal{P}$ that satisfies $\tau(P_1) = c_1 > 0$ and $\tau(P_2) = - c_2 < 0$,

%\begin{align*}
%\mathcal{D}_{S}^1 & := \{ \hat{\delta}_S \in \mathcal{D}_{S} : \mathbb{P}_{P_1^n}( \hat{\delta}_S = 1) < 1, \mathbb{P}_{P_2^n}( \hat{\delta}_S = 1) > 0 \} \neq \emptyset, \\
%\mathcal{D}_{S}^2 &:= \{ \hat{\delta}_S \in \mathcal{D}_{S} : \mathbb{P}_{P_1^n}( \hat{\delta}_S = 1) = \mathbb{P}_{P_2^n}( \hat{\delta}_S = 1) = 1 \} \neq \emptyset, \\
%\mathcal{D}_{S}^3 & := \{ \hat{\delta}_S \in \mathcal{D}_{S} : \mathbb{P}_{P_1^n}( \hat{\delta}_S = 1) = \mathbb{P}_{P_2^n}( \hat{\delta}_S = 1) = 0 \} \neq \emptyset.
%\end{align*}

%If Assumption G holds, then $\mathcal{D}_{S}$ is %not essentially
%complete in $\mathcal{P}$.

Let $\mathcal{D}_{S}$ be a class of singleton decision rules such that, for some $\{P_1, P_2 \} \subset \mathcal{P}$ that satisfy $\tau(P_1) = c > 0$ and $\tau(P_2) = - c < 0$, $\mathcal{D}_{S}$ can be represented as  $\mathcal{D}_{S}=\mathcal{D}_{S}^{1}\cup\mathcal{D}_{S}^{2}\cup\mathcal{D}_{S}^{3}$, where
\begin{align*}
\mathcal{D}_{S}^1 & := \{ \hat{\delta}_S \in \{0,1\} : \mathbb{P}_{P_1^n}( \hat{\delta}_S = 1) < 1, \mathbb{P}_{P_2^n}( \hat{\delta}_S = 1) > 0 \}, \\
\mathcal{D}_{S}^2 &:= \{ \hat{\delta}_S \in \{0,1\} : \mathbb{P}_{P_1^n}( \hat{\delta}_S = 1) = \mathbb{P}_{P_2^n}( \hat{\delta}_S = 1) = 1 \}, \\
\mathcal{D}_{S}^3 & := \{ \hat{\delta}_S \in \{0,1\} : \mathbb{P}_{P_1^n}( \hat{\delta}_S = 1) = \mathbb{P}_{P_2^n}( \hat{\delta}_S = 1) = 0 \}.
\end{align*}
If Assumption G holds, then $\mathcal{D}_{S}$ is not essentially
complete in $\mathcal{P}$.
\end{thm}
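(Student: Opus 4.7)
The plan is to exhibit an explicit fractional rule $\hat\delta^{\ast}$ outside $\mathcal{D}_S$ that cannot be weakly dominated by any element of $\mathcal{D}_S$, which immediately yields the conclusion. My candidate is a Bayes rule for $R_g$ against the two-point prior $\pi=(1/2,1/2)$ on $\{P_1,P_2\}$. Writing $p_i := dP_i^n/d\mu$ for a common dominating measure $\mu$ (e.g.\ $\mu = P_1^n + P_2^n$), the Bayes risk equals $\bar R(\hat\delta) = \int F_{z_n}(\hat\delta(z_n))\, d\mu(z_n)$ with
\begin{equation*}
F_{z_n}(\delta) := \tfrac{1}{2}\, p_1(z_n)\, g(c(1-\delta)) + \tfrac{1}{2}\, p_2(z_n)\, g(c\delta),
\end{equation*}
and I define $\hat\delta^{\ast}(z_n)$ as the pointwise minimizer of $F_{z_n}$ on $[0,1]$.

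First I would verify that $\hat\delta^{\ast}$ is genuinely fractional. Computing $F_{z_n}'(0) = -(c/2)\, p_1(z_n)\, g'(c) + (c/2)\, p_2(z_n)\, g'(0)$ and invoking Assumption G ($g'(0)=0$) collapses this to $-(c/2)\, p_1(z_n)\, g'(c)$, which is strictly negative on $\{p_1>0\}$ because strict monotonicity of $g$ gives $g'(c)>0$ at typical $c$; symmetrically, $F_{z_n}'(1)>0$ on $\{p_2>0\}$. So on the overlap set $A := \{p_1 p_2 > 0\}$ the minimizer $\hat\delta^{\ast}(z_n)$ lies in $(0,1)$, and whenever $\mu(A)>0$ (i.e., $P_1^n$ and $P_2^n$ are not mutually singular), $\hat\delta^{\ast}$ is non-singleton and hence $\hat\delta^{\ast}\notin\mathcal{D}_S$.

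Next I would show that no $\hat\delta_0\in\mathcal{D}_S$ can weakly dominate $\hat\delta^{\ast}$ by establishing the strict Bayes-risk gap $\bar R(\hat\delta_0)>\bar R(\hat\delta^{\ast})$. Since $\hat\delta_0(z_n)\in\{0,1\}$, pointwise $F_{z_n}(\hat\delta_0(z_n))\ge F_{z_n}(\hat\delta^{\ast}(z_n))$, with strict inequality on $A$ because the interior minimum strictly beats both endpoints; integrating gives the gap. If some $\hat\delta_0\in\mathcal{D}_S$ satisfied $R_g(\hat\delta_0,P)\le R_g(\hat\delta^{\ast},P)$ for all $P\in\mathcal{P}$, restricting to $\{P_1,P_2\}\subset\mathcal{P}$ would yield $\bar R(\hat\delta_0)\le\bar R(\hat\delta^{\ast})$, a contradiction. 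Hence $\mathcal{D}_S$ is not essentially complete.

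The hard step will be verifying the interior-minimum claim. The crucial mechanism is the asymmetry created by $g'(0)=0$: a small perturbation away from $\delta=0$ costs $g(c\epsilon)-g(0)=o(\epsilon)$ on the ``good'' branch but saves $g(c)-g(c(1-\epsilon))\asymp c g'(c)\epsilon$ on the ``bad'' branch, with an analogous argument at $\delta=1$. I would also need to address two edge cases: the degenerate $c$ with $g'(c)=0$, handled by either a higher-order Taylor expansion of $F_{z_n}$ around the endpoints or by swapping in another admissible pair $(P_1,P_2)$ with a different separation afforded by the nontriviality assumption; and the mutually singular case $\mu(A)=0$, in which an exact smart singleton rule with $(p_1,p_2)=(1,0)$ exists, lies outside $\mathcal{D}_S$ by the very structure of the partition into $\mathcal{D}_S^1\cup\mathcal{D}_S^2\cup\mathcal{D}_S^3$, and strictly beats every element of $\mathcal{D}_S$ at $P_1$ or $P_2$ directly.
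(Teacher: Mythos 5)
Your argument is correct and reaches the conclusion by a genuinely more direct route than the paper's. The paper also builds its witness from the Bayes rule under the uniform two-point prior on $\{P_1,P_2\}$, but it proceeds indirectly: it invokes Ferguson's theorem to conclude that this Bayes rule $\tilde\delta$ is undominated in $\{P_1,P_2\}$ and shows by an endpoint-perturbation argument that its risk profile differs from that of every rule in $\mathcal{D}_S^2\cup\mathcal{D}_S^3$ (Lemma \ref{lem:admit.1}), and it separately proves (Lemma \ref{lem:main}) that every rule in $\mathcal{D}_S^1$ is dominated in $\{P_1,P_2\}$ by an explicit flip-mixture $(1-\lambda)\hat\delta_S+\lambda(1-\hat\delta_S)$; the contradiction with essential completeness is then assembled case by case over the three subclasses. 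You instead establish a single strict Bayes-risk gap: on the overlap set $A=\{p_1p_2>0\}$ the first-order conditions at $\delta=0$ and $\delta=1$ (driven by $g'(0)=0<g'(c)$) push the posterior-loss minimizer strictly inside $(0,1)$ and strictly below both endpoint values, so every singleton rule has strictly larger Bayes risk than $\hat\delta^{\ast}$, which immediately rules out weak dominance on $\mathcal{P}\supset\{P_1,P_2\}$. This treats $\mathcal{D}_S^1$, $\mathcal{D}_S^2$ and $\mathcal{D}_S^3$ in one stroke and dispenses with both the admissibility citation and Lemma \ref{lem:main}; the price is the case split on mutual singularity, which you resolve correctly (the rule $\mathbf{1}\{z_n\in B\}$ with $P_1^n(B)=1$ and $P_2^n(B)=0$ lies outside all three subclasses, attains risk $g(0)$ at both $P_1$ and $P_2$, while every element of $\mathcal{D}_S$ has risk at least $g(c)>g(0)$ at one of them), whereas the paper's route needs no such split. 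Two small points to tighten: state the existence and measurability of the pointwise minimizer $\hat\delta^{\ast}(z_n)$ (continuity of $F_{z_n}$ on $[0,1]$ plus a standard measurable-selection argument); and your worry about $g'(c)=0$ is moot, since the paper's own proofs read Assumption G as implying $g'>0$ on $(0,\infty)$ --- note also that your proposed fallback of swapping in a different pair would not be available, because $\mathcal{D}_S$ is defined relative to the fixed pair $\{P_1,P_2\}$ in the theorem statement.
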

%Let $\hat{\delta}_{\infty}:=0$ be the singleton rule that always treats no one in the population (irrespective of data), and $\hat{\delta}_{-\infty}:=1$ be the singleton rule that always treats everyone in the population.

%Theorem $\ref{thm:incomplete}$ establishes the incompleteness of a large class of singleton rules for a large class of nonlinear regret risk under rather mild conditions. 
The incompleteness result in Theorem $\ref{thm:incomplete}$ is general, without relying on parametric specifications for $P$ or functional form restrictions on the decision rules.
The class of singleton rules considered in Theorem $\ref{thm:incomplete}$ is also large, comprising both ``non-degenerate singleton rules'' ($\mathcal{D}_S^1$) and ``degenerate singleton rules'' ($\mathcal{D}_S^2$ and $\mathcal{D}_S^3$) under some pair of distributions, one  with positive average treatment effect ($P_1$), and the other negative ($P_2$).\footnote{The assumption that the pair share the same absolute value of treatment effect is not essential. The proof still goes through as long as we have $\tau(P_1) = c_1 > 0$ and $\tau(P_2) = - c_2 < 0$ for some $c_1,c_2>0$.} Intuitively, any singleton rule with a positive probability of making mistakes under such pair of distributions is contained in $\mathcal{D}_S^1$. For the same pair of distributions, $\mathcal{D}_S^2$ would be the class of rules that treats everyone in the whole population with probability one, e.g.,  $\hat{\delta}_S=1$, which treats everyone and ignores data. In contrast, $\mathcal{D}_S^3$ is the class of rules that treats no one in the  population with probability one under $P_1$ and $P_2$, and contains the trivial rule $\hat{\delta}_S=0$. %For monotone likelihood ratio distributions, we can show that $\mathcal{D}_S$ contain all singleton threshold rules.

The results of Theorem $\ref{thm:incomplete}$ contrast sharply with the known result on the essentially completeness of singleton rules in more standard formulations of the treatment choice problem, where the (negative) expected welfare corresponds to the risk criterion in Wald's framework of statistical decision theory. For hypothesis testing problems with monotone likelihood ratio distributions, \cite{karlin1956theory} show that the class of singleton threshold rules is essentially complete. As exploited in \cite{HiranoPorter2009} and \cite{tetenov2012statistical}, the essential completeness of singleton threshold rules carries over to the treatment choice problem, implying that decision makers can discard fractional rules when in search of a good rule. Applying Theorem \ref{thm:incomplete} to monotone likelihood ratio distributions, we can conclude that the same class of singleton threshold rules are no longer essentially complete when it comes to many nonlinear regret risk criteria. Therefore, fractional rules cannot be eliminated from the consideration. Theorem \ref{thm:incomplete} is established with the following important observation:

\begin{lem}\label{lem:main}
Suppose conditions of Theorem \ref{thm:incomplete} hold. For each singleton rule $\hat{\delta}_{S}\in\mathcal{D}_{S}^1$, there exists
a fractional rule 
\begin{equation}\label{eq:fractional.rule.form}
\hat{\delta}_{S,\lambda}:=(1-\lambda)\hat{\delta}_{S}+\lambda(1-\hat{\delta}_{S}),  
\end{equation}
for some $\lambda\in(0,1)$ such that
\begin{equation}
R_{g}(\hat{\delta}_{S,\lambda},P)<R_{g}(\hat{\delta}_{S},P),\text{ for each }P\in\{P_1,P_2\},\label{eq:incomplete}
\end{equation}
where $\{P_1,P_2\}$ are defined in Theorem \ref{thm:incomplete}.
\end{lem}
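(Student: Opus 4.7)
The plan is to reduce the comparison to a finite algebraic calculation, exploiting that $\hat\delta_S\in\{0,1\}$ and that under each $P_j$ the regret takes only two values. The condition $g'(0)=0$ in Assumption G is the driving force: it forces the extra regret incurred on realizations where $\hat\delta_S$ was correct (the regret moving from $0$ to $c\lambda$) to be of higher order in $\lambda$ than the improvement obtained on realizations where $\hat\delta_S$ was wrong (the regret dropping from $c$ to $c(1-\lambda)$), so a small-$\lambda$ mixture strictly beats the singleton simultaneously under $P_1$ and $P_2$.

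First I would write out the risks. Under $P_1$ with $\tau=c>0$, $Reg(\hat\delta)=c(1-\hat\delta)$; under $P_2$ with $\tau=-c<0$, $Reg(\hat\delta)=c\hat\delta$. Writing $p_i^{(j)}:=\mathbb{P}_{P_j^n}(\hat\delta_S=i)$ and noting that $\hat\delta_{S,\lambda}$ equals $1-\lambda$ on $\{\hat\delta_S=1\}$ and $\lambda$ on $\{\hat\delta_S=0\}$, the differences $\Delta_j(\lambda):=R_g(\hat\delta_S,P_j)-R_g(\hat\delta_{S,\lambda},P_j)$ factor as $\Delta_1(\lambda)=p_0^{(1)}\bigl[g(c)-g(c(1-\lambda))\bigr]-p_1^{(1)}\bigl[g(c\lambda)-g(0)\bigr]$ and $\Delta_2(\lambda)=p_1^{(2)}\bigl[g(c)-g(c(1-\lambda))\bigr]-p_0^{(2)}\bigl[g(c\lambda)-g(0)\bigr]$. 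Membership in $\mathcal{D}_S^1$ yields $p_0^{(1)}>0$ and $p_1^{(2)}>0$. Both $\Delta_j$ vanish at $\lambda=0$, and differentiating while applying $g'(0)=0$ gives $\Delta_1'(0)=cg'(c)p_0^{(1)}$ and $\Delta_2'(0)=cg'(c)p_1^{(2)}$. In the generic case $g'(c)>0$ these are strictly positive, so some common $\lambda^*\in(0,1)$ makes $\Delta_1(\lambda),\Delta_2(\lambda)>0$ for every $\lambda\in(0,\lambda^*)$, delivering (\ref{eq:incomplete}).

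The main obstacle I anticipate is the degenerate subcase $g'(c)=0$, in which the first-order expansion at $\lambda=0$ is inconclusive. My fallback plan is to recast (\ref{eq:incomplete}) as the bound $[g(c\lambda)-g(0)]/[g(c)-g(c(1-\lambda))]<\min\{p_0^{(1)}/p_1^{(1)},\,p_1^{(2)}/p_0^{(2)}\}$ (handling $p_1^{(1)}=0$ or $p_0^{(2)}=0$ as trivial subcases where $\Delta_j$ reduces to a single strictly positive term), and to argue that the left-hand side can be made arbitrarily small by taking $\lambda$ sufficiently small: the condition $g'(0)=0$ together with continuity of $g'$ gives $g(c\lambda)-g(0)=o(\lambda)$, while strict monotonicity of $g$ prevents $g'$ from vanishing identically in any neighborhood of $c$ and so keeps $g(c)-g(c(1-\lambda))$ strictly positive and of no smaller order in $\lambda$. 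The strict inequality then holds uniformly over $P_1,P_2$ for all sufficiently small $\lambda$, completing the argument.
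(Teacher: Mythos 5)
Your main argument is correct and is essentially the paper's proof: the paper writes the mixture's risk via Taylor's theorem in $\lambda$ and signs the derivative at an intermediate point, using $g'(0)=0$ together with $\mathbb{P}_{P_1^n}(\hat\delta_S=1)<1$ and $\mathbb{P}_{P_2^n}(\hat\delta_S=1)>0$; your direct computation of $\Delta_j(\lambda)$ and of $\Delta_j'(0)=c\,g'(c)\,p_0^{(1)}$ (resp.\ $c\,g'(c)\,p_1^{(2)}$) is the same first-order comparison, and your ratio reformulation is exactly the paper's inequality $g'(\tilde\lambda c)/g'((1-\tilde\lambda)c)<(1-p^{+})/p^{+}$ in integrated form.

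The one genuine problem is your fallback for the subcase $g'(c)=0$. The claim that strict monotonicity keeps $g(c)-g(c(1-\lambda))$ ``of no smaller order in $\lambda$'' than $g(c\lambda)-g(0)$ is not justified and is false in general: take $g'(x)=x(x-c)^2$ (so $g$ is $C^1$, strictly increasing, $g(0)=g'(0)=0$, but $g'(c)=0$). Then $g(c\lambda)-g(0)\sim c^4\lambda^2/2$ while $g(c)-g(c(1-\lambda))\sim c^4\lambda^3/3$, so the ratio $[g(c\lambda)-g(0)]/[g(c)-g(c(1-\lambda))]$ diverges as $\lambda\downarrow 0$ and is bounded above on $(0,1)$; consequently, if $p_1^{(1)}/p_0^{(1)}$ is large enough your required bound holds for no $\lambda$, and indeed the small-$\lambda$ mixture strictly increases the risk under $P_1$. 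So this subcase cannot be rescued by your argument (nor by the paper's: its step $\lim_{\lambda\to0}g'(\tilde\lambda c)/g'((1-\tilde\lambda)c)=g'(0)/g'(c)=0$ tacitly presupposes $g'(c)>0$, i.e.\ it reads Assumption G as giving a strictly positive derivative at positive arguments). In short: your generic-case proof matches the paper and is fine once you state $g'(c)>0$ as the operative reading of Assumption G; the proposed patch for $g'(c)=0$ should be dropped, since the asserted order comparison fails there.
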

Lemma \ref{lem:main} reveals that if we focus on a binary class of distributions $\{P_1,P_2\} \subset \mathcal{P}$ with opposite treatment effects, then the class of nondegenerate singleton rules $\mathcal{D}_S^1$ are dominated.  Theorem \ref{thm:incomplete} utilizes this dominance result in the restricted class of distributions $\{P_1,P_2\}$ to show that $\mathcal{D}_S$ is not essentially complete in the unrestricted (possibly infinite) class of distributions $\mathcal{P}$. 

In general, proving the inadmissibility of a rule (in $\mathcal{P}$) is often more demanding, as one needs to find a dominating fractional rule not only for $\{P_1,P_2\}$ but all $P\in\mathcal{P}$. The next theorem confirms that a large set of singleton threshold rules are indeed inadmissible, if we impose stronger shape restrictions on $g$ and more distributional assumptions for the data.

%Next, we show that a large set of singleton threshold rules is in fact inadmissible, by exploiting stronger shape restrictions on $g$ and more distributional assumptions for the data. These additional conditions are imposed because proving inadmissibility is more demanding: we need to find a dominating fractional rule over a singleton rule not just for two distributions $\{P_1,P_2\}$, but for all  $P\in \mathcal{P}$ such that $\tau(P)\neq0$.
%if the distribution of the estimated average treatment effect is a member of the exponential family, and if we impose more restrictions the nonlinear transformation $g$. 

\begin{thm} \label{thm:inadmissibility}
Suppose the nonlinear transformation $g:\mathbb{R}^{+}\to\mathbb{R}^{+}$ is a nonzero, homogeneous function of degree $\alpha>1$. Moreover, there exists an estimator $\hat{\tau}:\mathbf{Z}_{n}\rightarrow\mathbb{R}$ for $\tau$, and the pdf or pmf of $\hat{\tau}$ is a member of the following one-parameter exponential family: 
%with respect to some $\sigma$-finite measure $\mu$ on $\hat{\tau}$
\begin{equation}\label{eq:exponential.family}
f(x|\tau)=A(\tau)\exp^{x\tau}h(x),\text{ }x\in\mathbb{R},
\end{equation}
where $A(\cdot)$, $h(\cdot)$ are real valued functions, and $\tau\in[-\bar{\tau},\bar{\tau}]\subseteq\mathbb{R}$ for some $\bar{\tau}>0$. Then, any singleton threshold rule
\begin{equation}\label{eq:singelton.exponential.family}
\hat{\delta}_{t}(\hat{\tau})=\mathbf{1}\left\{ \hat{\tau}\geq t\right\},
\end{equation}
where $t$ is in the interior of the support of $\hat{\tau}$, is inadmissible.
\end{thm}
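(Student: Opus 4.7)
The plan is to apply the same fractional construction $\hat{\delta}_{t,\lambda}:=(1-\lambda)\hat{\delta}_{t}+\lambda(1-\hat{\delta}_{t})$ used in Lemma \ref{lem:main}, and to show that a \emph{single} $\lambda^{\ast}\in(0,1)$ can be picked so that $\hat{\delta}_{t,\lambda^{\ast}}$ dominates $\hat{\delta}_{t}$ uniformly over $\tau\in[-\bar{\tau},\bar{\tau}]$. Homogeneity is what makes this uniformity tractable: a nonzero, homogeneous function of degree $\alpha>1$ mapping $\mathbb{R}^{+}$ into $\mathbb{R}^{+}$ must satisfy $g(r)=g(1)r^{\alpha}$ with $g(1)>0$. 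Splitting into the cases $\tau\geq 0$ and $\tau<0$, and substituting $\hat{\delta}_{t}(\hat{\tau})=\mathbf{1}\{\hat{\tau}\geq t\}$, a direct calculation gives
\[
R_{g}(\hat{\delta}_{t},P)=g(1)|\tau|^{\alpha}Q(\tau),\quad R_{g}(\hat{\delta}_{t,\lambda},P)=g(1)|\tau|^{\alpha}\bigl[(1-\lambda)^{\alpha}Q(\tau)+\lambda^{\alpha}(1-Q(\tau))\bigr],
\]
where $Q(\tau):=\mathbb{P}_{P^{n}}(\hat{\tau}<t)$ when $\tau\geq 0$ and $Q(\tau):=\mathbb{P}_{P^{n}}(\hat{\tau}\geq t)$ when $\tau<0$. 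After factoring out $g(1)|\tau|^{\alpha}>0$, the strict dominance $R_{g}(\hat{\delta}_{t,\lambda},P)<R_{g}(\hat{\delta}_{t},P)$ at any given $\tau\neq 0$ collapses to the single scalar inequality $\lambda^{\alpha}/[1-(1-\lambda)^{\alpha}]<Q(\tau)/[1-Q(\tau)]$.

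The remaining task is to bound the right-hand side below by a positive constant $\eta$ uniformly in $\tau\in[-\bar{\tau},\bar{\tau}]\setminus\{0\}$. Here the exponential-family structure does the work: the density (\ref{eq:exponential.family}) has monotone likelihood ratio in $\hat{\tau}$, so $\tau\mapsto\mathbb{P}_{P^{n}}(\hat{\tau}\geq t)$ is nondecreasing. Consequently the map $\tau\mapsto Q(\tau)/[1-Q(\tau)]$ is monotone on each of $(0,\bar{\tau}]$ and $[-\bar{\tau},0)$, with infima attained at $\bar{\tau}$ and $-\bar{\tau}$, respectively. Because the support of $\hat{\tau}$ under (\ref{eq:exponential.family}) is independent of $\tau$ and $t$ lies in its interior, the probabilities $\mathbb{P}_{\bar{\tau}}(\hat{\tau}<t)$ and $\mathbb{P}_{-\bar{\tau}}(\hat{\tau}\geq t)$ are both strictly positive, so
\[
\eta:=\min\!\left\{\frac{\mathbb{P}_{\bar{\tau}}(\hat{\tau}<t)}{\mathbb{P}_{\bar{\tau}}(\hat{\tau}\geq t)},\;\frac{\mathbb{P}_{-\bar{\tau}}(\hat{\tau}\geq t)}{\mathbb{P}_{-\bar{\tau}}(\hat{\tau}<t)}\right\}>0.
\]

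Finally, a first-order expansion gives $1-(1-\lambda)^{\alpha}=\alpha\lambda+O(\lambda^{2})$ as $\lambda\to 0^{+}$, hence $\lambda^{\alpha}/[1-(1-\lambda)^{\alpha}]\sim\lambda^{\alpha-1}/\alpha\to 0$, where the condition $\alpha>1$ is exactly what drives the left-hand side of the scalar inequality to zero rather than to a positive limit. Picking any $\lambda^{\ast}>0$ small enough that $(\lambda^{\ast})^{\alpha}/[1-(1-\lambda^{\ast})^{\alpha}]<\eta$ produces strict dominance at every $\tau\neq 0$, together with equality (both risks vanish) at $\tau=0$, which establishes inadmissibility.

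The main obstacle, relative to the two-point argument used for Lemma \ref{lem:main}, is the uniformity in $\tau$: one and the same $\lambda^{\ast}$ must work for every $P\in\mathcal{P}$. Homogeneity cleanly strips off the $|\tau|^{\alpha}$ factor so that the comparison depends on $\tau$ only through $Q(\tau)$; monotone likelihood ratio reduces the uniform bound to a two-point check at $\pm\bar{\tau}$; and the assumption that $t$ is interior to the support of $\hat{\tau}$ is exactly what keeps the resulting $\eta$ from collapsing to zero.
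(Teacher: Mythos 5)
Your proposal is correct and follows essentially the same route as the paper: the same flipped-mixture rule $\hat{\delta}_{t,\lambda}$, homogeneity used to strip the $|\tau|^{\alpha}$ factor, the monotone likelihood ratio property plus the interior-support condition to obtain a uniform positive bound from the endpoints $\pm\bar{\tau}$ (the paper's Lemma \ref{lem:admit.3}), and $\alpha>1$ to make a single small $\lambda^{\ast}$ work for all $\tau$. The only difference is stylistic: you compute the risks in closed form via $g(r)=g(1)r^{\alpha}$ and reduce dominance to one scalar inequality, whereas the paper signs the $\lambda$-derivative of the risk through a mean-value expansion (Lemma \ref{lem:admit.2}); both yield the same conclusion.
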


Theorem \ref{thm:inadmissibility} is again in sharp contrast with the existing literature. For the same one-parameter exponential family distributions, results from \cite{karlin1956theory} imply that singleton rules of the form \eqref{eq:singelton.exponential.family} are in fact admissible for the decision criterion of mean regret. Our results offer an opposite conclusion, demonstrating that the ranking of statistical treatment rules crucially depends on which features of the regret distribution the decision maker wishes to exploit. The homogeneity assumption on the shape of $g$ is a particularly convenient one that enables us to explicitly pin down a dominating rule in the form of \eqref{eq:fractional.rule.form} over $\hat{\delta}_{t}$, but is not required and can be  considerably weakened. This class of homogeneous functions also coincides with what \cite{hayashi2008regret} calls regret aversion.\footnote{Our proof is constructive, and requires boundedness of the parameter space to pin down explicitly a fractional rule that dominates \eqref{eq:singelton.exponential.family} for all values of $\tau$ in the parameter space.}

\section{Finite sample optimality} \label{sec:finite}

We measure the performance of a rule $\hat{\delta}\in \mathcal{D}$ by its nonlinear regret  risk $R_{g}(\hat{\delta},P)$, which depends on the true unknown $P$. In this section we look at two optimality criteria and derive general results on optimal rules for these criteria. We illustrate the usefulness of our results using specific parametric models.

\subsection{Bayes optimality}

\begin{defn} [Bayes nonlinear risk and the Bayes optimal rule] \label{def:bayes.optimality}
Let $\pi$ be a prior distribution on $P\in\mathcal{P}$. The Bayes nonlinear (regret) risk of $\hat{\delta}$ with respect to the prior $\pi$ is
\[
r_{g}(\hat{\delta},\pi):=\int_{P\in\mathcal{P}}R_{g}(\hat{\delta},P)d\pi(P).
\]
A Bayes optimal rule $\hat{\delta}_{\pi}$ with respect
to the prior $\pi$ is such that
\[
r_{g}(\hat{\delta}_{\pi},\pi)=\inf_{\hat{\delta}\in\mathcal{D}}r_{g}(\hat{\delta},\pi).
\]
Moreover, we say that a prior distribution $\pi$ is \emph{least favorable} if $r_{g}(\pi,\hat{\delta}_{\pi})\geq r_{g}(\pi^{\prime},\hat{\delta}_{\pi^{\prime}})$
for all prior distributions $\pi^{\prime}$.\footnote{Our notion of Bayes optimality is specifically attached to the risk function $R_{g}(\hat{\delta},P)$, and is not an acronym for Bayes welfare criterion. See also \cite{ferguson1967mathematical,berger1985statistical} for precedents of attaching Bayes optimality to a specific risk function in statistical decision theory.}
\end{defn}

We now characterize the Bayes optimal rule for the Bayes nonlinear risk. It turns out that under mild restrictions on the nonlinear transformation $g$, the associated Bayes optimal rule is also fractional. To proceed, let $\pi(P|z_{n})$ be the posterior distribution of $P$ given a prior
$\pi$ and $Z_{n}=z_{n}$. 

\begin{thm} \label{thm:bayes.finite} Suppose Assumption G holds, and the following conditions are true:
\begin{itemize}
%\item[(i)] $g(Reg(\hat{\delta}))\geq0$ for all $\hat{\delta}\in\mathcal{D}$ and $P\in \mathcal{P}$.

\item[(i)] There exists some treatment rule $\tilde{\delta}\in\mathcal{D}$ such that
$R_{g}(\tilde{\delta},P)$ is finite.

\item[(ii)] For almost all $z_{n}\in\mathbf{Z}_{n}$, the posterior distribution
$\pi(P|z_{n})$ puts nonzero probability mass on both $\left\{ P\in\mathcal{P}:\tau(P)>0\right\} $
and $\left\{ P\in\mathcal{P}:\tau(P)<0\right\} $.
\end{itemize}
%(1) The Bayes optimal rule  $\hat{\delta}_{\pi}$ is such that 
%\begin{equation}
%\hat{\delta}_{\pi}\in \min_{\hat{\delta}\in[0,1]}\int g(Reg(\hat{\delta}))d\pi(P|z_{n}),\text{ for almost all }z_{n}\in\mathbf{Z}_{n},\label{eq:bayes.1}
%\end{equation}
%provided the solution of (\ref{eq:bayes.1}) exists for almost all $z_{n}\in\mathbf{Z}_{n}$.
%(2) Suppose Assumption G holds in addition to (i) and (ii). 
%
Then for almost all $z_{n}\in\mathbf{Z}_{n}$, the Bayes optimal rule $\hat{\delta}_{\pi}$  exists, is fractional, and satisfies 

\begin{equation}
\int\left[\tau(P)g^{\prime}\Big(\tau(P)(\mathbf{1}\{\tau(P)\geq0\}-\hat{\delta}_{\pi})\Big)\right]d\pi(P|z_{n})=0.\label{eq:bayes.2} 
\end{equation}
\end{thm}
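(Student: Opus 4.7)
My plan is to reduce the problem to pointwise minimization over $d\in[0,1]$ for each realization $z_n$, derive the first-order condition, and verify that the minimizer lies strictly in the interior of $[0,1]$. First, since $g(Reg(\hat{\delta}))\geq 0$, Tonelli's theorem permits interchanging the expectations with respect to $P^n$ and the prior $\pi$, giving
\[
r_g(\hat{\delta},\pi) \;=\; \int_{\mathbf{Z}_n} \phi(\hat{\delta}(z_n);z_n)\, dm(z_n),
\]
where $m$ is the marginal distribution of $Z_n$ under $\pi$, and
\[
\phi(d;z_n) \;:=\; \int_{\mathcal{P}} g\bigl(\tau(P)(\mathbf{1}\{\tau(P)\geq 0\}-d)\bigr)\, d\pi(P\mid z_n).
\]
Note that $\tau(P)(\mathbf{1}\{\tau(P)\geq 0\}-d)\geq 0$ regardless of the sign of $\tau(P)$, so $g$ is evaluated on $\mathbb{R}^+$ throughout. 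Since $\hat{\delta}(z_n)$ can be chosen independently across $z_n$, minimizing $r_g$ over $\hat{\delta}\in\mathcal{D}$ reduces to the $z_n$-pointwise problem of minimizing $\phi(\cdot;z_n)$ over $[0,1]$ for $m$-a.e.\ $z_n$.

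Next I would establish existence of the minimum and derive the FOC. Continuity of $\phi(\cdot;z_n)$ on the compact interval $[0,1]$ follows from dominated convergence using the envelope $g(|\tau(P)|)$, which is posterior-integrable for $m$-a.e.\ $z_n$ by condition (i) combined with monotonicity of $g$. Hence $\phi(\cdot;z_n)$ attains its minimum on $[0,1]$. Differentiating under the integral (legitimate by Assumption G and the same envelope) yields
\[
\phi'(d;z_n) \;=\; -\int \tau(P)\, g'\bigl(\tau(P)(\mathbf{1}\{\tau(P)\geq 0\}-d)\bigr)\, d\pi(P\mid z_n),
\]
and setting this to zero delivers precisely the FOC \eqref{eq:bayes.2}.

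The main obstacle is showing that $\hat{\delta}_\pi(z_n)\in(0,1)$, which hinges crucially on the assumption $g'(0)=0$. I would argue that $\phi'(0;z_n)<0$ and $\phi'(1;z_n)>0$, so continuity of $\phi'(\cdot;z_n)$ forces the minimizer into the interior. At $d=0$, splitting the integrand at $\tau=0$ gives
\[
\phi'(0;z_n) \;=\; -\int_{\tau>0}\tau g'(\tau)\, d\pi(P\mid z_n) \;-\; \int_{\tau\leq 0}\tau\, g'(0)\, d\pi(P\mid z_n),
\]
and the second term vanishes because $g'(0)=0$. By condition (ii) the posterior places positive mass on $\{\tau>0\}$, and strict monotonicity of $g$ on $\mathbb{R}^+$ together with continuity of $g'$ guarantees that $\tau g'(\tau)$ is strictly positive on a posterior-nonnegligible subset of $\{\tau>0\}$, yielding $\phi'(0;z_n)<0$. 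A symmetric computation at $d=1$, exploiting $g'(0)=0$ on the $\{\tau>0\}$ side and condition (ii) on the $\{\tau<0\}$ side, gives $\phi'(1;z_n)>0$. Consequently the minimum of $\phi(\cdot;z_n)$ is attained at some $\hat{\delta}_\pi(z_n)\in(0,1)$ satisfying \eqref{eq:bayes.2}, which proves the claim.
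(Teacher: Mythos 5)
Your proof is correct and follows essentially the same route as the paper's: reduce the Bayes problem to pointwise minimization of the posterior expected loss over $d\in[0,1]$, get existence from continuity on the compact interval, and use $g'(0)=0$ together with condition (ii) to sign the derivative at the endpoints ($\phi'(0;z_n)<0$, $\phi'(1;z_n)>0$), forcing an interior minimizer whose first-order condition is \eqref{eq:bayes.2}. Your explicit Tonelli/envelope justifications are a slightly more detailed rendering of steps the paper handles by citing the standard Bayes-rule reduction, but the substance is the same.
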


%For statement (3), the uniqueness of $\hat{\delta}_{\pi}$ follows from $\int g(Reg(\hat{\delta}))d\pi(P|z_{n})$
%being strictly convex in $\hat{\delta}$, since $g$ is strictly convex
%and $Reg(\hat{\delta})$ is linear in $\hat{\delta}$. 

\begin{rem}
In general, the Bayes optimal rule depends on the nonlinear transformation $g$ and the model specification for $P$. The calculation of the posterior expectation in (\ref{eq:bayes.2}), which requires integration with respect to the posterior distribution of $P$, can be complicated. To gain further insight, consider the simple case where $g(r)=r^2$ and $P=P_{\tau}$ is parameterized by the one dimensional parameter $\tau\in\mathbb{R}$, where $\tau=\mathbb{E}[Y(1)]-\mathbb{E}[Y(0)]$ (for example, the outcome is normal with known variance). It follows that the prior distribution is indexed by $\tau$ and written as $\pi(\tau)$, and the Bayes optimal rule with respect to the Bayes mean square regret
\[
r_{sq}(\hat{\delta},\pi):=\int R_{sq}(\hat{\delta},P_{\tau})d\pi(\tau)
\]
is characterized as 

\begin{equation}
\int\left[\tau^2 (\mathbf{1}\{\tau\geq0\}-\hat{\delta}_{\pi})\right]d\pi(\tau|z_{n})=0,
\end{equation}
where $\pi(\tau|z_{n})$ is the posterior distribution of $\tau$ given the prior $\pi(\tau)$ and data $Z_{n}=z_{n}$, with $Z_{n}\sim P^{n}_{\tau}$.

Further to this, if the prior $\pi(\tau)$ is supported on two symmetric points $\tau\in\left\{ a,-a\right\} $
for some $a>0$, it follows that
\[
\hat{\delta}_{\pi}(z_{n})=\frac{\int a^{2}1\{\tau\geq0\}d\pi(\tau|z_{n})}{\int a^{2}d\pi(\tau|z_{n})}=\underset{\text{posterior probability that treatment effect is non-negative}}{\underbrace{\int1\{\tau\geq0\}d\pi(\tau|z_{n})}},
\]
which is the exact form of the posterior probability matching rule, as used by \citet{thompson1933likelihood}. If the prior is
not supported on two symmetric points, it holds that 
\begin{equation}\label{eq:bayes.parametric}
\hat{\delta}_{\pi}(z_{n}) =\frac{\int\tau^{2}1\{\tau\geq0\}d\pi(\tau|z_{n})}{\int\tau^{2}d\pi(\tau|z_{n})}=\underset{\text{posterior probability matching}}{\underbrace{\int1\{\tau\geq0\}d\pi(\tau|z_{n})}}\underset{\text{weight}}{\underbrace{\frac{\int\tau^{2}d\pi(\tau|z_{n},\tau\geq0)}{\int\tau^{2}d\pi(\tau|z_{n})}}},
\end{equation}
where $\pi(\tau|z_{n},\tau\geq0)$ denotes the posterior distribution
of $\tau$ conditional on $\tau\geq0$. Thus, for the mean square regret,
the Bayes optimal rule is a \emph{tilted} version of the posterior probability matching rule. 
\end{rem}
\begin{rem}
In contrast, for the \emph{linear} regret risk $R(\hat{\delta},P)$,
the Bayes optimal rule is
\[
\hat{\delta}(z_{n})=\begin{cases}
\hat{\delta}(z_{n})=1, & \int\tau(P) d\pi(P|z_{n})>0,\\
\hat{\delta}(z_{n})\in[0,1], & \int\tau(P) d\pi(P|z_{n})=0,\\
\hat{\delta}(z_{n})=0, & \int\tau(P) d\pi(P|z_{n})<0,
\end{cases}
\]
which is a singleton rule in general. 
\end{rem}

We now provide a simple example for which we derive the finite sample Bayes optimal rule with respect to a flat prior.  This example also sheds some light on the form of the Bayes optimal rule in large samples, which is discussed in Section \ref{sec:asymptotic}.

\begin{example}[Testing an innovation with normal outcome and mean square regret]\label{exa:1} 
Let $g(r)=r^{2}$. Suppose the distribution of $Y(0)$ is
known to the planner and without loss of generality, $\mathbb{E}[Y(0)]=0$.
Therefore, the planner only needs to learn $\mathbb{E}[Y(1)]$ and in the experimental design she allocates all units to the treatment. Let $\bar{Y}_{1}$
be the sample average of observed outcomes. Assume  $\bar{Y}_{1}\sim N(\tau,1)$ is normally distributed
with an unknown mean $\tau\in\mathbb{R}$
and a known variance normalized to one, with the likelihood function
\begin{equation}
f(\bar{y}_{1}|\tau)=\sqrt{\frac{1}{2\pi}}\exp\left(-\frac{1}{2}\left[\left(\bar{y}_{1}-\tau\right)^{2}\right]\right),\forall\bar{y}_{1}\in\mathbb{R}.\label{eq:pdf.normal}
\end{equation}
\end{example}

\begin{prop} \label{prop:bayes.flat}
In Example \ref{exa:1}, consider the uniform (improper) prior $\pi_{f}$ on $\tau$. Then the Bayes treatment
rule with respect to the mean square regret is 
\[
\hat{\delta}_{\pi_{f}}(\bar{Y}_{1})=\Phi\left(\bar{Y}_{1}\right)\left[ 1+\bar{Y}_{1}\cdot \Psi(\bar{Y}_{1}) \right],
\]
where $\Psi(x):= \frac{\phi\left(x\right)}{\Phi\left(x\right) \left( 1+x^{2} \right)} > 0$ for any $x \in \mathbb{R}$, and where $\Phi(\cdot)$ and $\phi(\cdot)$ are the cdf and pdf of a standard normal random variable, respectively.
\end{prop}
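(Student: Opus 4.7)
My plan is to directly specialize the characterization of the Bayes optimal rule derived in Theorem \ref{thm:bayes.finite} (in particular, its parametric form in equation \eqref{eq:bayes.parametric}) to the normal location model under the flat prior, and then compute the two integrals in closed form via standard truncated-normal identities.

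First, I would verify that Theorem \ref{thm:bayes.finite} (as specialized in equation \eqref{eq:bayes.parametric}) still applies under the improper flat prior $\pi_f$. With the likelihood in \eqref{eq:pdf.normal}, the formal posterior of $\tau$ given $\bar{Y}_1$ is proper and equals $N(\bar{Y}_1,1)$, so the posterior expectations in \eqref{eq:bayes.parametric} are well defined. Condition (i) of Theorem \ref{thm:bayes.finite} is immediate since any rule yields bounded risk here (one may take $\tilde\delta\equiv 1/2$), and condition (ii) holds because $N(\bar{Y}_1,1)$ puts positive mass on $\{\tau\geq 0\}$ and $\{\tau<0\}$ for every realized $\bar{Y}_1\in\mathbb{R}$. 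Therefore
\[
\hat{\delta}_{\pi_f}(\bar{Y}_1)=\frac{\int_0^{\infty}\tau^2\,\phi(\tau-\bar{Y}_1)\,d\tau}{\int_{-\infty}^{\infty}\tau^2\,\phi(\tau-\bar{Y}_1)\,d\tau}.
\]

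Next, I would evaluate each integral. Writing $\mu:=\bar{Y}_1$ and substituting $\tau=\mu+Z$ with $Z\sim N(0,1)$, the denominator is simply the second moment of $N(\mu,1)$, equal to $\mu^2+1$. For the numerator, I would expand $(\mu+Z)^2$ and use the identities $\int_{-\mu}^{\infty}\phi(z)\,dz=\Phi(\mu)$, $\int_{-\mu}^{\infty} z\phi(z)\,dz=\phi(\mu)$ (from $-d\phi/dz=z\phi$), and $\int_{-\mu}^{\infty} z^2\phi(z)\,dz=\Phi(\mu)-\mu\phi(\mu)$ (by integration by parts with $u=z$, $dv=z\phi(z)\,dz$). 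Combining,
\[
\int_0^{\infty}\tau^2\,\phi(\tau-\mu)\,d\tau=\mu^2\Phi(\mu)+2\mu\phi(\mu)+\Phi(\mu)-\mu\phi(\mu)=(1+\mu^2)\Phi(\mu)+\mu\phi(\mu).
\]

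Finally, I would divide and rearrange:
\[
\hat{\delta}_{\pi_f}(\bar{Y}_1)=\frac{(1+\mu^2)\Phi(\mu)+\mu\phi(\mu)}{1+\mu^2}=\Phi(\mu)+\frac{\mu\phi(\mu)}{1+\mu^2}=\Phi(\mu)\left[1+\mu\cdot\frac{\phi(\mu)}{\Phi(\mu)(1+\mu^2)}\right],
\]
which is exactly $\Phi(\bar{Y}_1)[1+\bar{Y}_1\,\Psi(\bar{Y}_1)]$ in the notation of the proposition. There is no real obstacle here beyond routine Gaussian moment computations; the only non-mechanical step is justifying the use of the improper prior, which is handled by checking that the posterior is proper and that the two conditions of Theorem \ref{thm:bayes.finite} hold pointwise in $\bar{Y}_1$.
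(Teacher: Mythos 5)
Your proposal is correct and follows essentially the same route as the paper: the paper omits the proof by pointing to the proof of Theorem \ref{thm:bayes.feasible}(i), which likewise applies Theorem \ref{thm:bayes.finite} (via the formula \eqref{eq:bayes.parametric}) under the flat prior, identifies the posterior as a normal centered at the observation, and evaluates the resulting truncated-normal moments. Your Gaussian moment computations check out and yield exactly $\Phi(\bar{Y}_{1})[1+\bar{Y}_{1}\Psi(\bar{Y}_{1})]$, and your remark on justifying the improper prior via the proper posterior is, if anything, slightly more explicit than the paper's treatment.
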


Proposition \ref{prop:bayes.flat} is a direct application of Theorem \ref{thm:bayes.finite}.  Since the prior is flat, the `posterior density' is proportional to the likelihood (\ref{eq:pdf.normal}). 
The form of the Bayes optimal rule then follows (\ref{eq:bayes.parametric}). The Bayes optimal rule $\hat{\delta}_{\pi_{f}}$ is a product of two terms. The first term, $\Phi(\bar{Y}_{1})$, is the posterior probability that the treatment effect is positive given the uninformative prior, and corresponds to the posterior probability matching rule. The second term, $( 1+\bar{Y}_{1}\cdot \Psi(\bar{Y}_{1}))$, adjusts the first term upwards if  $\Bar{Y}_{1}>0$, and adjusts it downwards if $\Bar{Y}_{1}<0$ (note that $\Psi(x) > 0$). Therefore, this Bayes optimal rule tilts the posterior probability matching rule and assigns treatment with a probability closer to zero or one. Also see Table \ref{tab:1} and Figure \ref{fig:2} for the magnitudes of the probability assignment of the Bayes optimal rule and posterior probability matching rule with respect to the uniform prior.

\subsection {Minimax optimality}
As an alternative to Bayes rule, this section studies minimax optimal rule for nonlinear regret risk.\footnote{As shown by \cite{Savage51,manski2004statistical}, maximin welfare criterion can be ultra-pessimistic and often leads to an optimal rule that always treats no-one in the population. Such ultra-pessimism carries over to maximin criterion applied to a nonlinear transformation of welfare. Echoing previous findings on minimax expected regret \citep[e.g.,][]{stoye2009minimax,tetenov2012statistical}, our results show that the ultra-pessimism also does not occur for the minimax criterion applied to many nonlinear regret risks.}

\begin{defn} [Minimax optimal rule] \label{def:minimax.optimality}
A minimax optimal rule $\hat{\delta}^{*}$ is such that
\[
\sup_{P\in\mathcal{P}}R_{g}(\hat{\delta}^{*},P)=\underset{\hat{\delta}\in\mathcal{D}}{\inf}\sup_{P\in\mathcal{P}}R_{g}(\hat{\delta},P).
\]
\end{defn}

The following proposition characterizes the minimax optimal rule as a Bayes rule under a least favorable prior.
\begin{prop}[\citet{lehmann2006theory}]\label{prop:minimax}
Suppose $\pi$ is a distribution on $P$ such
that
\[
r_{g}(\hat{\delta}_{\pi},\pi)=\sup_{P\in\mathcal{P}}R_{g}(\hat{\delta}_{\pi},P).
\]
Then: (i) $\hat{\delta}_{\pi}$ is minimax; (ii) $\pi$ is least favorable.
\end{prop}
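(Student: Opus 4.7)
The plan is to exploit the elementary inequality that the Bayes risk under any prior is bounded above by the worst-case risk, and then use the hypothesis to pinch the minimax value from both sides. This is the classical Bayes-with-constant-risk argument in Lehmann--Casella, adapted here to the nonlinear regret risk $R_g$.

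For part (i), I would first note that for every rule $\hat{\delta}\in\mathcal{D}$,
\[
r_g(\hat{\delta},\pi)=\int R_g(\hat{\delta},P)\,d\pi(P)\leq \sup_{P\in\mathcal{P}}R_g(\hat{\delta},P),
\]
simply because an average does not exceed a supremum. Combining this with the defining property of the Bayes rule, $r_g(\hat{\delta}_{\pi},\pi)\leq r_g(\hat{\delta},\pi)$, yields
\[
r_g(\hat{\delta}_{\pi},\pi)\leq \inf_{\hat{\delta}\in\mathcal{D}}\sup_{P\in\mathcal{P}}R_g(\hat{\delta},P).
\]
Invoking the hypothesis $r_g(\hat{\delta}_{\pi},\pi)=\sup_{P\in\mathcal{P}}R_g(\hat{\delta}_{\pi},P)$ then gives
\[
\sup_{P\in\mathcal{P}}R_g(\hat{\delta}_{\pi},P)\leq \inf_{\hat{\delta}\in\mathcal{D}}\sup_{P\in\mathcal{P}}R_g(\hat{\delta},P)\leq \sup_{P\in\mathcal{P}}R_g(\hat{\delta}_{\pi},P),
\]
where the last inequality is tautological. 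Hence all three quantities coincide and $\hat{\delta}_{\pi}$ attains the minimax value, proving (i).

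For part (ii), I would take an arbitrary competing prior $\pi'$ and chain three inequalities: by Bayes optimality of $\hat{\delta}_{\pi'}$ under $\pi'$, $r_g(\hat{\delta}_{\pi'},\pi')\leq r_g(\hat{\delta}_{\pi},\pi')$; by the averaging-versus-supremum inequality above, $r_g(\hat{\delta}_{\pi},\pi')\leq \sup_{P\in\mathcal{P}}R_g(\hat{\delta}_{\pi},P)$; and by hypothesis, $\sup_{P\in\mathcal{P}}R_g(\hat{\delta}_{\pi},P)=r_g(\hat{\delta}_{\pi},\pi)$. Concatenating gives $r_g(\hat{\delta}_{\pi'},\pi')\leq r_g(\hat{\delta}_{\pi},\pi)$, which is exactly the definition of $\pi$ being least favorable.

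There is essentially no obstacle here: the argument only uses that $R_g$ is a well-defined risk function and that $\hat{\delta}_{\pi}$ is a minimizer of $r_g(\cdot,\pi)$, both of which are assumptions of the statement. The only mild care concerns measurability and finiteness, so that the integral $r_g(\hat{\delta},\pi)=\int R_g(\hat{\delta},P)\,d\pi(P)$ is defined and the supremum on the right-hand side is not vacuously $+\infty$ in a pathological way; these are already implicit in the setup of Definition~\ref{def:bayes.optimality} and in the hypothesis of the proposition (which equates the Bayes risk to the sup risk, hence both are finite or both infinite consistently). Since the result is a direct citation from \citet{lehmann2006theory}, the proof can be stated in a few lines along the above pinching argument without any new ingredient.
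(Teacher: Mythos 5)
Your pinching argument is correct and is exactly the standard proof of the result the paper invokes; the paper gives no independent proof beyond citing Theorem 5.1.4 of \citet{lehmann2006theory}, so your write-up and the paper's justification coincide in substance. (One cosmetic point: in part (ii) you could use $\inf_{\hat{\delta}\in\mathcal{D}} r_g(\hat{\delta},\pi')$ in place of $r_g(\hat{\delta}_{\pi'},\pi')$ to avoid presuming a Bayes rule under $\pi'$ exists, though the paper's Definition \ref{def:bayes.optimality} of a least favorable prior presumes this as well.)
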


Proposition \ref{prop:minimax} is a direct result of \citet[Theorem 5.1.4]{lehmann2006theory}.
%\footnote{Also, if $\hat{\delta}_{\pi}$ is the unique Bayes optimal rule with respect to $\pi$, it is then the unique minimax procedure.} 
%
Using Proposition \ref{prop:minimax}, we can attempt to find the minimax optimal rule by adopting a `guess-and-verify' approach: guess a least favorable prior and derive its associated Bayes optimal rule; verify that the resulting Bayes nonlinear regret risk equals the worst frequentist nonlinear regret risk of the Bayes optimal rule. In general, it can still be difficult to guess the least favorable distribution. However, in many parametric models, the support of the least favorable distribution is often discrete and finite, or the minimax optimal rule has a constant frequentist risk across its parameter space. See, for example, \citet{kempthorne1987numerical}. This greatly simplifies the problem. We now demonstrate the minimax optimal rule for Example \ref{exa:1}.

\begin{thm}
\label{thm:minimax}In Example \ref{exa:1}, a finite sample minimax
treatment rule is
\[
\hat{\delta}^{*}(\bar{Y}_{1})=\frac{\exp\left(2\tau^{*}\bar{Y}_{1}\right)}{\exp\left(2\tau^{*}\bar{Y}_{1}\right)+1},
\]
where $\tau^{*}\approx 1.23$, which solves

\begin{equation}
\sup\limits _{\tau\in[0,\infty)}\frac{1}{2}\tau^{2}\mathbb{E}\left[\frac{1}{\exp\left(2\tau\bar{Y}_{1}\right)+1}\right],\label{eq:minimax.1}
\end{equation}
or, equivalently, solves
\begin{equation}
\sup\limits _{\tau\in[0,\infty)}\tau^{2}\mathbb{E}\left[\left(\frac{1}{\exp\left(2\tau\bar{Y}_{1}\right)+1}\right)^{2}\right],\label{eq:minimax.2}   
\end{equation}
where the expectation is with respect to $\bar{Y}_{1}\sim N(\tau,1)$. Moreover, a least favorable prior $\pi^{*}$ on $\tau$ is a two-point prior such that $\pi^{*}(\tau^*)= \pi^{*}(-\tau^*)=\frac{1}{2}$.
\end{thm}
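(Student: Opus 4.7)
The plan is to invoke Proposition~\ref{prop:minimax} via a guess-and-verify strategy: conjecture that the least favorable prior takes the two-point symmetric form $\pi^* = \tfrac12\delta_{\tau^*}+\tfrac12\delta_{-\tau^*}$ for some $\tau^*>0$, derive the associated Bayes rule using the machinery from Theorem~\ref{thm:bayes.finite}, and check that its worst-case frequentist mean square regret equals its Bayes mean square regret.

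\textbf{Bayes rule and Bayes risk.} Because $\pi^*$ is supported on two symmetric points, the remark after Theorem~\ref{thm:bayes.finite} simplifies the Bayes optimal rule to the posterior probability-matching assignment $\hat{\delta}^*(\bar{Y}_1) = \pi^*(\tau\geq 0\mid \bar{Y}_1)$. Applying Bayes' formula to the $N(\tau,1)$ likelihood (\ref{eq:pdf.normal}) and using $\phi(\bar{y}-\tau^*)/\phi(\bar{y}+\tau^*) = e^{2\bar{y}\tau^*}$ yields the claimed logistic form. The symmetry $\hat{\delta}^*(-y)=1-\hat{\delta}^*(y)$ then gives $R_{sq}(\hat{\delta}^*, P_{\tau})= R_{sq}(\hat{\delta}^*, P_{-\tau})$, so $r_{sq}(\hat{\delta}^*, \pi^*) = R_{sq}(\hat{\delta}^*, P_{\tau^*})=(\tau^*)^2\mathbb{E}_{N(\tau^*,1)}[(1-\hat{\delta}^*)^2]$, matching (\ref{eq:minimax.2}) evaluated at $\tau^*$. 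To reconcile this with (\ref{eq:minimax.1}), I would prove the identity $\mathbb{E}_{N(\tau,1)}[(1-\hat{\delta}_\tau)^2] = \tfrac12\mathbb{E}_{N(\tau,1)}[1-\hat{\delta}_\tau]$ (where $\hat{\delta}_\tau$ is the logistic rule at parameter $\tau$), or equivalently $\mathbb{E}_{N(\tau,1)}[(1-e^{2\tau\bar{Y}_1})/(1+e^{2\tau\bar{Y}_1})^2]=0$. This identity follows by applying the change of variables $z\mapsto -z$ to the defining integral, combined with $\phi(z+\tau) = e^{-2\tau z}\phi(z-\tau)$ and the algebraic simplification $(1-e^{-2\tau z})/(1+e^{-2\tau z})^2 = -e^{2\tau z}(1-e^{2\tau z})/(1+e^{2\tau z})^2$; these together show the integral equals its own negative. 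I then set $\tau^*$ to be the maximizer of the common value $G(\tau) := r_{sq}(\hat{\delta}_\tau, \pi_\tau)$, which can be written in either of the claimed forms.

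\textbf{Equalizer condition and the main obstacle.} It remains to show $\sup_P R_{sq}(\hat{\delta}^*, P) = r_{sq}(\hat{\delta}^*, \pi^*)$, which by symmetry reduces to $\sup_{\tau\geq 0} H(\tau) = H(\tau^*)$ with $H(\tau):= R_{sq}(\hat{\delta}^*, P_\tau)$ and $\hat{\delta}^*=\hat{\delta}_{\tau^*}$ held fixed. An envelope argument --- letting $\rho(s,t):=r_{sq}(\hat{\delta}_t, \pi_s)$ and using that $\hat{\delta}_s$ is Bayes for $\pi_s$ so $\partial_t \rho(s,s)=0$, together with $\rho(s,t) = R_{sq}(\hat{\delta}_t, P_s)$ by the symmetry noted above --- gives $G'(\tau) = \partial_s \rho(\tau,\tau) = H'(\tau^*)$ at $\tau=\tau^*$, so $H'(\tau^*)=0$ since $\tau^*$ maximizes $G$. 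The main obstacle is promoting this critical point to a global maximum of $H$ on $[0,\infty)$. Since $H(0)=0$, $H(\tau)\to 0$ as $\tau\to\infty$ (because $\mathbb{E}_{N(\tau,1)}[(1+e^{2\tau^*\bar{Y}_1})^{-2}]$ is dominated by the MGF of $\bar{Y}_1$ evaluated at $-4\tau^*$, which decays like $e^{-4\tau^*\tau}$ and kills the polynomial $\tau^2$), and $H$ is smooth and strictly positive on $(0,\infty)$, unimodality reduces to uniqueness of the interior critical point; I would verify this by a direct sign analysis of $H'$ or by numerical confirmation that (\ref{eq:minimax.1}) admits a unique maximizer $\tau^*\approx 1.23$. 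Proposition~\ref{prop:minimax} then delivers both assertions of the theorem.
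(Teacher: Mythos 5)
Your overall route is the same guess-and-verify strategy the paper uses: conjecture the symmetric two-point equal-weight prior, derive the logistic Bayes rule, use the symmetry $\hat{\delta}^*(-y)=1-\hat{\delta}^*(y)$ to equate the Bayes risk with the frequentist risk at $P_{\tau}$ (which is exactly how the paper shows (\ref{eq:minimax.1}) and (\ref{eq:minimax.2}) are equivalent), and establish $H'(\tau^*)=0$ for $H(\tau):=R_{sq}(\hat{\delta}^*,P_{\tau})$; your envelope argument is a slicker packaging of the paper's Lemma \ref{lem:4}, and skipping the paper's Lemmas \ref{lem:1}--\ref{lem:3} is harmless since those only motivate the guess.

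The genuine gap is the final step: promoting the critical point $\tau^*$ to a global maximum of $H$ on $[0,\infty)$, which is precisely where the paper's main technical work lies. Lemma \ref{lem:5} shows that the sign of $H'(\tau)$ is governed by $\mathbf{R}(\tau)=\int \mathbf{w}(\bar{y}_1)f(\bar{y}_1|\tau)\,d\bar{y}_1$, proves that the fixed function $\mathbf{w}$ changes sign exactly once (from $+$ to $-$) by a monotonicity argument, and then invokes Karlin's variation-diminishing theorem for the strictly P\'olya-type Gaussian kernel (Theorem \ref{thm:Karlin}) to conclude that $\mathbf{R}$, and hence $H'$, has at most one sign change, in the same order, with the change located at $\tau^*$ via Lemma \ref{lem:4}. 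Your proposal acknowledges this obstacle but does not supply the argument: ``a direct sign analysis of $H'$'' is exactly the nontrivial part (the integrand is integrated against a moving Gaussian location family, which is why the paper needs total positivity), and ``numerical confirmation that (\ref{eq:minimax.1}) admits a unique maximizer'' addresses the wrong function --- uniqueness of the maximizer of $G(\tau)=R_{sq}(\hat{\delta}_{\tau},P_{\tau})$ says nothing about whether the fixed-rule risk $H(\tau)=R_{sq}(\hat{\delta}_{\tau^*},P_{\tau})$ has additional critical points or a larger value elsewhere (and a numerical check would not constitute a proof in any case). Without an argument of the Lemma \ref{lem:5} type, Condition \ref{cond:1} is unverified and Proposition \ref{prop:minimax} cannot be applied, so the minimaxity claim remains unproven.
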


\begin{rem}The minimax optimal rule is a simple logistic transformation of the sample mean and is straightforward to calculate. Moreover, the minimax optimal rule agrees with the posterior probability matching rule, i.e., the treatment probability equals the posterior probability that the treatment effect is positive with respect to the least favorable prior, which is supported on two symmetric points around zero. In this way, we justify the posterior probability matching rule in a static environment without multiple exploration phases.
\end{rem}

\begin{rem}\label{rem:proof}
On a more technical note, the proof of Theorem \ref{thm:minimax} relies on some different techniques from the existing treatment choice literature. For the mean regret criterion, singleton threshold rules form an essential complete class, so the minimax optimal rule with respect to mean regret can be found by directly minimizing the worst-case regret with respect to the threshold, without figuring out a least favorable prior \citep{tetenov2012statistical}. \cite{stoye2009minimax} finds that a least favourable prior must be two-point symmetrically supported around zero. For this class of priors, the Bayes rule is always the ES rule. Therefore, the result of \cite{stoye2009minimax} also does not need to pin down the exact location of the least favourable prior. However, for mean square regret, singleton rules are inadmissible, and it becomes essential to find the exact location of a least favorable prior. To find a least favorable prior and by observing the form of the mean square regret, we first conclude that  a least favorable prior
$\pi^{*}$ for mean square regret is also symmetric, such that
\[
\pi^{*}\left(\tau\right)=\frac{1}{2},\pi^{*}\left(-\tau\right)=\frac{1}{2},
\]
for some $0<\tau<\infty$. Within this set of candidate least favorable priors $\pi_{\tau}^{*}$ indexed by $\tau$, Theorem \ref{thm:bayes.finite} implies the Bayes optimal rules admit the form $\hat{\delta}_{\pi^{*}_{\tau}}(\bar{Y_{1}})=\frac{\exp\left(2\tau\bar{Y}_{1}\right)}{\exp\left(2\tau\bar{Y}_{1}\right)+1}$. Furthermore, $r_{sq}(\hat{\delta}_{\pi^{*}_{\tau}},\pi^{*}_{\tau})$  follows the form in (\ref{eq:minimax.1}), and is equivalent to the form in (\ref{eq:minimax.2}). Then, we guess that the least favorable prior is 
\[
\pi^{*}\left(\tau^{*}\right)=\frac{1}{2},\pi^{*}\left(-\tau^{*}\right)=\frac{1}{2},
\]
where $\tau^{*}$ solves (\ref{eq:minimax.1}) or (\ref{eq:minimax.2}).\footnote{The exact location of the least favorable prior for the mean regret criterion is also different from $\tau^*$ and solves $\inf_{\tau\in[0,\infty)}\tau\Phi(-\tau)$ for Example \ref{exa:1}. } With this guess of the least favorable prior, we further establish that the following condition holds:
\begin{condition}\label{cond:1}
 $r_{sq}(\hat{\delta}^{*},\pi^{*})=\sup_{\tau\in[0,\infty)}R_{sq}\left(\hat{\delta}^{*},P_{\tau}\right)$.
\end{condition}
The left-hand side of Condition \ref{cond:1} is the Bayes mean square regret of $\hat{\delta}^{*}$ with respect to our hypothesized least favorable prior $\pi^{*}$, and the right-hand side of Condition \ref{cond:1} is the worst mean square regret of $\hat{\delta}^{*}$. Thus, Proposition \ref{thm:minimax} implies that $\hat{\delta}^{*}$ is a minimax optimal rule and $\pi^{*}$ is least favorable. See also Figure \ref{fig:1} for a graphical 
illustration.
%The uniqueness of $\hat{\delta}^{*}$ follows from Proposition \ref{thm:minimax} (ii), noting   $\hat{\delta}^{*}$ is the unique Bayes optimal rule by Theorem \ref{thm:bayes.finite} and $g(r)=r^{2}$ strictly convex.
The full proof of Theorem \ref{thm:minimax} is left to Appendix \ref{sec:App.A}. 
\end{rem}
\begin{figure}[ht]
    \centering
    \includegraphics[scale=0.3]{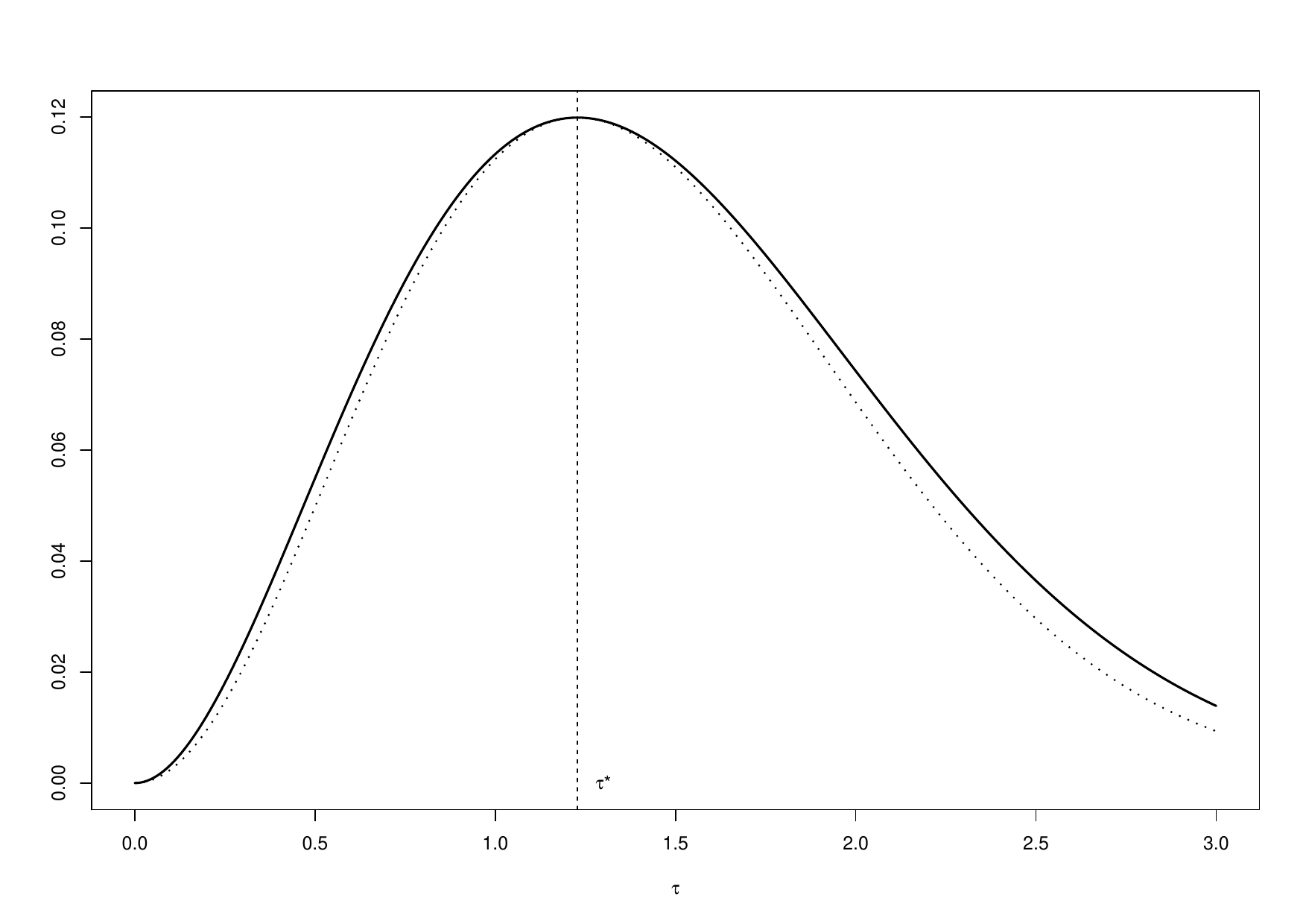}
    \caption{Illustration of Condition \ref{cond:1} for Theorem \ref{thm:minimax}. The dotted line is $r_{sq}(\hat{\delta}_{\pi^{*}_{\tau}},\pi^{*}_{\tau})$ as a function of $\tau$; the solid line is $R_{sq}\left(\hat{\delta}^{*},P_{\tau}\right)$ as a function of $\tau$.}
    \label{fig:1}
\end{figure}

\section{Further discussions}\label{sec:discuss}

\subsection{Microeconomic foundation}\label{sec:discussion.axiom}

\cite{hayashi2008regret} axiomatizes a class of regret-driven choices, including mean square regret and many other nonlinear transformations of regret. We briefly discuss how the results of \cite{hayashi2008regret} provide a microeconomic justification of our
approach. Let $\mathcal{S}:= \mathbf{Z}_n \times \Theta$ be the product space of sample $z_n \in \mathbf{Z}_n$ and the space of parameters indexing the distribution of the sample $\theta \in \Theta$.  Let $\mathcal{D}$ be the set of statistical treatment choice rules $\hat{\delta} : \mathbf{Z}_n \to [0,1]$ and $\mathcal{D}^{\ast}$ be the extended set of treatment choice rules $\delta : \mathcal{S} \to [0,1]$ which includes the infeasible oracle rule $\delta^{\ast} = 1\{ \tau \geq 0 \}$.\footnote{Therefore, the set of statistical decision rules $\mathcal{D}$ available to a decision maker is allowed to be not as rich as $\mathcal{D}^*$.} We can view $\mathcal{S}$ as the state space, $\delta$ as acts, and $\mathcal{D}$ and $\mathcal{D}^*$ as menus in decision theory. To facilitate translation of our notation to that used in \cite{hayashi2008regret}, write $W(\delta,\theta):=W(\delta)$, viewed as the utility of an act. For nonlinear transformation $g(x) = x^{\alpha}$, a nonlinear regret criterion with prior $\pi$ for $\theta$ writes as
\begin{equation*}
\int_{s \in \mathcal{S}}  \left( \max_{\delta \in \mathcal{D}^{\ast}} W(\delta(s)) - W(\hat{\delta}(s)) \right)^{\alpha} d P(z_n|\theta) d \pi(\theta).
\end{equation*}
The Bayes optimal nonlinear regret rule solves the following minimization
\begin{equation}
\inf_{\hat{\delta} \in \mathcal{D}}\int_{s \in \mathcal{S}}  \left( \max_{\delta \in \mathcal{D}^{\ast}} W(\delta(s)) - W(\hat{\delta}(s)) \right)^{\alpha} d P(z_n|\theta) d \pi(\theta).\label{eq:axiom.bayes.nonlinear}
\end{equation}
We can view the minimax optimal nonlinear regret rule as a solution to the following minimization:
\begin{equation}
\inf_{\hat{\delta} \in \mathcal{D}} \sup_{\pi \in \Pi} \int_{s \in \mathcal{S}}  \left( \max_{\delta \in \mathcal{D}^{\ast}} W(\delta(s)) - W(\hat{\delta}(s)) \right)^{\alpha} P(z_n|\theta) d \pi(\theta),\label{eq:axiom.minimax.nonlinear}
\end{equation}
where $\Pi$ denotes the set of probability distributions on $\Theta$.

Let $\left|\mathcal{S}\right|$ be the cardinality of $\mathcal{S}$, assumed to be finite in \cite{hayashi2008regret}.
\citet[Theorem 1, ][]{hayashi2008regret} builds a general axiomatic model where the choice of a decision
maker is represented by the following minimization:
\begin{equation}
\min_{\hat{\delta}\in \mathcal{D}}\varPhi\left(\max_{\delta\in \mathcal{D}}W(\delta(\cdotp))-W(\hat{\delta}(\cdotp))\right),\label{eq:hayashi.regret.general}
\end{equation}
where $\varPhi:\mathbb{R}_{+}^{\left|\mathcal{S}\right|}\rightarrow\mathbb{R}_{+}$
is a homothetic function. %and $W:[0,1]\rightarrow \mathbb{R}$ is viewed as the utility function. 
The function $\varPhi$ is an aggregator that collects the decision maker's regret in different states of the world. Note both (\ref{eq:axiom.bayes.nonlinear}) and (\ref{eq:axiom.minimax.nonlinear}) may be viewed as special cases of (\ref{eq:hayashi.regret.general}) (subject to caveats discussed below). While \cite{hayashi2008regret} also discussed (\ref{eq:axiom.bayes.nonlinear}), the minimax nonlinear regret criterion (\ref{eq:axiom.minimax.nonlinear}) has not been considered elsewhere in the literature to the best of our knowledge. The case of $\alpha>1$ is called regret aversion in \cite{hayashi2008regret}.

Axiomatic results in decision theory, like (\ref{eq:hayashi.regret.general}), focus on decision making without sample data. Our criteria (\ref{eq:axiom.bayes.nonlinear}) and (\ref{eq:axiom.minimax.nonlinear}) are tailored for decision making with sample data. We recognize two caveats when interpreting (\ref{eq:axiom.bayes.nonlinear}) and (\ref{eq:axiom.minimax.nonlinear}) as special cases of (\ref{eq:hayashi.regret.general}).
First, in (\ref{eq:axiom.bayes.nonlinear}) and (\ref{eq:axiom.minimax.nonlinear}),  the menu used to calculate regret, $\mathcal{D}^*$, includes the infeasible oracle rule and is allowed to be different from the actual menu $\mathcal{D}$ available to the decision maker.  Second, in (\ref{eq:axiom.bayes.nonlinear}) and (\ref{eq:axiom.minimax.nonlinear}) the utility function $W$ is usually state dependent  while in (\ref{eq:hayashi.regret.general}) the utility function is state independent. %These two deviations, however, are standard practices in statistical decision theory, including criteria like minimax regret in the existing literature. 
Fully reconciling the differences between decision theory and statistical decision theory is beyond the scope of this paper.  \citet[p. 2831]{manski2021econometrics} wrote: ``As in decisions without sample data, there is no clearly best way to choose
among admissible statistical decision functions (SDFs). Statistical decision theory has mainly studied the same  criteria as has decision theory without sample data.''
In view of \cite{manski2021econometrics}, our nonlinear regret criteria can find their counterparts in decision theory without sample data from \cite{hayashi2008regret}.

\subsection{Optimal rule as a summary statistic}\label{sec:discussion.interpretation}
The treatment probability of our suggested minimax optimal rule is always between zero and one. As such, our rule can be naturally viewed as a summary statistic that measures the \emph{strength of evidence} in favor of treatment versus control. Therefore, the usefulness of our approach does not hinge on the decision theoretic framework. One does not have to literally make a treatment decision  based on $\hat{\delta}^{*}$. Instead, empirical researchers may view $\hat{\delta}^{*}$ as a degree of confidence gathered from data about the performance of the treatment in terms of welfare. Given finite sample from a single phase experiment, a larger value of $\hat{\delta}^{*}$ means we are more in favor of the treatment, while a smaller value of $\hat{\delta}^{*}$ signals less evidence supporting implementing the treatment. In contrast, in the standard mean regret paradigm, optimal rules are singleton and not fractional. Hence, it is not possible for applied researchers to solicit a measure of evidence strength from an optimal decision rule. Consider a scenario where $\Bar{Y}_{1}$ is only slightly larger than zero. The empirical success rule would dictate everyone in the population to be treated, even though we may think that the evidence reflected from data in favor of the treatment is not entirely strong. 

Viewing $\hat{\delta}^{*}$ as a measure of the strength of evidence in a binary treatment setup, applied researchers may report $\hat{\delta}^{*}$ as an alternative summary statistic to the widely used P value. Despite its popularity, P value is known to be unfit for a measure of support for its hypothesis \citep{schervish1996p}. Consider the setup in Example \ref{exa:1} again. The P value for one-sided hypotheses $\mathbb{H}_{0}:\tau=0, \text{ v.s. } \mathbb{H}_{1}: \tau>0$ is $1-\Phi(\Bar{Y}_{1})$.  Since $\Phi(\Bar{Y}_{1})$ is in fact the posterior probability with respect to the flat prior, reporting the P value corresponds to reporting the posterior probability  under a flat prior. However, reporting a posterior probability under a specific prior is not necessarily associated with any optimality criterion. Different from the P value, $\hat{\delta}^{*}$ is an optimal treatment fraction under our mean square regret criterion. At the same time, $\hat{\delta}^{*}$ is also a posterior probability under a least favorable prior. Note given $\Bar{Y}_{1}>0(<0)$, $\hat{\delta}^{*}$ is quantitatively larger (smaller) than the P value. Therefore, reporting the P value would be more conservative than reporting $\hat{\delta}^{*}$ in our mean square regret framework. In Section \ref{sec:app.normal.regression}, we discuss how to calculate $\hat{\delta}^{*}$ in a normal regression model with binary treatment.

%However, under our mean square regret criterion, if one truly believes in the flat prior, then P value (viewed as a posterior probability) is not optimal. We have shown in Proposition \ref{prop:bayes.flat} that in the case of a flat prior, the Bayes optimal rule in fact tilts $\Phi(\Bar{Y}_{1})$ so that the resulted allocation fraction is larger than $\Phi(\Bar{Y}_{1})$ if $\Bar{Y}_{1}>0$ and smaller than $\Phi(\Bar{Y}_{1})$ otherwise. Hence, under the flat prior, the optimal decision rule would have suggested a stronger degree of evidence than that implied by the P value. Our rule  $\hat{\delta}^{*}$ is both the posterior probability and the optimal rule under the least favorable prior. In the mean square regret framework, for a given value of $\Bar{Y}_{1}$, the degree of evidence offered by  $\hat{\delta}^{*}$ is always larger than that offered by P value but smaller than that offered by the Bayes optimal rule under a flat prior (also see Figure \ref{fig:2} and Table \ref{tab:1}).

\section{Asymptotic optimality with mean square regret} \label{sec:asymptotic}

In this section we derive asymptotically optimal rules via the limit 
experiment framework \citep{le2012asymptotic}, following the
approach taken by \cite{HiranoPorter2009}. We first consider a local
parametrization of the statistical model $P$ so that, in large samples,
the treatment choice problem is equivalent to a simpler problem in a Gaussian limit experiment. Then, we examine and normalize
our nonlinear regret in the limit, and find the corresponding
optimal treatment rule. A feasible and asymptotically
optimal treatment rule also follows if there exists an efficient estimator of the parameters
in the original statistical model $P$. For a review, see \citet{HiranoPorter2020}.

\subsection{Limit experiments}

For simplicity, we focus on regular parametric models
of $P\in\mathcal{P}$ with mean square regret $R_{sq}$. Semiparametric
models and other nonlinear regret criteria can also be considered, albeit necessitating more technical analysis. Without loss of generality, consider a
case where the distribution of $Y(0)$ is known and the mean of $Y(0)$
is zero. Suppose now the distribution of $Y(1)$, denoted by $P$, is
parameterized by a finite dimensional parameter $\theta\in\Theta\subseteq\mathbb{R}^{k}$.
Hence, the population average treatment effect is 
\[
\tau(\theta)=\int zdP_{\theta}(z).
\]

Data $Z_{n}=\{Z_{i}\}_{i=1}^{n}$ is independently and identically drawn from $P_{\theta}$.
In particular, $Z_{i}\sim P_{\theta}$, where $Z_{i}\in\mathbf{Z}$
and $\mathbf{Z}$ is the support of $Z_{i}$. We now imagine a sequence
of experiments $\mathcal{E}_{n}:=\{P_{\theta}^{n},\theta\in\Theta\}$
in which the sample size $n$ grows. Let $\theta_{0}\in\Theta$ satisfy $\tau(\theta_{0})=0$. We consider a sequence of local alternative
parameters of the form $\theta_{0}+\frac{h}{\sqrt{n}}$,
$h\in\mathbb{R}^{k}$,
the most challenging case in which to
determine the optimal treatment rule, even in large samples. 

\begin{assumptionDQM}[Differentiability in Quadratic Mean] There exists a function $s:\mathbf{Z}\rightarrow\mathbb{R}^{k}$
such that 
\[
\int\left[dP_{\theta_{0}+h}^{\frac{1}{2}}(z)-dP_{\theta_{0}}^{\frac{1}{2}}(z)-\frac{1}{2}h^{\prime}s(z)dP_{\theta_{0}}^{\frac{1}{2}}(z)\right]^{2}=o(\left\Vert h\right\Vert ^{2}),\text{ as }h\rightarrow0,
\]
and $I_{0}:=\mathbb{E}_{\theta_{0}}\left[ss^{\prime}\right]$ is nonsingular.
\end{assumptionDQM}

Assumption DQM is a standard assumption in the limit experiment framework \citep[e.g.,][]{van2000asymptotic}. The function $s$ can usually be interpreted as the derivative of
the loglikelihood function so that $I_{0}$
is the Fisher information under $P_{\theta_{0}}$.

\begin{assumptionC}[Convergence]
A sequence
of treatment rules $\hat{\delta}_{n}$ in the experiments $\mathcal{E}_{n}$ is such that $\beta_{n}(h,1):=\mathbb{E}_{P_{\theta_{0}+\frac{h}{\sqrt{n}}}^{n}}[\hat{\delta}_{n}]\rightarrow\beta(h,1)$
and $\beta_{n}(h,2):=\mathbb{E}_{P_{\theta_{0}+\frac{h}{\sqrt{n}}}^{n}}[(\hat{\delta}_{n})^{2}]\rightarrow\beta(h,2)$
for every $h$ as $n\rightarrow\infty$.
\end{assumptionC}

Compared to mean regret criterion, our mean square regret additionally depends on the second moment of decision rules. Thus, Assumption C assumes convergence of both first and second moments of decision rules, differing from \cite{HiranoPorter2009}, who only look at convergence of the first moment of decision rules. Under Assumptions
DQM and C, we first establish the following result that allows us to simplify
the original treatment problem to a Gaussian experiment
in large samples.
\begin{prop}[\citet{van2000asymptotic}]\label{prop:limit} Suppose $\mathcal{E}_{n}$
satisfy Assumption DQM and a sequence
of treatment rules $\hat{\delta}_{n}$ in $\mathcal{E}_{n}$ satisfy Assumption C. Then there exists a function
$\hat{\delta}:\mathbb{R}^{k}\rightarrow[0,1]$ such that for every
$h\in\mathbb{R}^{k}$,
\[
\beta(h,1)=\int\hat{\delta}(\varDelta)dN(\varDelta|h,I_{0}^{-1}),\text{ }\beta(h,2)=\int\left(\hat{\delta}(\varDelta)\right)^{2}dN(\varDelta|h,I_{0}^{-1}),
\]
where $N(\varDelta|h,I_{0}^{-1})$ is a multivariate normal distribution
with mean $h$ and variance $I_{0}^{-1}$.
\end{prop}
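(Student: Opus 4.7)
The plan is to invoke Le Cam's convergence-of-experiments theory. Under Assumption DQM, the sequence $\mathcal{E}_n = \{P^n_{\theta_0+h/\sqrt{n}} : h \in \mathbb{R}^k\}$ is locally asymptotically normal at $\theta_0$: the log-likelihood expansion
\[
\log \frac{dP^n_{\theta_0+h/\sqrt{n}}}{dP^n_{\theta_0}} = h^{\prime} S_n - \tfrac{1}{2} h^{\prime} I_0 h + o_{P^n_{\theta_0}}(1), \qquad S_n := n^{-1/2}\sum_{i=1}^n s(Z_i),
\]
holds with $S_n \to_d N(0, I_0)$ under $P^n_{\theta_0}$. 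Setting $\Delta_n := I_0^{-1} S_n$, Le Cam's first and third lemmas yield mutual contiguity of $P^n_{\theta_0+h/\sqrt{n}}$ and $P^n_{\theta_0}$ together with $\Delta_n \to_d N(h, I_0^{-1})$ under $P^n_{\theta_0+h/\sqrt{n}}$ for each $h$. Thus $\mathcal{E}_n$ converges to the Gaussian shift experiment with observation $\Delta \sim N(h, I_0^{-1})$.

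Because $\hat{\delta}_n \in [0,1]$, the joint sequence $(\hat{\delta}_n, \Delta_n)$ is tight under $P^n_{\theta_0}$, and along a subsequence $(\hat{\delta}_n, \Delta_n) \to_d (T, \Delta)$ with $T \in [0,1]$ and $\Delta \sim N(0, I_0^{-1})$, possibly after enlarging the probability space. Representing $T$ through its conditional law given $\Delta$ — equivalently, writing $T = \hat{\delta}(\Delta, U)$ for $U$ uniform and independent of $\Delta$ — produces a candidate limit rule $\hat{\delta} : \mathbb{R}^k \to [0,1]$ in the Gaussian shift experiment. Now apply Le Cam's third lemma to the pair $(\hat{\delta}_n, \hat{\delta}_n^2)$ alongside the LAN log-likelihood above: the joint limit under $P^n_{\theta_0+h/\sqrt{n}}$ is the tilt of the $P^n_{\theta_0}$-limit by $\exp(h^{\prime} I_0 \Delta - \tfrac{1}{2} h^{\prime} I_0 h)$. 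Because this tilt is $\sigma(\Delta)$-measurable, the marginal of $\Delta$ shifts to $N(h, I_0^{-1})$ while the conditional law of $T$ given $\Delta$ is preserved. Combined with Assumption C and the fact that moments determine distributions on the compact interval $[0,1]$, all subsequential limits coincide, so along the full sequence
\[
\beta(h,1) = \int \hat{\delta}(\delta)\, dN(\delta\mid h, I_0^{-1}), \qquad \beta(h,2) = \int \hat{\delta}(\delta)^2\, dN(\delta\mid h, I_0^{-1}).
\]

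The main obstacle is the second-moment identity. The naive choice $\hat{\delta}(\delta) := E[T \mid \Delta = \delta]$ matches $\beta(h,1)$ by the tower property but in general under-estimates $\beta(h,2)$ by the amount $E[\mathrm{Var}(T \mid \Delta)]$. Closing this gap requires either allowing $\hat{\delta}$ to incorporate the independent randomization $U$ — the standard Le Cam convention that treats limits of statistics as Markov kernels into $[0,1]$ rather than as deterministic maps — or invoking the asymptotic sufficiency of $\Delta_n$ for the local parameter in LAN models so that the limit $T$ is automatically $\sigma(\Delta)$-measurable. Either route stays within standard Le Cam machinery, and all other steps are routine applications of LAN, tightness, and contiguity.
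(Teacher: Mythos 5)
The paper does not prove this proposition from first principles: its entire proof is a citation of the asymptotic representation theorem, \citet[Theorems 7.10 and 13.1]{van2000asymptotic}, applied to the pair of functionals $(\hat{\delta}_n,\hat{\delta}_n^2)$, following \citet[Proposition 3.1]{HiranoPorter2009}. Your LAN/contiguity/third-lemma reconstruction is the right machinery---it is essentially the proof of the cited theorem---but it stalls exactly at the point you flag, and only one of your two proposed repairs is sound.

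The gap is your claim that ``either route'' works: asymptotic sufficiency of $\Delta_n$ does \emph{not} make the limit of an arbitrary statistic $\sigma(\Delta)$-measurable. Take $\hat{\delta}_n=\mathbf{1}\{Z_1\in A\}$ with $P_{\theta_0}(Z_1\in A)=1/2$. Assumption C holds with $\beta(h,1)=\beta(h,2)=1/2$ for every $h$ (DQM gives $P_{\theta_0+h/\sqrt{n}}(A)\to P_{\theta_0}(A)$), the distributional limit is a fair coin independent of $\Delta$, and by completeness of the Gaussian shift family $\{N(h,I_0^{-1}):h\in\mathbb{R}^k\}$ no deterministic $\hat{\delta}(\Delta)\in[0,1]$ can satisfy both moment identities for all $h$: they would force $\hat{\delta}^2=\hat{\delta}$ almost everywhere, hence $\hat{\delta}=\mathbf{1}_B$, and then $N(h,I_0^{-1})(B)=1/2$ for all $h$, which completeness rules out. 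So the measurability route is unavailable, and the result must be read---as the cited theorem is in fact stated---with a randomized limit rule $\hat{\delta}(\Delta,U)$, i.e.\ a Markov kernel; that is your first route and the only one that works. (Nothing downstream is harmed: mean square regret is convex in the action, so replacing $\hat{\delta}(\Delta,U)$ by $\mathbb{E}[\hat{\delta}\mid\Delta]$ weakly lowers risk at every $h$, and deterministic rules remain essentially complete in the limit experiment.) Two smaller corrections: Assumption C gives convergence of only two moments, not convergence in distribution, and two moments do not determine a law on $[0,1]$ (a Hausdorff-type argument needs all moments), so your ``all subsequential limits coincide'' step is both unjustified and unnecessary---a single subsequential limit, extended to every $h$ by contiguity and Le Cam's third lemma, with boundedness of $\hat{\delta}_n$ used to pass the two moments to the limit, already matches the full-sequence limits $\beta(h,1),\beta(h,2)$ supplied by Assumption C.
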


Proposition \ref{prop:limit} is a special case of  \citet[Theorem 15.1 and Theorem 7.10]{van2000asymptotic} applied to the mean square regret setup, following
\citet[Proposition
3.1]{HiranoPorter2009}. To use Proposition \ref{prop:limit},
note for any treatment rule $\hat{\delta}_{n}$ in the experiments
$\mathcal{E}_{n}$, the mean square regret is 
\[
\mathbb{E}_{P_{\theta_{0}+\frac{h}{\sqrt{n}}}^{n}}\left[\tau\left(\theta_{0}+\frac{h}{\sqrt{n}}\right)^{2}\left(1\left\{ \tau\left(\theta_{0}+\frac{h}{\sqrt{n}}\right)\geq0\right\} -\hat{\delta}_{n}\right)^{2}\right],
\]
which depends on $\hat{\delta}_{n}$ only through $\mathbb{E}_{P_{\theta_{0}+\frac{h}{\sqrt{n}}}^{n}}[\hat{\delta}_{n}]$
and $\mathbb{E}_{P_{\theta_{0}+\frac{h}{\sqrt{n}}}^{n}}[\hat{\delta}_{n}^{2}]$,
to which we can apply Proposition \ref{prop:limit}. Thus, in terms of the mean square regret, any converging sequence of treatment rules is matched by
some treatment rule in a simpler Gaussian experiment with unknown mean
$h$ and known variance $I_{0}^{-1}$. 

Let $\dot{\tau}$ be the partial derivative of $\tau(\theta)$ at
$\theta_{0}$. Since $\tau\left(\theta_{0}\right)=0$, it follows that
$\sqrt{n}\tau\left(\theta_{0}+\frac{h}{\sqrt{n}}\right)\rightarrow \dot{\tau}^{\prime}h$ as $n\rightarrow\infty$.
Thus, for any rule $\delta$,
\begin{align*}
\sqrt{n}Reg\left(\delta,\left(\theta_{0}+\frac{h}{\sqrt{n}}\right)\right)&\rightarrow\dot{\tau}^{\prime}h\left[1\left\{ \dot{\tau}^{\prime}h\geq0\right\} -\delta\right] := Reg_{\infty}(\delta,h),
\end{align*}
%Therefore for each fixed $\hat{\delta}$,
and $n\left[Reg\left(\delta,\left(\theta_{0}+\frac{h}{\sqrt{n}}\right)\right)\right]^{2}\rightarrow\left(Reg_{\infty}(\delta,h)\right)^{2}$
as $n\rightarrow\infty$. Hence, normalizing by $n$, for any converging rule $\hat{\delta}_{n}$ in the sense of Proposition \ref{prop:limit}, we
define the corresponding limit mean square regret as
\begin{align}
R_{sq}^{\infty}(\hat{\delta},h) & :=\int\left(Reg_{\infty}(\hat{\delta}(\varDelta),h)\right)^{2}dN(\varDelta|h,I_{0}^{-1})\label{eq:limit.risk}\\
 & =\mathbb{E}_{\varDelta\sim N(h,I_{0}^{-1})}\left[Reg_{\infty}(\hat{\delta}(\varDelta),h)\right]^{2}.\nonumber 
\end{align}

With (\ref{eq:limit.risk}) as the mean square regret in the limit
experiment, we can apply our finite sample results in Section \ref{sec:finite}
and derive a feasible and asymptotically optimal treatment rule via an efficient estimator
of the parameters.

\subsection{Feasible and asymptotically optimal rules}

We first present results in terms of minimax optimality. Denote
$\overset{h}{\rightsquigarrow}$ as convergence in distribution under
the sequence of probability measures $P_{\theta_{0}+\frac{h}{\sqrt{n}}}^{n}$.
Define $\sigma_{\tau}:=\sqrt{\dot{\tau}^{\prime}I_{0}^{-1}\dot{\tau}}$ to be
the standard deviation of $\dot{\tau}^{\prime}\varDelta$, where $\varDelta\sim N(h,I_{0}^{-1})$.
\begin{thm} \label{thm:minimax.feasible}
Suppose Proposition \ref{prop:limit} holds, $\tau(\theta_{0})=0$,
and $\tau(\theta)$ is differentiable at $\theta_{0}$. 
\begin{itemize}
\item[(i)] The minimax optimal rule in the limit experiment is
\[
\hat{\delta}^{*}(\varDelta)=\frac{\exp\left(\frac{2\tau^{*}}{\sigma_{\tau}}\dot{\tau}^{\prime}\varDelta\right)}{\exp\left(\frac{2\tau^{*}}{\sigma_{\tau}}\dot{\tau}^{\prime}\varDelta\right)+1},
\]
where $\tau^{*}\approx1.23$, and which solves (\ref{eq:minimax.1}).

\item[(ii)] If, in addition, there exists a best regular estimator $\hat{\theta}$
such that

\begin{equation}\label{eq:regular}
\sqrt{n}\left(\hat{\theta}-\theta_{0}-\frac{h}{\sqrt{n}}\right)\overset{h}{\rightsquigarrow}N(0,I_{0}^{-1}),\text{ for all }h\in\mathbb{R}^{k},
\end{equation}
and there exists some estimator $\hat{\sigma}_{\tau}\overset{p}{\rightarrow}\sigma_{\tau}$
under $\theta_{0}$, the feasible treatment rule 
\[
\hat{\delta}_{F}^{*}(Z_{n})=\frac{\exp\left(\frac{2\tau^{*}}{\hat{\sigma}_{\tau}}\sqrt{n}\tau(\hat{\theta})\right)}{\exp\left(\frac{2\tau^{*}}{\hat{\sigma}_{\tau}}\sqrt{n}\tau(\hat{\theta})\right)+1}
\]
is locally asymptotically minimax optimal in terms of mean square regret:
\[
\sup_{J}\liminf_{n\rightarrow\infty}\sup_{h\in J}nR_{sq}(\hat{\delta}_{F}^{*},\theta_{0}+\frac{h}{\sqrt{n}})=\inf_{\hat{\delta}\in\mathcal{D}}\sup_{J}\liminf_{n\rightarrow\infty}\sup_{h\in J}nR_{sq}(\hat{\delta},\theta_{0}+\frac{h}{\sqrt{n}}),
\]
where $J$ is a finite subset of $\mathbb{R}^{k}$ and $\mathcal{D}$ is the set of all decision rules that satisfy Assumption C (slightly abusing notation).
\end{itemize}
\end{thm}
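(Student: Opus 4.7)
The plan is to reduce part (i) to the scalar Gaussian problem already solved in Example \ref{exa:1} and Theorem \ref{thm:minimax}, and to obtain part (ii) by sandwiching an upper bound from the continuous mapping theorem against a Hirano--Porter-style limit-experiment lower bound adapted to quadratic loss. For part (i), the limit risk
\[
R_{sq}^{\infty}(\hat\delta,h)=(\dot\tau'h)^{2}\,\mathbb{E}_{\varDelta\sim N(h,I_{0}^{-1})}\bigl[(\mathbf{1}\{\dot\tau'h\geq 0\}-\hat\delta(\varDelta))^{2}\bigr]
\]
depends on $h$ only through the scalar $\dot\tau'h$, while $\dot\tau'\varDelta\sim N(\dot\tau'h,\sigma_{\tau}^{2})$ carries that information. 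Translations of $\varDelta$ orthogonal to $\dot\tau$ leave both the loss and the Gaussian model equivariant, so by an invariance (Hunt--Stein) argument I would restrict WLOG to rules of the form $\hat\delta(\varDelta)=\tilde\delta(\dot\tau'\varDelta)$. Setting $\bar Y_{1}:=\dot\tau'\varDelta/\sigma_{\tau}\sim N(\tilde\tau,1)$ with $\tilde\tau:=\dot\tau'h/\sigma_{\tau}$, the risk becomes $\sigma_{\tau}^{2}\tilde\tau^{2}\,\mathbb{E}_{\bar Y_{1}\sim N(\tilde\tau,1)}[(\mathbf{1}\{\tilde\tau\geq 0\}-\tilde\delta(\sigma_{\tau}\bar Y_{1}))^{2}]$, a positive rescaling of the Example \ref{exa:1} risk. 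Theorem \ref{thm:minimax} then delivers the minimax $\tilde\delta$ and supplies $\tau^{*}\approx 1.23$; unscaling $\bar Y_{1}=\dot\tau'\varDelta/\sigma_{\tau}$ yields the stated $\hat\delta^{*}(\varDelta)$, and the induced two-point symmetric prior $\tilde\tau=\pm\tau^{*}$ is least favorable in the limit experiment.

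For the upper bound in part (ii), under $P_{\theta_{0}+h/\sqrt n}^{n}$ a first-order expansion of $\tau(\cdot)$ around $\theta_{0}$ (using $\tau(\theta_{0})=0$ and differentiability) combined with best regularity of $\hat\theta$ gives $\sqrt n\,\tau(\hat\theta)\overset{h}{\rightsquigarrow}\dot\tau'\varDelta$, $\varDelta\sim N(h,I_{0}^{-1})$; DQM-contiguity (Le Cam's first lemma) promotes $\hat\sigma_{\tau}\overset{p}{\to}\sigma_{\tau}$ from $\theta_{0}$ to the local law; Slutsky plus the continuous mapping theorem then yield $\hat\delta_{F}^{*}(Z_{n})\overset{h}{\rightsquigarrow}\hat\delta^{*}(\varDelta)$. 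Since $\hat\delta_{F}^{*}\in[0,1]$, bounded convergence upgrades convergence in distribution to convergence of the first two moments, so $\hat\delta_{F}^{*}$ satisfies Assumption C with limit $\hat\delta^{*}$, and $nR_{sq}(\hat\delta_{F}^{*},\theta_{0}+h/\sqrt n)\to R_{sq}^{\infty}(\hat\delta^{*},h)$ pointwise in $h$. Taking $\sup_{h\in J}$ for finite $J$ and then $\sup_{J}$ recovers the limit-experiment minimax value $\sup_{h}R_{sq}^{\infty}(\hat\delta^{*},h)$ from (i). For the matching lower bound, given any competing sequence $\hat\delta_{n}$, a diagonal extraction over a countable dense subset of $\mathbb{R}^{k}$ produces a subsequence on which the bounded sequences $\beta_{n}(h,1)$ and $\beta_{n}(h,2)$ both converge at every relevant $h$; Proposition \ref{prop:limit} then furnishes a $[0,1]$-valued limit rule $\hat\delta$ whose first two moments match the limits, whence $nR_{sq}(\hat\delta_{n},\theta_{0}+h/\sqrt n)\to R_{sq}^{\infty}(\hat\delta,h)$ along the subsequence. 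Thus $\liminf_{n}\sup_{h\in J}nR_{sq}\geq\sup_{h\in J}R_{sq}^{\infty}(\hat\delta,h)$; taking $\sup_{J}$ and invoking part (i)'s minimax characterization gives $\sup_{h}R_{sq}^{\infty}(\hat\delta,h)\geq\sup_{h}R_{sq}^{\infty}(\hat\delta^{*},h)$, closing the sandwich.

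The main technical obstacles lie in two pieces of bookkeeping. First, the invariance reduction in part (i) must justify that restricting to rules depending only on $\dot\tau'\varDelta$ loses no generality for the minimax problem: this uses amenability of the translation group in the hyperplane orthogonal to $\dot\tau$, under which the loss is invariant and the model equivariant, so Hunt--Stein averaging produces an invariant rule with weakly smaller worst-case risk. Second, unlike the Hirano--Porter proof for linear regret, the quadratic loss requires Assumption C convergence of the \emph{second} moment of $\hat\delta_{n}$, so the subsequence extraction must be carried out jointly on $\beta_{n}(h,1)$ and $\beta_{n}(h,2)$ at every $h$ in a countable dense set, and one must verify that the Proposition \ref{prop:limit} limit rule takes values in $[0,1]$ so that the limiting mean square regret is well defined.
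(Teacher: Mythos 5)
Your proposal is correct and, at the level of architecture, mirrors the paper: part (i) is obtained by collapsing the $k$-dimensional Gaussian limit experiment to the scalar problem of Example \ref{exa:1} and invoking Theorem \ref{thm:minimax}, and part (ii) combines an asymptotic-representation lower bound (Proposition \ref{prop:limit}) with a continuous-mapping/bounded-convergence upper bound for $\hat{\delta}_{F}^{*}$, exactly the Hirano--Porter-style sandwich the paper executes by citing their Lemmas 3 and 4. The genuine difference is in how you justify the dimension reduction in (i). You invoke a Hunt--Stein invariance argument: the translation group in the hyperplane orthogonal to $\dot{\tau}$ is abelian (hence amenable), the loss and model are invariant/equivariant, and the maximal invariant is $\dot{\tau}^{\prime}\varDelta$, so one may restrict to invariant rules without increasing the minimax value. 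The paper instead slices the parameter space as $h_{1}(b,h_{0})$ with $\dot{\tau}^{\prime}h_{0}=0$ (Lemmas \ref{lem:C1}--\ref{lem:C3}): it notes that the risk of any rule depending only on $\dot{\tau}^{\prime}\varDelta$ is constant in $h_{0}$, compares worst-case risks along the single slice $h_{0}=0$, and rescales to land on \eqref{eq:minimax.1}. Your route buys a cleaner conceptual statement of ``no loss of generality,'' at the cost of importing group-theoretic machinery whose hypotheses (measurability, amenability, finiteness of the minimax value) you should verify explicitly; the paper's route is more elementary and self-contained, though its comparison of the restricted and unrestricted problems along the slice implicitly rests on the same fact you make explicit -- that $\dot{\tau}^{\prime}\varDelta$ is sufficient for the one-parameter subfamily and the loss is convex in $\delta$ (Rao--Blackwell), or equivalently that the least favorable two-point prior sits on that line. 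One further remark on part (ii): because the paper defines $\mathcal{D}$ as the class of rules satisfying Assumption C, the competing sequences already have convergent first and second moments, so your diagonal subsequence extraction over a countable dense set is not needed for the stated result (it would be needed only for an unrestricted class); it is harmless but can be dropped. Your observations that contiguity under DQM transfers $\hat{\sigma}_{\tau}\overset{p}{\rightarrow}\sigma_{\tau}$ to the local laws, and that boundedness of $\hat{\delta}_{F}^{*}\in[0,1]$ upgrades weak convergence to convergence of the first two moments (so $\hat{\delta}_{F}^{*}$ is matched with $\hat{\delta}^{*}$ in the sense of Proposition \ref{prop:limit}), are precisely the details the paper delegates to the cited lemmas.
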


Theorem \ref{thm:minimax.feasible} extends our finite sample results to a large sample setting. Given a regular parametric model,  the maximum likelihood estimator (MLE) usually satisfies (\ref{eq:regular}). Thus, Theorem \ref{thm:minimax.feasible} suggests a simple way to construct an asymptotically minimax optimal rule in terms of mean square regret: estimate the parameters of $P_{\theta}$ via MLE, calculate a $t$-statistic for the mean, and then carry out a simple logit transformation for the $t$-statistic. This rule is always fractional and very easy to implement for practitioners. We expect that our result can also be extended to regular semiparametric models.

Next, we derive a feasible rule that is locally asymptotically Bayes optimal. Let $\pi(\theta)$ be a positive and continuous prior density on $\Theta$ (slightly abusing notation). For
a treatment rule $\hat{\delta}_{n}$ that satisfies Assumption C, the normalized
Bayes mean square regret is
\begin{align*}
nr_{sq}(\hat{\delta}_{n},\pi) & =\int nR_{sq}(\hat{\delta}_{n},\theta_{0}+\frac{h}{\sqrt{n}})\pi(\theta_{0}+\frac{h}{\sqrt{n}})dh.
\end{align*}
We define the Bayes mean square
regret in the limit experiment when $n\rightarrow\infty$ as
\[
r_{sq}^{\infty}(\hat{\delta}):=\pi(\theta_{0})\int R_{sq}^{\infty}(\hat{\delta},h)dh.
\]
That is, as the Bayes mean square regret with respect to an uninformative prior. Then we can apply Theorem \ref{thm:bayes.finite} to derive the Bayes optimal rule for the limit experiment. Given an MLE estimate of the parameters in $P_{\theta}$, Theorem \ref{thm:bayes.feasible} further implies that a feasible and
asymptotically optimal Bayes rule also follows with a simple transformation of the $t$-statistic for the mean.
\begin{thm} \label{thm:bayes.feasible}
Suppose Proposition \ref{prop:limit} holds, $\tau(\theta_{0})=0$
and $\tau(\theta)$ is differentiable at $\theta_{0}$. Let $\pi(\theta)$
be the density of a prior distribution on $\Theta$ that is continuous
and positive at $\theta_{0}$. 
\begin{itemize}
\item[(i)] The Bayes optimal rule in terms of mean square regret in the
limit experiment is
\begin{align}
\hat{\delta}_{B}\left(\varDelta\right) & =\Phi\left(\frac{\dot{\tau}^{\prime}\varDelta}{\sigma_{\tau}}\right)\left(1+ \frac{\dot{\tau}^{\prime}\varDelta}{\sigma_{\tau}}\Psi\left(\frac{\dot{\tau}^{\prime}\varDelta}{\sigma_{\tau}}\right)\right),\label{eq:bayes.limit}
\end{align}
where $\Psi(\cdot)$ is defined in Proposition \ref{prop:bayes.flat}.
That is, $r_{sq}^{\infty}(\hat{\delta}_{B})=\inf_{\delta\in\mathcal{D}_{\infty}}r_{sq}^{\infty}(\delta)$,
where $\mathcal{D}_{\infty}$ is the set of all treatment rules in the
$N(h,I_{0}^{-1})$ limit experiment.

\item[(ii)] If, in addition, there exists a best regular estimator $\hat{\theta}$
such that
\[
\sqrt{n}\left(\hat{\theta}-\theta_{0}-\frac{h}{\sqrt{n}}\right)\overset{h}{\rightsquigarrow}N(0,I_{0}^{-1}),\text{ for all }h\in\mathbb{R}^{k},
\]
and there exists some estimator $\hat{\sigma}_{\tau}\overset{p}{\rightarrow}\sigma_{\tau}$
under $\theta_{0}$, the feasible treatment rule 
\[
\hat{\delta}_{B,F}(Z_{n})=\Phi\left(\frac{\sqrt{n}\tau(\hat{\theta})}{\hat{\sigma}_{\tau}}\right)\left[1+\frac{\sqrt{n}\tau(\hat{\theta})}{\hat{\sigma}_{\tau}}\Psi\left(\frac{\sqrt{n}\tau(\hat{\theta})}{\hat{\sigma}_{\tau}}\right)\right]
\]
is locally asymptotically Bayes optimal, i.e.,
\[
\lim_{n\rightarrow\infty}nr_{sq}(\hat{\delta}_{B,F},\pi)=\inf_{\hat{\delta}\in\mathcal{D}}\liminf_{n\rightarrow\infty}nr_{sq}(\hat{\delta},\pi).
\]
\end{itemize}
\end{thm}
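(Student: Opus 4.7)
\textbf{Proof proposal for Theorem \ref{thm:bayes.feasible}.}

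For part (i), the plan is to apply Theorem \ref{thm:bayes.finite} directly to the Gaussian limit experiment $\Delta \sim N(h, I_0^{-1})$. The normalization step used in the definition of $r_{sq}^{\infty}$ pulls $\pi(\theta_0)$ outside the integral, so minimizing $r_{sq}^\infty(\hat\delta)$ over rules $\hat\delta:\mathbb R^k\to[0,1]$ is equivalent to minimizing the posterior mean square regret in the Gaussian experiment under a flat prior on $h$. The only parameter that matters for the regret is the scalar $\eta := \dot\tau'h$, and under the flat prior on $h$ the posterior distribution of $\eta$ given $\Delta$ is $N(\dot\tau'\Delta, \sigma_\tau^2)$. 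Applying the first-order condition \eqref{eq:bayes.2} from Theorem \ref{thm:bayes.finite} with $g(r)=r^2$ and changing variables to $\xi := \eta/\sigma_\tau \sim N(\dot\tau'\Delta/\sigma_\tau, 1)$ reduces the problem to the setting of Proposition \ref{prop:bayes.flat}, with $\dot\tau'\Delta/\sigma_\tau$ playing the role of $\bar Y_1$. The stated closed form for $\hat\delta_B(\Delta)$ is then immediate.

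For part (ii), the plan is a standard asymptotic-Bayes argument via the limit experiment, following the logic in \cite{HiranoPorter2009,HiranoPorter2020} but adapted to mean square regret. First, I would perform the change of variables $h = \sqrt n(\theta-\theta_0)$ in the Bayes risk, which gives
\begin{equation*}
n\, r_{sq}(\hat\delta_n,\pi) = \int n R_{sq}\bigl(\hat\delta_n, \theta_0 + h/\sqrt n\bigr)\, \pi\bigl(\theta_0 + h/\sqrt n\bigr)\, dh,
\end{equation*}
and use continuity of $\pi$ at $\theta_0$ so that the prior factor converges pointwise to $\pi(\theta_0)$. Next, for any sequence $\hat\delta_n$ satisfying Assumption C, Proposition \ref{prop:limit} produces a limit rule $\hat\delta$ on $\mathbb R^k$ matching the first two local moments, and together with the differentiability of $\tau$ at $\theta_0$ this implies $n R_{sq}(\hat\delta_n,\theta_0+h/\sqrt n)\to R_{sq}^\infty(\hat\delta,h)$ for every $h$. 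Fatou's lemma then yields the lower bound $\liminf_n n r_{sq}(\hat\delta_n,\pi) \ge \pi(\theta_0)\int R_{sq}^\infty(\hat\delta,h)\,dh = r_{sq}^\infty(\hat\delta)$, and part (i) says this is minimized by $\hat\delta_B$.

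It remains to show that the feasible plug-in rule $\hat\delta_{B,F}$ attains the bound. The regularity assumption \eqref{eq:regular} and the delta method give $\sqrt n \tau(\hat\theta) \overset h\rightsquigarrow \dot\tau'\Delta$ with $\Delta\sim N(h,I_0^{-1})$ under $\theta_0 + h/\sqrt n$, and combined with $\hat\sigma_\tau\overset{p}\to\sigma_\tau$ the continuous mapping theorem yields $\hat\delta_{B,F}(Z_n)\overset h\rightsquigarrow \hat\delta_B(\Delta)$ for each $h$. Since $\hat\delta_{B,F}\in[0,1]$ and $n[\tau(\theta_0+h/\sqrt n)]^2 \to (\dot\tau'h)^2$, the integrand $n R_{sq}(\hat\delta_{B,F}, \theta_0+h/\sqrt n)$ converges pointwise to $R_{sq}^\infty(\hat\delta_B,h)$, which completes the verification that $\hat\delta_{B,F}$ matches $\hat\delta_B$ in the limit experiment.

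The main technical obstacle is the passage from pointwise (in $h$) convergence of the normalized local risks to convergence of the integrated Bayes risk. This requires a dominated convergence argument for which I would construct an integrable envelope using the uniform bound $\hat\delta_{B,F}\in[0,1]$, a polynomial bound on $n[\tau(\theta_0+h/\sqrt n)]^2$ on compacta, and a tail bound on the prior $\pi$ together with a quadratic-mean-differentiability tail estimate on $P^n_{\theta_0+h/\sqrt n}$ that kills regret contributions for large $\|h\|$; this step is analogous to the uniform integrability argument in \cite{HiranoPorter2009} but requires handling the squared regret rather than the linear regret, which is where the second-moment part of Assumption C becomes essential.
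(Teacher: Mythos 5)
Your proposal is correct and follows essentially the same route as the paper: part (i) is the paper's argument of applying Theorem \ref{thm:bayes.finite} in the Gaussian limit experiment with a flat prior on $h$, projecting onto the scalar $\dot{\tau}^{\prime}h$ so that the closed form is the same truncated-normal computation underlying Proposition \ref{prop:bayes.flat} (which the paper itself proves by exactly this calculation), and part (ii) is the matching-plus-Hirano--Porter-Lemma-1 argument that the paper invokes, which you simply spell out via the continuous mapping theorem, Fatou for the lower bound, and dominated convergence for attainment. Your closing remarks correctly locate the only technical work (uniform integrability of the squared local regret against the prior, where the second-moment part of Assumption C enters), so no gap beyond the level of detail the paper itself leaves to the cited reference.
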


In the limit, the Bayes optimal rule is a \emph{tilted} posterior probability matching rule with respect to the uninformative prior. Compared to the posterior probability matching rule, the Bayes optimal rule  assigns treatment with a probability closer to zero or one. Compared to the limit minimax optimal rule, the Bayes optimal rule also assigns treatment with a probability close to zero or one. This contrasts with the case of \emph{linear} regret risk, where it is known that the Bayes optimal and minimax optimal rules are the same empirical success rule. See Figure \ref{fig:2} and Table \ref{tab:1} for various rules in a Gaussian limit experiment with unit variance. 
It can be seen that all three fractional rules approach one as $\bar{Y}_{1}$ gets large. 
For sufficiently large positive values of $\bar{Y}_{1}$  (e.g., 2.33), the Bayes and minimax optimal rules are to effectively treat everyone. Even with a modest value of $\bar{Y}_{1} = 0.84$, the Bayes optimal rule recommends a probability of treatment of 0.94, which is quite high when compared with the corresponding probability of $0.8$ recommended by the posterior probability matching rule.
Figures \ref{fig:3}, \ref{fig:4} and  \ref{fig:5} present the mean square regret, mean regret and standard deviation of regret of the optimal rules in the same Gaussian limit experiment with unit variance. We make several observations: firstly, although they admit different forms, our Bayes optimal and minimax optimal rules in the limit experiment exhibit a similar performance in terms of the mean square regret (Figure \ref{fig:3}); secondly, the ES rule is minimax optimal in terms of mean regret (Figure \ref{fig:4}), but its excessive variance (Figure \ref{fig:5}) in those states where mean regret is high implies that it is not optimal in terms of mean square regret.

\begin{table}[htbp]
    \centering
    \begin{tabular}{ccccc}
\toprule 
\multirow{2}{*}{$\bar{Y}_{1}$} & Minimax  & Bayes & Posterior probability  & \multirow{2}{*}{ES rule}\tabularnewline
 & optimal rule &  optimal rule & matching rule (flat prior) & \tabularnewline
\midrule
0 & 0.5 & 0.5 & 0.5 & $[0,1]$\tabularnewline
0.2533 & 0.6507 & 0.6920 & 0.6 & 1\tabularnewline
0.5244 & 0.7838 & 0.8430 & 0.7 & 1\tabularnewline
0.8416 & 0.8877 & 0.9379 & 0.8 & 1\tabularnewline
1.2816 & 0.9588 & 0.9851 & 0.9 & 1\tabularnewline
1.6449 & 0.9827 & 0.9958 & 0.95 & 1\tabularnewline
2.3263 & 0.9967 & 0.9997 & 0.99 & 1\tabularnewline
\bottomrule
\end{tabular}
    \caption{Treatment assignment probabilities in the Gaussian limit experiment with unit variance}
    \label{tab:1}
\end{table}

\begin{figure}[htbp]
    \centering
    \includegraphics[scale=0.4]{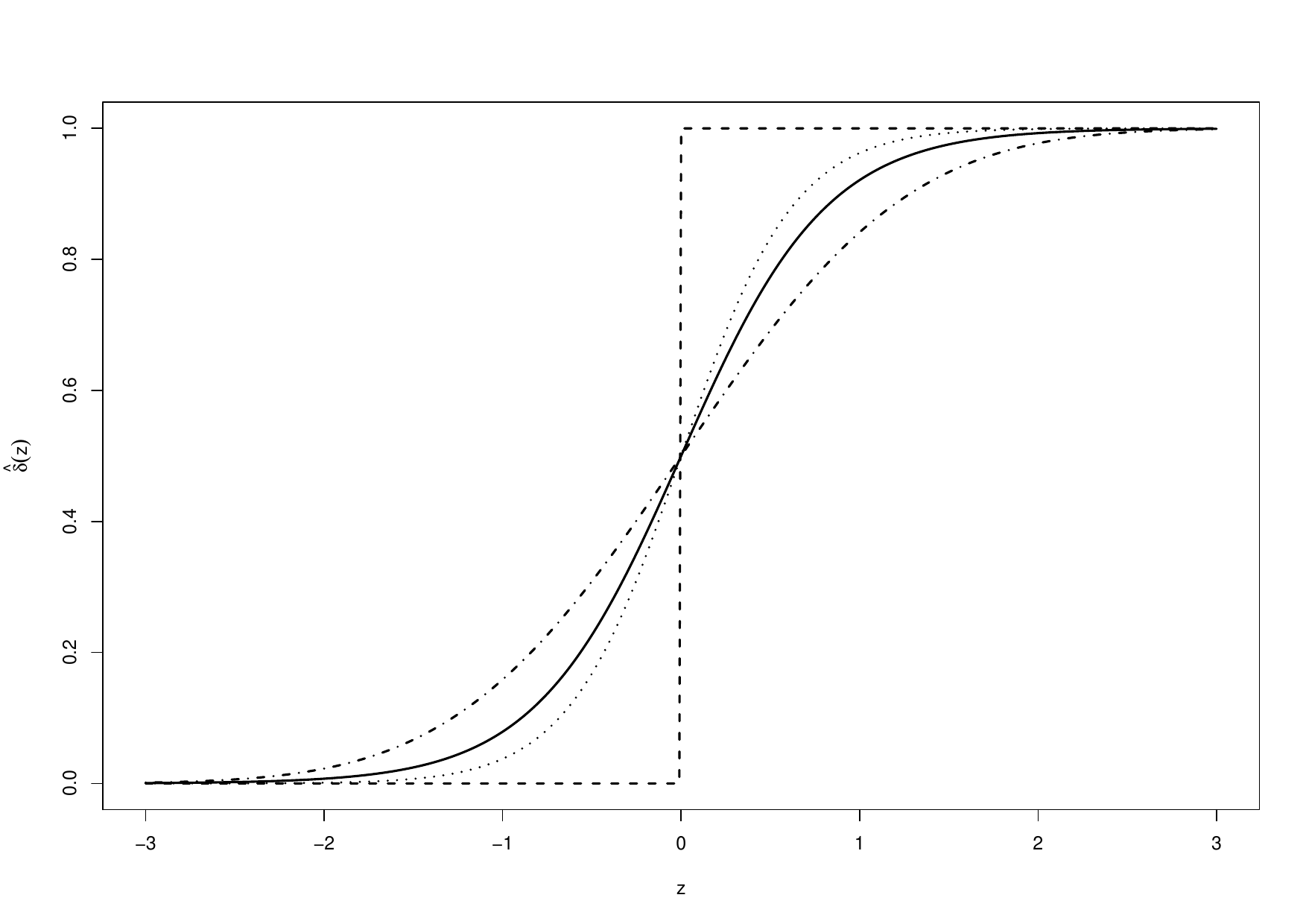}
    \caption{Optimal rules in the Gaussian limit experiment with unit variance. Solid line: minimax optimal rule for mean square regret; Dotted line: Bayes optimal rule for mean square regret; Dot-dashed line: posterior probability matching rule with respect to a flat prior; Dashed line: Empirical success (ES) rule. }
    \label{fig:2}
        \includegraphics[scale=0.4]{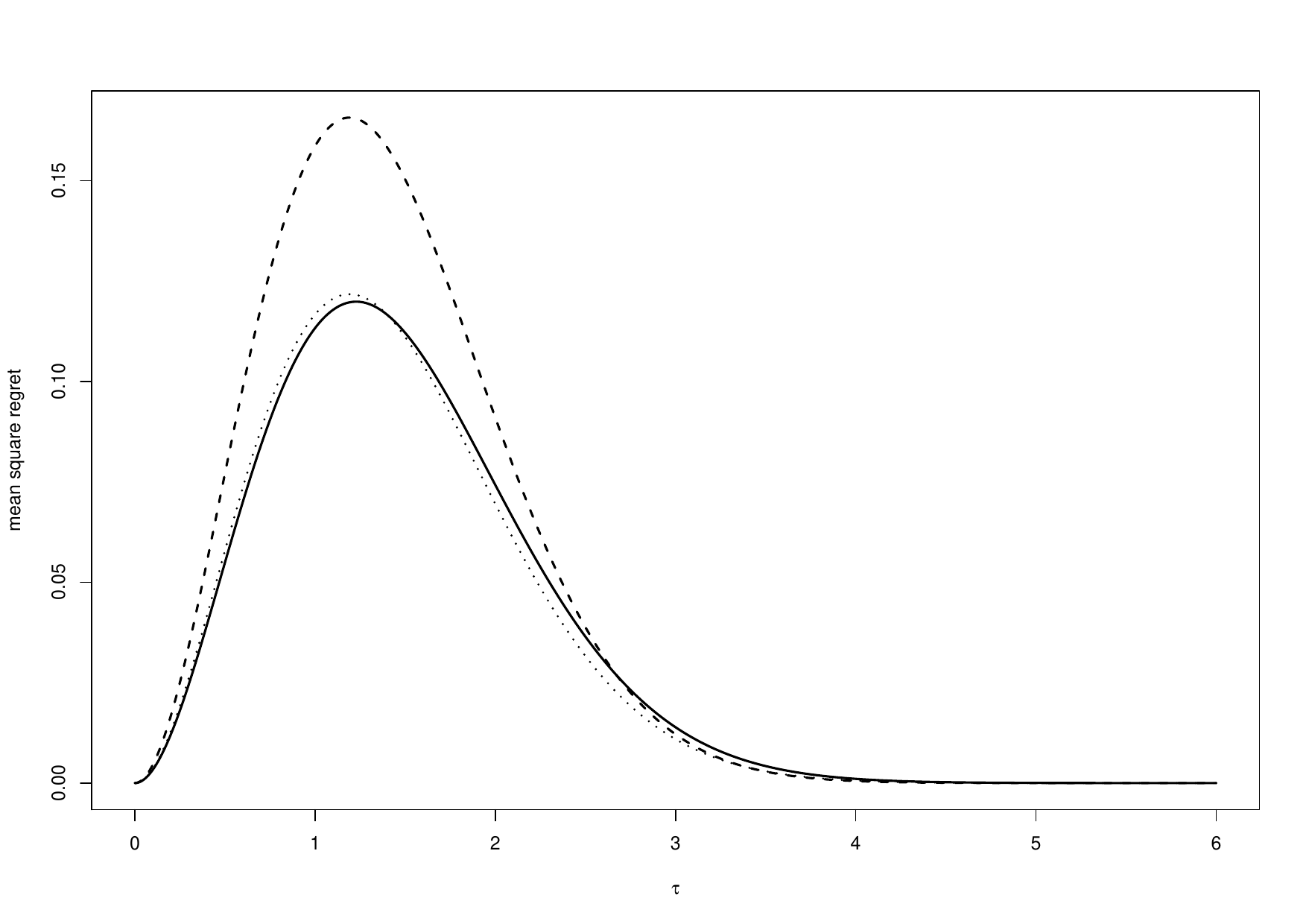}
    \caption{Mean square regret in the Gaussian limit experiment with unit variance. Solid line: minimax optimal rule; Dotted line: Bayes optimal rule with respect to a flat prior; Dashed line: ES rule. }
    \label{fig:3}
\end{figure}

\begin{figure}[htbp]
    \centering
    \includegraphics[scale=0.4]{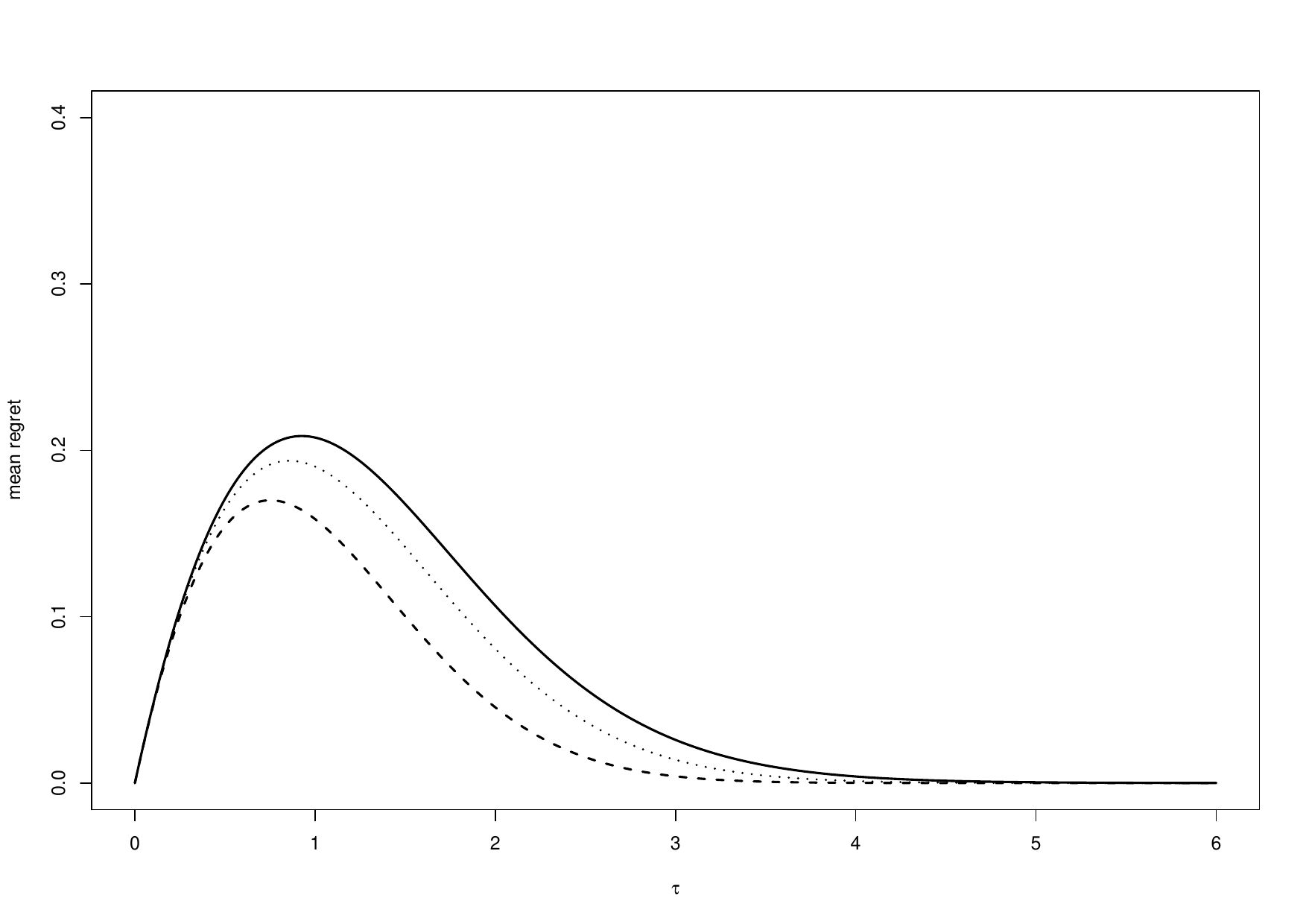}
    \caption{Mean regret in the Gaussian limit experiment with unit variance. Solid line: minimax optimal rule; Dotted line: Bayes optimal rule with respect to a flat prior; Dashed line: ES rule. }
    \label{fig:4}
    \includegraphics[scale=0.4]{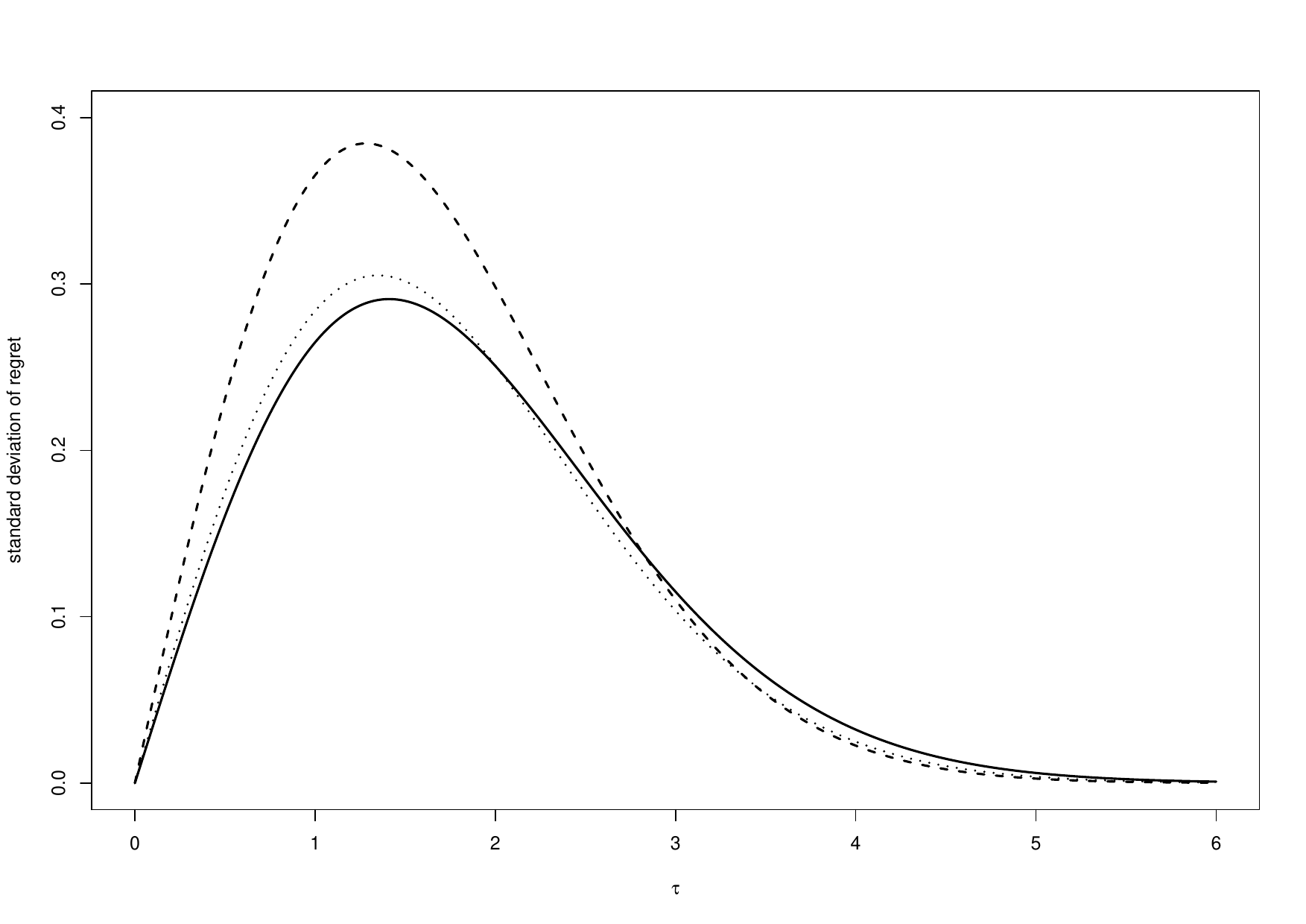}
    \caption{Standard deviation of regret in the Gaussian limit experiment with unit variance. Solid line: minimax optimal rule; Dotted line: Bayes optimal rule with respect to a flat prior; Dashed line: ES rule. }
    \label{fig:5}
\end{figure}

\section{Applications}\label{sec:applications}

\subsection{Treatment choice in a normal regression model}\label{sec:app.normal.regression}

Consider the following normal regression model frequently used by applied researchers:
\begin{equation}
Y=\tau D+\beta^{\prime}X+e,\quad e\sim N(0,\sigma^{2}),\label{eq:normal.regression}
\end{equation}
where $Y$ is the outcome variable, $D$ is the binary treatment and
$X$ is a vector of covariates (including the intercept). Suppose the treatment effect is homogeneous.
Then, the parameter $\tau\in\mathbb{R}$ is the population average
treatment effect. Let $\theta=(\tau,\beta^{\prime})^{\prime}$ and
$Z=(D,X^{\prime})^{\prime}$. (\ref{eq:normal.regression}) implies
that the conditional density of $Y$ given $Z$ follows the parametric form
\[
f(y|z)=\frac{1}{\sqrt{2\pi\sigma^{2}}}\exp\left(-\frac{1}{2\sigma^{2}}(y-\theta^{\prime}z)^{2}\right).
\]

For now, assume the variance term $\sigma^{2}$ is known to focus on the finite-sample analysis. Given
a random sample $\left\{ (Y_{i},Z_{i}^{\prime})^{\prime}\right\} _{i=1}^{n}$, the MLE estimator for $\theta$ is 
\begin{equation}
\hat{\theta}=\left(\sum_{i=1}^{n}Z_{i}Z_{i}^{\prime}\right)^{-1}\left(\sum_{i=1}^{n}Z_{i}Y_{i}\right),\label{eq.normal.mle.1}
\end{equation}
the usual OLS estimator. Let $\hat{\tau}:=\hat{\theta}_{1}$, the
first entry of $\hat{\theta}$.
It follows by standard algebra that 
\[
\hat{\tau}\mid Z_{1},\ldots,Z_{n}\sim N\left(\tau,\sigma^{2}\left[\left(\sum_{i=1}^{n}Z_{i}Z_{i}^{\prime}\right)^{-1}\right]_{11}\right),
\]
where $\left[M\right]_{ij}$ denotes the $(i,j)$th entry of matrix
$M$. By Theorem \ref{thm:minimax}, the finite sample minimax optimal rule is 
\[
\hat{\delta}^{*}=\frac{\exp\left(\frac{2\tau^{*}\hat{\tau}}{\sqrt{\sigma^{2}\left[\left(\sum_{i=1}^{n}Z_{i}Z_{i}^{\prime}\right)^{-1}\right]_{11}}}\right)}{\exp\left(\frac{2\tau^{*}\hat{\tau}}{\sqrt{\sigma^{2}\left[\left(\sum_{i=1}^{n}Z_{i}Z_{i}^{\prime}\right)^{-1}\right]_{11}}}\right)+1},
\]
where $\tau^{*}$ is defined in Theorem \ref{thm:minimax}. Even if $\sigma^{2}$
is unknown, the MLE estimator for $\left(\theta^{\prime},\sigma^{2}\right)^{\prime}$
is $\left(\hat{\theta}^{\prime},\hat{\sigma}^{2}\right)^{\prime}$,
where $\hat{\theta}$ is defined in (\ref{eq.normal.mle.1}), and
\[
\hat{\sigma}^{2}=\frac{1}{n}\sum_{i=1}^{n}\left(Y_{i}-Z_{i}^{\prime}\hat{\theta}\right)^{2}.
\]
Applying Theorem \ref{thm:minimax.feasible}, we may find a feasible asymptotically minimax
optimal treatment rule as
\[
\hat{\delta}_{F}^{*}=\frac{\exp\left(\frac{2\tau^{*}\hat{\tau}}{\sqrt{\hat{\sigma}^{2}\left[\left(\sum_{i=1}^{n}Z_{i}Z_{i}^{\prime}\right)^{-1}\right]_{11}}}\right)}{\exp\left(\frac{2\tau^{*}\hat{\tau}}{\sqrt{\hat{\sigma}^{2}\left[\left(\sum_{i=1}^{n}Z_{i}Z_{i}^{\prime}\right)^{-1}\right]_{11}}}\right)+1}.
\]

Practitioners may report $\hat{\delta}_{F}^{*}$ as an alternative to the P value  associated with $\tau$. Also see Section \ref{sec:discussion.interpretation} for more discussions on this issue.

\subsection{Sample size calculations} \label{sec:samplesize}

In practice, the planner often has a preference for singleton rules
like the empirical success (ES) rule or the hypothesis testing (HT) rule,
and calculates what is a sufficient sample size based on these singleton
rules. In this section we discuss the implications for the efficiency
loss in terms of mean square regret if singleton rules were
implemented instead of our proposed minimax optimal rules. Compared to our minimax optimal rule, these singleton rules often require significantly more data and thus are much less efficient. %For instance, to guarantee the same mean square regret with our minimax optimal rule, ES rule and HT rule demand around 40\% and 1100\% more observations, respectively. 
A similar discussion can be had for the Bayes optimal rule, but we omit this for brevity.

Consider the Gaussian experiment in Example \ref{exa:1}, but suppose now  $\bar{Y}_{1}\sim N(\tau,\frac{\sigma^{2}}{n})$ is the sample average calculated from experimental data
with a sample size of $n$ and known variance $\sigma^{2}>0$. In this case the minimax
optimal rule in terms of mean square regret is 
\[
\hat{\delta}^{*}(\bar{Y}_{1})=\frac{\exp(2\tau^{*}\frac{\sqrt{n}}{\sigma}\bar{Y}_{1})}{\exp(2\tau^{*}\frac{\sqrt{n}}{\sigma}\bar{Y}_{1})+1},
\]
where $\tau^{*}$ solves (\ref{eq:minimax.1}). Given each
$\varepsilon>0$, we can select $n$ such that 
\[
\sqrt{\sup_{\tau\in[0,\infty)}  R_{sq}(\hat{\delta}^{*},P_{\tau})}\leq\varepsilon,
\]
i.e., the square root of the worst case mean square regret does not exceed
$\varepsilon$. The worst case mean square regret
can be calculated as 
\begin{align*}
\sup_{\tau\in[0,\infty)}R_{sq}(\hat{\delta}^{*},P_{\tau})=\left(\frac{\sigma^{2}}{n}\right)R_{sq}^{*}(1),
\end{align*}
where $R_{sq}^{*}(1)\approx0.1199$ is the worst case mean square regret
of the minimax optimal rule in Example \ref{exa:1}. Thus, the worst case mean square regret shrinks
to zero at a rate of $\frac{1}{n}$. In practice, we can choose $\varepsilon$
to be proportional to $\sigma$, e.g., $0.01\sigma$, so that the square
root of the worst case mean square regret does not exceed 1\% of the standard
deviation.

\subsection*{Comparison with the ES rule}

\cite{manski2016sufficient} choose a sufficient sample size for the ES rule
via the $\varepsilon-$optimal approach: a policy $\hat{\delta}$
is $\varepsilon-$optimal if, for all states of the world,
\[
W(\delta^{*})-\mathbb{E}_{P^{n}}[W(\hat{\delta})]\leq\varepsilon,
\]
where $\delta^{*}$ is the infeasible optimal treatment rule or, equivalently,
\begin{equation}
\mathbb{E}_{P^{n}}[Reg(\hat{\delta})]\leq\varepsilon,\label{eq:ES-ss}
\end{equation}
for all states of the world. Given our Gaussian experiment $\bar{Y}_{1}\sim N(\tau,\frac{\sigma^{2}}{n})$, the worst case
mean regret of the ES rule $\widehat{\delta}_{ES}=\mathbf{1}\{\bar{Y}_{1}\geq0\}$
can be calculated exactly as 
\[
\sup_{\tau\in[0,\infty)}\tau\left(1-\Phi\left(\frac{\sqrt{n}\tau}{\sigma}\right)\right)=\frac{\sigma}{\sqrt{n}}\sup_{\tau\in[0,\infty)}\tau\left(1-\Phi\left(\tau\right)\right)=0.1700\frac{\sigma}{\sqrt{n}}.
\]
If the planner has a preference for the ES rule and decides to choose the
sample size so that (\ref{eq:ES-ss}) holds with some $\varepsilon>0$,
then the sample size should be at least
\[
n_{ES}=0.0289\frac{\sigma^{2}}{\varepsilon^{2}}.
\]
The worst case mean square regret of the ES rule, however, is
\begin{align*}
\sup_{\tau\in[0,\infty)}R_{sq}(\widehat{\delta}_{ES},P_{\tau}) & =\frac{\sigma^{2}}{n}R_{sq}^{ES}(1),
\end{align*}
where $R_{sq}^{ES}(1)=\sup_{\tau\in[0,\infty)}\tau^{2}\mathbb{E}_{\bar{Y}_{1}\sim N(\tau,1)}\left[\left(1-\mathbf{1}\{\bar{Y}_{1}\geq0\}\right)^{2}\right]\approx0.1657$.
Hence, at $n_{ES}$, the worst case mean square regret of $\widehat{\delta}_{ES}$
is $\frac{\sigma^{2}}{n_{ES}}0.1657=5.7355\frac{\varepsilon^{2}}{\sigma^{2}}$.
If, instead, the planner uses our minimax optimal rule, she only needs
a sample size of $n^{*}=0.0209\frac{\sigma^{2}}{\varepsilon^{2}}$
for the worst case mean square regret not to exceed $5.7355\frac{\varepsilon^{2}}{\sigma^{2}}$.
Thus, to guarantee the same worst case mean square regret, the ES rule requires
nearly 40\% more observations than our minimax optimal rule.

\subsection*{Comparison with the HT rule}

Practitioners who prefer the HT rule often select sample size by balancing
Type \mbox{I} and \mbox{II} errors. In the Gaussian experiment $\bar{Y}_{1}\sim N(\tau,\frac{\sigma^{2}}{n})$,
if the planner uses a size $\alpha$ ($<0.5$) HT rule
\[
\hat{\delta}_{HT}=\mathbf{1}\left\{ \frac{\sqrt{n}\bar{Y}_{1}}{\sigma}\geq z_{(1-\alpha)}\right\} ,
\]
where $z_{(1-\alpha)}$ is the $(1-\alpha)$ quantile of a standard
normal, then it is common for her to select  sample size so that the power of the
test is at least $\beta$ ($>0.5$), i.e., under the alternative $\tau>0$, the
probability of rejection is
\[
\text{Pr}\left\{ \frac{\bar{Y}_{1}-\tau}{\frac{\sigma}{\sqrt{n}}}>z_{(1-\alpha)}-\frac{\tau}{\frac{\sigma}{\sqrt{n}}}\right\} =\beta.
\]
Then the sample size should be at least
\[
n_{HT}=\frac{\sigma^{2}}{\tau^{2}}(z_{(1-\alpha)}-z_{(1-\beta)})^{2}.
\]
At this $n_{HT}$, we can also calculate the worst case mean square regret of the HT rule,
which is approximately $\frac{\tau^{2}}{(z_{(1-\alpha)}-z_{(1-\beta)})^{2}}1.4458.$
However, at this $n_{HT}$, the worst case mean square regret of our minimax
rule is only $0.1199\frac{\sigma^{2}}{n_{HT}}=0.1199\frac{\tau^{2}}{(z_{(1-\alpha)}-z_{(1-\beta)})^{2}}$.
That is to say, with the same sample size $n_{HT}$, our minimax optimal rule
guarantees that the worst case mean square regret is only around 8.3\%
of the corresponding value for the HT rule. Equivalently, to guarantee the same worst case mean
square regret, the HT rule requires around 11 times more observations than our minimax optimal rule.

\section{Conclusions}\label{sec:conclude}
Our paper proposes a novel approach to measure the performance of statistical decision rules by considering a nonlinear transformation of regret. Such a shift of criterion can incorporate other features of the regret distribution (e.g., second- or higher-order moments and tail probabilities) into the decision-making process, and yields optimal rules that are drastically different from the existing literature. For a large class of nonlinear transformations, optimal rules are fractional, allocating only a proportion of the population to the treatment. For the mean square regret criterion, we also derive Bayes optimal and minimax optimal rules both for finite Gaussian samples and in asymptotic limit experiments. These rules have a simple and insightful form, and can be calculated easily by practitioners. As an extension,  \cite*{kitagawa2023treatment} apply our mean square regret criterion to study optimal treatment choice problems when the welfare is only partially identified, and find that the fractional nature  and the fundamental form of the minimax optimal rules found in our paper is preserved.

Our approach suggests that decision makers may display \emph{regret aversion}, a notion related to but different from ambiguity aversion \citep{klibanoff2005smooth,denti2022model}. In particular, our nonlinear regret criteria can find their counterparts in decision theory from the work of \cite{hayashi2008regret} and thus are justified in terms of its microeconomic foundation. 

Since our rules are always fractional, they naturally provide a degree of confidence in the performance of the treatment versus control. In that sense, our rule is  useful for practitioners even outside the treatment choice paradigm. Implementing our rules also has the additional benefit of getting more data from randomized experiments that can be helpful for the inference of treatment effect, which would not be possible if singleton rules were implemented. 

%In general, it is not an easy task to precisely map axiomatic results in decision theory without sample data to criteria in statistical decision theory with sample data. We leave the full reconciliation of the two for future research.

%\newpage{}

\appendix

\section{\label{sec:App.A} Proofs of main results}

\subsection*{Proof of Theorem \ref{thm:incomplete}}
By Lemma \ref{lem:admit.1}, there exists a rule $\tilde{\delta}\in\mathcal{D}$ that is not dominated in $\{P_1,P_2\}$, and is such that 
\begin{equation}
\begin{aligned}
\text{for each } \hat{\delta}_{-\infty}\in\mathcal{D}_{S}^2,\text { it holds }R_{g}(\tilde{\delta},P)\neq R_{g}(\hat{\delta}_{-\infty},P)\text{ for some }P\in\{P_{1},P_{2}\},\\
\text{and for each } \hat{\delta}_{\infty}\in\mathcal{D}_{S}^3, \text{ it holds }R_{g}(\tilde{\delta},P)\neq R_{g}(\hat{\delta}_{\infty},P)\text{ for some }P\in\{P_{1},P_{2}\}.
\end{aligned}
\label{eq:different.risk.profile}
\end{equation}
Suppose, on the contrary, that $\mathcal{D}_{S}$ is essentially complete
in $\mathcal{P}$. Then, there must exist some $\hat{\delta}_{S}\in\mathcal{D}_{S}$
such that 
\[
R_{g}(\hat{\delta}_{S},P)\leq R_{g}(\tilde{\delta},P),\text{ for all }P\in\mathcal{P},
\]
and in particular, 
\begin{equation}
R_{g}(\hat{\delta}_{S},P)\leq R_{g}(\tilde{\delta},P),\text{ for each }P\in\left\{ P_{1},P_{2}\right\} ,\label{eq:incomplete.4}
\end{equation}
as $\left\{ P_{1},P_{2}\right\} \subset \mathcal{P}$. Below, we show that no
rule in $\mathcal{D}_{S}$ can satisfy (\ref{eq:incomplete.4}), forming a contradiction and completing the proof. 

First of all, note both $\mathcal{D}_S^2$ and $\mathcal{D}_S^3$ are non-empty, as $\mathcal{D}_S^2$ must contain the rule that treats everyone in the population irrespective of data, and  $\mathcal{D}_S^3$  must contain the rule that no one in the population irrespective of data.

Now, suppose $\mathcal{D}_S^1$ is empty. Then, $\mathcal{D}_S=\mathcal{D}_S^2\cup\mathcal{D}_S^3$.  However, no rule in  $\mathcal{D}_S^2$ or $\mathcal{D}_S^3$ can
satisfy (\ref{eq:incomplete.4}). To this this, suppose $\hat{\delta}_{-\infty}\in\mathcal{D}_S^2$
did satisfy (\ref{eq:incomplete.4}). Then, it must hold (as $\tilde{\delta}$
is not dominated in $\{P_{1},P_{2}\}$)
\[
R_{g}(\hat{\delta}_{-\infty},P)=R_{g}(\tilde{\delta},P),\text{ for each }P\in\{P_{1},P_{2}\},
\]
which violates (\ref{eq:different.risk.profile}). Similarly, if any $\hat{\delta}_{\infty}\in\mathcal{D}_S^3$
did satisfy (\ref{eq:incomplete.4}), it also must hold
\[
R_{g}(\hat{\delta}_{\infty},P)=R_{g}(\tilde{\delta},P),\text{ for each }P\in\{P_{1},P_{2}\},
\]
which violates (\ref{eq:different.risk.profile}) too.  Thus, a contradiction
is formed  and we conclude that $\mathcal{D}_{S}=\mathcal{D}_S^2\cup\mathcal{D}_S^3$ is not essentially
complete in $\mathcal{P}$.

Lastly, consider the case when $\mathcal{D}_S^1$ is not empty. Then, $\mathcal{D}_S=\mathcal{D}_S^1\cup\mathcal{D}_S^2\cup\mathcal{D}_S^3$. Clearly, the previous arguments still apply and no rule in $\mathcal{D}_S^2\cup\mathcal{D}_S^3$ satisfies (\ref{eq:incomplete.4}). Furthermore, by Lemma \ref{lem:main},  each $\hat{\delta}_{S}\in\mathcal{D}_{S}^1$  is also dominated in $\left\{ P_{1},P_{2}\right\}$. If any of the rule $\hat{\delta}_{S}\in\mathcal{D}_{S}^1$  did satisfy \eqref{eq:incomplete.4}, then $\tilde{\delta}$ must be dominated in $\left\{ P_{1},P_{2}\right\}$  as well due to \eqref{eq:incomplete.4}, which violates the fact that $\tilde{\delta}$ is not dominated in $\left\{ P_{1},P_{2}\right\}$. Thus, a contradiction
is also formed and we conclude that $\mathcal{D}_{S}$ is not essentially
complete in $\mathcal{P}$.

\subsection*{Proof of Lemma \ref{lem:main}}
Given each $\hat{\delta}_{S}\in\mathcal{D}_{S}^1$, consider
the fractional rule
$\hat{\delta}_{S,\lambda}:=(1-\lambda)\hat{\delta}_{S}+\lambda(1-\hat{\delta}_{S})$
 for some $0<\lambda<1$. When $\hat{\delta}_{S,\lambda}$ is implemented,
the regret of $\hat{\delta}_{S,\lambda}$ is 
\begin{align*}
Reg(\hat{\delta}_{S,\lambda}) & =\tau\left[1\{\tau\geq0\}-(1-\lambda)\hat{\delta}_{S}-\lambda(1-\hat{\delta}_{S})\right]\\
 & =(1-\lambda)Reg(\hat{\delta}_{S})+\lambda Reg(1-\hat{\delta}_{S}).
\end{align*}
The nonlinear regret risk of $\hat{\delta}_{S,\lambda}$ is 
\[
R_{g}(\hat{\delta}_{S,\lambda},P)=\mathbb{E}_{P^{n}}\left[g\left((1-\lambda)Reg(\hat{\delta}_{S})+\lambda Reg(1-\hat{\delta}_{S})\right)\right].
\]
 The derivative of $R_{g}(\hat{\delta}_{S,\lambda},P)$ with respect
to $\lambda$ at each $\lambda>0$ is
\begin{align*}
\frac{\partial R_{g}(\hat{\delta}_{S,\lambda},P)}{\partial\lambda} & =\mathbb{E}_{P^{n}}\left[g'((1-\lambda)Reg(\hat{\delta}_{S})+\lambda Reg(1-\hat{\delta}_{S}))\left(Reg(1-\hat{\delta}_{S})-Reg(\hat{\delta}_{S})\right)\right]\\
 & =\mathbb{E}_{P^{n}}\left[g'((1-\lambda)Reg(\hat{\delta}_{S})+\lambda Reg(1-\hat{\delta}_{S}))\tau(2\hat{\delta}_{S}-1)\right]\\
 & =\tau\left[g'((1-\lambda)Reg(1)+\lambda Reg(0))\mathbb{P}_{P^{n}}(\hat{\delta}_{S}=1)\right]\\
 & -\tau\left[g'((1-\lambda)Reg(0)+\lambda Reg(1))\mathbb{P}_{P^{n}}(\hat{\delta}_{S}=0)\right].
\end{align*}
As $\hat{\delta}_{S}=\hat{\delta}_{S,0}$, it also holds $R_{g}(\hat{\delta}_{S},P)=R_{g}(\hat{\delta}_{S,0},P)$.
Thus, by Taylor's theorem, we may write $R_{g}(\hat{\delta}_{S,\lambda},P)$
as
\begin{align*}
R_{g}(\hat{\delta}_{S,\lambda},P) & =R_{g}(\hat{\delta}_{S},P)+\frac{\partial R_{g}(\hat{\delta}_{S,\lambda},P)}{\partial\lambda}\mid_{\lambda=\tilde{\lambda}}\cdotp\lambda,
\end{align*}
where $\tilde{\lambda}:=\tilde{\lambda}(\lambda,\hat{\delta}_{S},P)$
is a number between $0$ and $\lambda$ that depends on $\lambda$,
$\hat{\delta}_{S}$ and $P$.

If $P=P_{1}$, then $\tau(P_{1})=c>0$, $Reg(1)=0$, $Reg(0)=c$. As $\hat{\delta}_{S}\in\mathcal{D}_{S}^1$,
note $p_{c,\hat{\delta}_{S}}^{+}:=\mathbb{P}_{P_1^{n}}(\hat{\delta}_{S}=1)<1$ and   $\frac{1-p_{c,\hat{\delta}_{S}}^{+}}{p_{c,\hat{\delta}_{S}}^{+}}>0$. Then,
\[
\begin{aligned}\frac{\partial R_{g}(\hat{\delta}_{S,\lambda},P_{1})}{\partial\lambda}\mid_{\lambda=\tilde{\lambda}} & =c\left[g'(\tilde{\lambda}c)\mathbb{P}_{P_1^{n}}(\hat{\delta}_{S}=1)\right]-c\left[g'((1-\tilde{\lambda})c)\mathbb{P}_{P_1^{n}}(\hat{\delta}_{S}=0)\right]\\
 & =c\left[g'(\tilde{\lambda}c)p_{c,\hat{\delta}_{S}}^{+}-g'((1-\tilde{\lambda})c)(1-p_{c,\hat{\delta}_{S}}^{+})\right]<0\\
\end{aligned}
\]
if $\frac{g'(\tilde{\lambda}c)}{g'((1-\tilde{\lambda})c)}<\frac{1-p_{c,\hat{\delta}_{S}}^{+}}{p_{c,\hat{\delta}_{S}}^{+}}.$
Since $0\leq\tilde{\lambda}\leq\lambda$, $\tilde{\lambda}\rightarrow0$
as $\lambda\rightarrow0$. Also, $g'(\cdot)$ is continuously differentiable
on $\mathbb{R}^{+}$ and $c>0$, implying $\lim_{\lambda\rightarrow0}\frac{g'(\tilde{\lambda}c)}{g'((1-\tilde{\lambda})c)}=\frac{g'(0)}{g'(c)}=0$.
Thus, there exists some $0<\lambda<1$ small enough such that $\frac{g'(\tilde{\lambda}c)}{g'((1-\tilde{\lambda})c)}<\frac{1-p_{c,\hat{\delta}_{S}}^{+}}{p_{c,\hat{\delta}_{S}}^{+}}$, so that it holds
\begin{equation}
R_{g}(\hat{\delta}_{S,\lambda},P_{1})-R_{g}(\hat{\delta}_{S},P_{1})=\frac{\partial R_{g}(\hat{\delta}_{S,\lambda},P_{1})}{\partial\lambda}\mid_{\lambda=\tilde{\lambda}}\cdotp\lambda<0.\label{eq:incomplete.2}
\end{equation}
If $P=P_{2}$, $\tau(P)=-c<0$, $Reg(1)=c$, $Reg(0)=0$. Write $p_{c,\hat{\delta}_{S}}^{-}:=\mathbb{P}_{P_2^{n}}(\hat{\delta}_{S}=1)>0$. Then,
\[
\begin{aligned}\frac{\partial R_{g}(\hat{\delta}_{S,\lambda},P_{2})}{\partial\lambda}\mid_{\lambda=\tilde{\lambda}} & =c\left[g'(\tilde{\lambda}c)(1-p_{c,\hat{\delta}_{S}}^{-})-g'((1-\tilde{\lambda})c)p_{c,\hat{\delta}_{S}}^{-}\right].\end{aligned}
\]
Analogous arguments show that by choosing $0<\lambda<1$ small enough,
we have $\frac{g'(\tilde{\lambda}c)}{g'((1-\tilde{\lambda})c)}<\frac{p_{c,\hat{\delta}_{S}}^{-}}{1-p_{c,\hat{\delta}_{S}}^{-}}$
and
\begin{equation}
R_{g}(\hat{\delta}_{S,\lambda},P_{2})-R_{g}(\hat{\delta}_{S},P_{2})=\frac{\partial R_{g}(\hat{\delta}_{S,\lambda},P_{2})}{\partial\lambda}\mid_{\lambda=\tilde{\lambda}}\cdotp\lambda<0.\label{eq:incomplete.3}
\end{equation}
We then pick $\lambda>0$ small enough so that both (\ref{eq:incomplete.2})
and (\ref{eq:incomplete.3}) hold, implying 
\[
R_{g}(\hat{\delta}_{S,\lambda},P)<R_{g}(\hat{\delta}_{S},P),\text{ for each }P\in\left\{ P_{1},P_{2}\right\}.
\]

\subsection*{Proof of Theorem \ref{thm:inadmissibility}}
First note, for any rule $\hat{\delta}$ that is a function of $\hat{\tau}$
(including $\hat{\delta}_{t}$), we can rewrite their nonlinear
regret risk $R_{g}(\hat{\delta},P)$ as $R_{g}(\hat{\delta},\tau)$, since
the nonlinear regret risk of $\hat{\delta}$ depends on $P$ only
via $\tau$. Given a singleton rule $\hat{\delta}_{t}$, Lemma
\ref{lem:admit.2} establishes the existence of a fractional rule that dominates $\hat{\delta}_{t}$. The conclusion of
the theorem directly follows. 

\subsection*{Proof of Theorem \ref{thm:bayes.finite}}
As $g:\mathbb{R}^{+}\to\mathbb{R}^{+}$ and condition (i) of Theorem \ref{thm:bayes.finite} holds, it is straightforward to show (see, for example, Theorem 4.1.1 in \cite{lehmann2006theory}) that the Bayes optimal rule  $\hat{\delta}_{\pi}$ is such that 
\begin{equation}
\hat{\delta}_{\pi}\in \min_{\hat{\delta}\in[0,1]}\int g(Reg(\hat{\delta}))d\pi(P|z_{n}),\text{ for almost all }z_{n}\in\mathbf{Z}_{n},\label{eq:bayes.1}
\end{equation}
provided the solution of (\ref{eq:bayes.1}) exists for almost all $z_{n}\in\mathbf{Z}_{n}$.

%for any $\tilde{\delta}$ with finite
%nonlinear regret risk, it holds  
%\[
%\int g(Reg(\tilde{\delta}))d\pi(P|z_{n})
%\]
%is finite for almost all $z_{n}\in\mathbf{Z}_{n}$. Thus
%\begin{equation}
%\int g(Reg(\hat{\delta}_{\pi}))d\pi(P|z_{n})\leq\int g(Reg(\tilde{\delta}))d\pi(P|z_{n}),\text{ \text{ for almost all }\ensuremath{z_{n}\in\mathbf{Z}_{n}} .}\label{pf:prop.3.1.2}
%\end{equation}
%Taking expectations on both sides of (\ref{pf:prop.3.1.2}) with respect to $Z_{n}$ yields
%statement (1).

Then the existence of $\hat{\delta}_{\pi}$ follows from continuity of the objective function (\ref{eq:bayes.1}) in $\hat{\delta}\in[0,1]$, which itself follows from the fact that $g$ is continuously differentiable. To see $0<\hat{\delta}_{\pi}<1$
for almost all $z_{n}\in\mathbf{Z}_{n}$, note $g^{\prime}(\tau)>0$
for all $\tau>0$ because $g$ is strictly increasing on $\mathbb{R}^+$ by Assumption G. Thus,
\begin{align*}
 & \left[\frac{\partial}{\partial\hat{\delta}}\int g(Reg(\hat{\delta}))d\pi(P|z_{n})\right]_{\hat{\delta}\searrow0}\\
= & -\int\left[\tau(P)g^{\prime}(\tau(P)\mathbf{1}\{\tau(P)\geq0\})\right]d\pi(P|z_{n})\\
= & -\left[\int_{P\in\mathcal{P}:\tau(P)>0}\left[\tau(P)g^{\prime}(\tau(P))\right]d\pi(P|z_{n})+g^{\prime}(0)\int_{P\in\mathcal{P}:\tau(P)<0}\tau(P)d\pi(P|z_{n})\right]\\
= & -\left[\int_{P\in\mathcal{P}:\tau(P)>0}\left[\tau(P)g^{\prime}(\tau(P))\right]d\pi(P|z_{n})\right]<0,
\end{align*}
where the last inequality follows from Assumption G and condition (ii).
Similarly,
\begin{align*}
 & \left[\frac{\partial}{\partial\hat{\delta}}\int g(Reg(\hat{\delta}))d\pi(P|z_{n})\right]_{\hat{\delta}\nearrow1}\\
= & -\int\left[\tau(P)g^{\prime}(\tau(P)(\mathbf{1}\{\tau(P)\geq0\}-1))\right]d\pi(P|z_{n})\\
= & -\left[g^{\prime}(0)\int_{P\in\mathcal{P}:\tau(P)>0}\tau(P)d\pi(P|z_{n})+\int_{P\in\mathcal{P}:\tau(P)<0}\left[g^{\prime}(-\tau(P))\tau(P)\right]d\pi(P|z_{n})\right]\\
= & -\int_{P\in\mathcal{P}:\tau(P)<0}\left[g^{\prime}(-\tau(P))\tau(P)\right]d\pi(P|z_{n})>0.
\end{align*}
 The above calculations imply that we can always reduce $\int g(Reg(\hat{\delta}))d\pi(P|z_{n})$ by moving $\hat{\delta}$ away from both 0 and 1 and toward an interior point. Therefore, $\hat{\delta}_{\pi}$ must be such that
$0<\hat{\delta}_{\pi}<1$, for almost all $z_{n}\in\mathbf{Z}_{n}$. Then, (\ref{eq:bayes.2}) follows from the first order condition for (\ref{eq:bayes.1}).

\subsection*{Proof of Proposition \ref{prop:bayes.flat}}
The proof is similar to the proof of statement (i) of Theorem \ref{thm:bayes.feasible} and thus omitted.

\subsection*{Proof of Theorem \ref{thm:minimax}}

We split the proof into three steps by adopting the `guess-and-verify' approach.

\emph{Step 1}: Guess a least favorable prior. 
Note the worst case mean
square regret of a minimax optimal rule is
\begin{equation}
\sup_{\tau\in\mathbb{R}}R_{sq}(\hat{\delta}^{*},P_{\tau}),\label{pf:main.1}
\end{equation}
where 
\begin{align*}
 R_{sq}(\hat{\delta}^{*},P_{\tau})
 & =\tau^{2}\mathbb{E}[1\{\tau\geq0\}-\hat{\delta}^{*}(\bar{Y}_{1})]^{2}\\
 & =\begin{cases}
\tau^{2}\mathbb{E}\left[1-\hat{\delta}^{*}(\bar{Y}_{1})\right]^{2} & \tau>0,\\
0 & \tau=0,\\
\tau^{2}\mathbb{E}\left[\hat{\delta}^{*}(\bar{Y}_{1})\right]^{2} & \tau<0.
\end{cases}
\end{align*}

By Lemma \ref{lem:1}, the support of the solution of (\ref{pf:main.1})
never contains zero. In Lemma \ref{lem:2}, we show that the support of the solution
of (\ref{pf:main.1}) must be symmetric, i.e., if the support of the solution
of (\ref{pf:main.1}) contains $\tau$ for some $0<\tau<\infty$, it must
also contain $-\tau$. Therefore, we conjecture that the least favorable
prior $\pi^{*}$ is two-point supported. Moreover, Lemma \ref{lem:3} shows that for a symmetric two-point prior to be least favorable, each point is equally likely to be realised. Thus, our guess for the least favorable prior
$\pi^{*}$ is such that
\[
\pi^{*}\left(\tau\right)=\frac{1}{2},\pi^{*}\left(-\tau\right)=\frac{1}{2}, \text{ for some } 0<\tau<\infty.
\]

\emph{Step 2}: Derive the Bayes optimal rule associated with the hypothesized least favorable prior. For each
$0<\tau<\infty$, let $\hat{\delta}_{\pi_{\tau}^{*}}$ be the Bayes
optimal rule with respect to the two-point symmetric prior
\[
\pi_{\tau}^{*}\left(\tau\right)=\frac{1}{2} \; \text{ and } \; \pi_{\tau}^{*}\left(-\tau\right)=\frac{1}{2}.
\]
Within the above set of candidate least favorable priors, we show: (1) the Bayes optimal rules admit the form $\hat{\delta}_{\pi^{*}_{\tau}}(\bar{Y_{1}})=\frac{\exp\left(2\tau\bar{Y}_{1}\right)}{\exp\left(2\tau\bar{Y}_{1}\right)+1}$; (2)  $r_{sq}(\hat{\delta}_{\pi^{*}_{\tau}},\pi^{*}_{\tau})$  follows the form in (\ref{eq:minimax.1}), and is equivalent to the form in (\ref{eq:minimax.2}). Thus, our guess for the least favorable prior is 
\[
\pi^{*}\left(\tau^{*}\right)=\frac{1}{2},\pi^{*}\left(-\tau^{*}\right)=\frac{1}{2},
\]
where $\tau^{*}$ solves (\ref{eq:minimax.1}) or (\ref{eq:minimax.2}).

Indeed, the functional form of $\hat{\delta}_{\pi_{\tau}^{*}}$ is derived by applying Theorem \ref{thm:bayes.finite}, 
\[
\hat{\delta}_{\pi_{\tau}^{*}}(\bar{y}_{1})=\frac{\int\tau^{2}1\{\tau\geq0\}d\pi_{\tau}^{*}(\tau|\bar{y}_{1})}{\int\tau^{2}d\pi_{\tau}^{*}(\tau|\bar{y}_{1})},
\]
where $\pi_{\tau}^{*}(\tau|\bar{y}_{1})$ is the posterior distribution
of $\pi_{\tau}^{*}$ conditional on $\bar{Y}_{1}=\bar{y}_{1}$ and
admits: 
\begin{align*}
\pi_{\tau}^{*}\{\tau|\bar{y}_{1}\}  =\frac{\frac{1}{2}f\{\bar{y}_{1}|\tau\}}{f\{\bar{y}_{1}\}}
 \; \text{ and } \; 
\pi_{\tau}^{*}\{-\tau|\bar{y}_{1}\}  =\frac{\frac{1}{2}f\{\bar{y}_{1}|-\tau\}}{f\{\bar{y}_{1}\}},
\end{align*}
where $f\{\bar{y}_{1}|\tau\}$ is the likelihood of $\tau$, $f\{\bar{y}_{1}|-\tau\}$ is the likelihood of $-\tau$, and $f\{\bar{y}_{1}\}$ is the marginal density
of $\bar{Y}_{1}$. Note $f\{\bar{y}_{1}|\tau\}=\sqrt{\frac{1}{2\pi}}\exp\left(-\frac{1}{2}\left[\left(\bar{y}_{1}-\tau\right)^{2}\right]\right)>0$, $
f\{\bar{y}_{1}|-\tau\}=\sqrt{\frac{1}{2\pi}}\exp\left(-\frac{1}{2}\left[\left(\bar{y}_{1}+\tau\right)^{2}\right]\right)>0$. 
It follows that
\begin{align*}
\hat{\delta}_{\pi_{\tau}^{*}}(\bar{y}_{1}) & =\frac{f\{\bar{y}_{1}|\tau\}}{f\{\bar{y}_{1}|\tau\}+f\{\bar{y}_{1}|-\tau\}}= \frac{\exp\left(2\tau\bar{y}_{1}\right)}{\exp\left(2\tau\bar{y}_{1}\right)+1}.
 %& =\frac{\exp\left(-\frac{1}{2}\left[\left(\bar{y}_{1}-\tau\right)^{2}\right]\right)}{\exp\left(-\frac{1}{2}\left[\left(\bar{y}_{1}-\tau\right)^{2}\right]\right)+\exp\left(-\frac{1}{2}\left[\left(\bar{y}_{1}+\tau\right)^{2}\right]\right)}\\
\end{align*}
Therefore,
%\begin{align}
%r_{sq}(\hat{\delta}_{\pi_{\tau}^{*}},\pi_{\tau}^{*}) & =\frac{1}{2}\tau^{2}\mathbb{E}\left[\frac{1}{\exp\left(2\tau\bar{Y}_{1}\right)+1}\right]\label{pf:thm.1}\\
 %& =\tau^{2}\mathbb{E}\left[\left(\frac{1}{\exp\left(2\tau\bar{Y}_{1}\right)+1}\right)^{2}\right],\label{pf:thm.2}
%\end{align}
%where the expectation is with respect to $\bar{Y}_{1}\sim N(\tau,1)$. 
%To verify (\ref{pf:thm.1})
the Bayes mean square regret of $\hat{\delta}_{\pi_{\tau}^{*}}$ admits the form in (\ref{eq:minimax.1}):
\begin{align*}
r_{sq}(\hat{\delta}_{\pi_{\tau}^{*}},\pi_{\tau}^{*})&=  \frac{1}{2}\tau^{2}\int\left(\frac{f\{\bar{y}_{1}|-\tau\}}{f\{\bar{y}_{1}|\tau\}+f\{\bar{y}_{1}|-\tau\}}\right)^{2}f\{\bar{y}_{1}|\tau\}d\bar{y}_{1}\\
&\;\;\; +  \frac{1}{2}\tau^{2}\int\left(\frac{f\{\bar{y}_{1}|\tau\}}{f\{\bar{y}_{1}|\tau\}+f\{\bar{y}_{1}|-\tau\}}\right)^{2}f\{\bar{y}_{1}|-\tau\}d\bar{y}_{1}\\
&=  \frac{1}{2}\tau^{2}\int\frac{f\{\bar{y}_{1}|-\tau\}f\{\bar{y}_{1}|\tau\}}{\left[f\{\bar{y}_{1}|\tau\}+f\{\bar{y}_{1}|-\tau\}\right]}d\bar{y}_{1}\\
%&=  \frac{1}{2}\tau^{2}\int\frac{\sqrt{\frac{1}{2\pi}}\exp\left(-\frac{1}{2}\left[\left(\bar{y}_{1}-\tau\right)^{2}\right]\right)}{\left[\exp\left(2\tau\bar{y}_{1}\right)+1\right]}d\bar{y}_{1}\\
&=  \frac{1}{2}\tau^{2}\mathbb{E}\left[\frac{1}{\exp\left(2\tau\bar{Y}_{1}\right)+1}\right].
\end{align*}
Since $\tau>0$ and $R_{sq}(\hat{\delta}_{\pi_{\tau}^{*}},P_{\tau}) = \tau^2 \mathbb{E}[1 - \hat{\delta}_{\pi_{\tau}^{*}}(\bar{Y}_1) ]^2$, we see that (\ref{eq:minimax.1}) is equivalent to (\ref{eq:minimax.2}):
\begin{align*}
R_{sq}(\hat{\delta}_{\pi_{\tau}^{*}},P_{\tau}) & =\tau^{2}\int\left[\frac{f\{\bar{y}_{1}|-\tau\}}{f\{\bar{y}_{1}|\tau\}+f\{\bar{y}_{1}|-\tau\}}\right]^{2}f(\bar{y}_{1}|\tau)d\bar{y}_{1}\\
 & =\tau^{2}\int\left[\frac{f\{\bar{y}_{1}|-\tau\}}{f\{\bar{y}_{1}|\tau\}+f\{\bar{y}_{1}|-\tau\}}\right]^{2}\sqrt{\frac{1}{2\pi}}\exp\left(-\frac{1}{2}\left[\left(\bar{y}_{1}-\tau\right)^{2}\right]\right)d\bar{y}_{1}\\
 & =\tau^{2}\int\left[\frac{1}{\exp\left(2\tau\bar{y}_{1}\right)+1}\right]^{2}\sqrt{\frac{1}{2\pi}}\exp\left(-\frac{1}{2}\left[\left(\bar{y}_{1}-\tau\right)^{2}\right]\right)d\bar{y}_{1}\\
 & =\tau^{2}\mathbb{E}\left[\frac{1}{\exp\left(2\tau\bar{Y}_{1}\right)+1}\right]^{2},
\end{align*}
and by a change of variables, 
\begin{align*}
R_{sq}(\hat{\delta}_{\pi_{\tau}^{*}},P_{-\tau}) & =\tau^{2}\int\left(\frac{f\{\bar{y}_{1}|\tau\}}{f\{\bar{y}_{1}|\tau\}+f\{\bar{y}_{1}|-\tau\}}\right)^{2}f\{\bar{y}_{1}|-\tau\}d\bar{y}_{1}\\
&=R_{sq}(\hat{\delta}_{\pi_{\tau}^{*}},P_{\tau}).
\end{align*}

%By Lemma \ref{lem:2}, $R_{sq}(\hat{\delta}_{\pi_{\tau}^{*}},P_{\tau})=R_{sq}(\hat{\delta}_{\pi_{\tau}^{*}},P_{-\tau})$.
%Thus Lemma \ref{lem:3} implies (\ref{pf:thm.2}) must hold as well.

\emph{Step 3}: For our guess of the least favorable prior, Lemma \ref{lem:5} further establishes that Condition \ref{cond:1} holds. Thus, Proposition \ref{prop:minimax} implies that $\hat{\delta}^{*}$ is indeed a minimax optimal rule and the two-point prior $\pi^{*}(\tau^{*})=\pi^{*}(-\tau^{*})=\frac{1}{2}$ is indeed least favorable.

%\subsection*{Proof of Proposition \ref{prop:limit} }

%Note $\hat{\delta}_{n}\in[0,1]$ is uniformly tight. By a similar
%argument in \citet[Theorem 15.1 and Theorem 7.10, ][]{van2000asymptotic}, there
%exists some statistic $\hat{\delta}\in[0,1]$ in experiment $N(h,I_{0}^{-1})$
%such that
%$\hat{\delta}_{n}\overset{h}{\rightsquigarrow}\hat{\delta}$
%along some subsequence of $\hat{\delta}_{n}$ (not depending on $h$)
%under every $h$. Since $\hat{\delta}_{n}\in[0,1]$, both $\hat{\delta}_{n}$
%and $\hat{\delta}_{n}^{2}$ are uniformly bounded. Thus %on the same
%subsequence $\mathbb{E}_{P_{\theta_{0}+\frac{h}{\sqrt{n}}}^{n}}[\hat{\delta}_{n}]\rightarrow\mathbb{E}_{\varDelta\sim N(h,I_{0}^{-1})}[\hat{\delta}(\varDelta)]$,
%and $\mathbb{E}_{P_{\theta_{0}+\frac{h}{\sqrt{n}}}^{n}}[\hat{\delta}_{n}^{2}]\rightarrow\mathbb{E}_{\varDelta\sim N(h,I_{0}^{-1})}[(\hat{\delta}(\varDelta))^{2}]$
%for every $h$. Since both $\mathbb{E}_{P_{\theta_{0}+\frac{h}{\sqrt{n}}}^{n}}[\hat{\delta}_{n}]$
%and $\mathbb{E}_{P_{\theta_{0}+\frac{h}{\sqrt{n}}}^{n}}[\hat{\delta}_{n}^{2}]$
%converge by assumption, it must hold that $\mathbb{E}_{P_{\theta_{0}+\frac{h}{\sqrt{n}}}^{n}}[\hat{\delta}_{n}]\rightarrow\mathbb{E}_{\varDelta\sim N(h,I_{0}^{-1})}[\hat{\delta}(\varDelta)]$
%and $\mathbb{E}_{P_{\theta_{0}+\frac{h}{\sqrt{n}}}^{n}}[\hat{\delta}_{n}^{2}]\rightarrow\mathbb{E}_{\varDelta\sim N(h,I_{0}^{-1})}[(\hat{\delta}(\varDelta))^{2}]$
%for every $h$ as well.

\subsection*{Proof of Theorem \ref{thm:minimax.feasible}}

\subsubsection*{Proof of statement (i)}
Let $\hat{\delta}^{*}$ be a minimax optimal rule in the limit experiment.
That is, $\hat{\delta}^{*}$ solves
\[
\inf_{\hat{\delta}}\sup_{h}R_{sq}^{\infty}(\hat{\delta},h):=R^{*}.
\]
Following \cite{HiranoPorter2009}, consider slicing the parameter
space of $h$ in the following way: define
\[
h_{1}(b,h_{0})=h_{0}+\frac{b}{\dot{\tau}^{\prime}I_{0}^{-1}\dot{\tau}}I_{0}^{-1}\dot{\tau},
\]
where $h_{0}$ is such that $\dot{\tau}^{\prime}h_{0}=0$ (without
loss of generality) and $b\in\mathbb{R}.$ Hence, 
\[
\dot{\tau}^{\prime}h_{1}(b,h_{0})=b.
\]
Note for each $\hat{\delta}\in[0,1]$, the limit regret $Reg_{\infty}(\hat{\delta},h)$ only
depends on $h$ through $\dot{\tau}^{\prime}h$.
Thus, we can consider treatment rules of the form
\[
\hat{\delta}(\varDelta)=\hat{\delta}\left(\dot{\tau}^{\prime}\varDelta\right):=\hat{\delta}_{\tau}(\varDelta_{\tau}),
\]
where $\varDelta_{\tau}:=\dot{\tau}^{\prime}\varDelta\sim N(\dot{\tau}^{\prime}h,\dot{\tau}^{\prime}I_{0}^{-1}\dot{\tau})$.
Let $\hat{\delta}_{\tau}^{*}$ solve the simpler minimax exercise \[\inf_{\hat{\delta}_{\tau}}\sup_{b}R_{sq}^{\infty}(\hat{\delta}_{\tau},h_{1}(b,0))
\] among rules of form $\hat{\delta}_{\tau}(\varDelta_{\tau})$. It follows by Lemma \ref{lem:C1} that $\hat{\delta}_{\tau}^{*}$ is a minimax optimal rule.
Define 
\[
R_{sq}^{\infty}(\hat{\delta}_{\tau},b):=b^{2}\mathbb{E}_{\varDelta_{\tau}\sim N(b,\dot{\tau}^{\prime}I_{0}^{-1}\dot{\tau})}\left[1\left\{ b\geq0\right\} -\hat{\delta}_{\tau}\left(\varDelta_{\tau}\right)\right]^{2}.
\]
Lemma \ref{lem:C2} shows that $\hat{\delta}_{\tau}^{*}$ can be found by
solving $\inf_{\hat{\delta}_{\tau}}\sup_{b}R_{sq}^{\infty}(\hat{\delta}_{\tau},b)$, and Lemma \ref{lem:C3} establishes the the form of $\hat{\delta}_{\tau}^{*}$, which is a minimax optimal rule in the limit experiment.

\subsubsection*{Proof of statement (ii)}
By \citet[Lemma 3,][]{HiranoPorter2009}, $\sqrt{n}\frac{\tau(\hat{\theta})}{\hat{\sigma}_{\tau}}\overset{h}{\rightsquigarrow}N(\frac{\dot{\tau}^{\prime}h}{\sigma_{\tau}},1)$. Furthermore, using the continuous mapping theorem,
\[
\hat{\delta}_{F}^{*}(Z_{n})\overset{h}{\rightsquigarrow}\frac{\exp\left(2\tau^{*}N(\frac{\dot{\tau}^{\prime}h}{\sigma_{\tau}},1)\right)}{\exp\left(2\tau^{*}N(\frac{\dot{\tau}^{\prime}h}{\sigma_{\tau}},1)\right)+1}.
\]
Therefore, $\hat{\delta}_{F}^{*}$ is matched with $\hat{\delta}^{*}$
in the limit experiment in the sense of Proposition \ref{prop:limit}.
The desired conclusion follows via a similar argument to that in \citet[Lemma 4,][]{HiranoPorter2009}.

\subsection*{Proof of Theorem \ref{thm:bayes.feasible}}

\subsubsection*{Proof of statement (i)}
Applying Theorem \ref{thm:bayes.finite} to the limit Bayes mean square criterion $r_{sq}^{\infty}$
yields
\[
\hat{\delta}_{B}(\varDelta)=\frac{\int\left(\dot{\tau}^{\prime}h\right)^{2}\left(\mathbf{1}\left\{ \dot{\tau}^{\prime}h\geq0\right\} \right)d\pi(h|\varDelta)}{\int\left(\dot{\tau}^{\prime}h\right)^{2}d\pi(h|\varDelta)}.
\]
Notice in the limit experiment, $h$ has a flat prior. It follows that
the posterior distribution $\pi(h|\varDelta)$ is proportional to
a normal distribution with mean $\varDelta$ and variance $I_{0}^{-1}$.
Then
\begin{align*}
\hat{\delta}_{B}(\varDelta) & =\frac{\int\left(\dot{\tau}^{\prime}h\right)^{2}\left(\mathbf{1}\left\{ \dot{\tau}^{\prime}h\geq0\right\} \right)dN(h|\varDelta,I_{0}^{-1})}{\int\left(\dot{\tau}^{\prime}h\right)^{2}dN(h|\varDelta,I_{0}^{-1})}\\
 & =\frac{\int s^{2}\left(\mathbf{1}\left\{ s\geq0\right\} \right)dN(s|\dot{\tau}^{\prime}\varDelta,\dot{\tau}^{\prime}I_{0}^{-1}\dot{\tau})}{\int s^{2}dN(s|\dot{\tau}^{\prime}\varDelta,\dot{\tau}^{\prime}I_{0}^{-1}\dot{\tau})}\\
 & =\frac{\int_{s\geq0}s^{2}dN(s|\dot{\tau}^{\prime}\varDelta,\sigma_{\tau}^{2})}{\int s^{2}dN(s|\dot{\tau}^{\prime}\varDelta,\sigma_{\tau}^{2})}\\
 & =\int\mathbf{1}\left\{ s\geq0\right\} dN(s|\dot{\tau}^{\prime}\varDelta,\sigma_{\tau}^{2})\frac{\int s^{2}dN(s|\dot{\tau}^{\prime}\varDelta,\sigma_{\tau}^{2},S\geq0)}{\int s^{2}dN(s|\dot{\tau}^{\prime}\varDelta,\sigma_{\tau}^{2})},
\end{align*}
where $\int s^{2}dN(s|\dot{\tau}^{\prime}\varDelta,\sigma_{\tau}^{2},S\geq0)$
denotes the conditional expectation of a normal random variable $S$
with mean $\dot{\tau}^{\prime}\varDelta$ and variance $\sigma_{\tau}^{2}$
given $S\geq0$. By the properties of the normal distribution and truncated normal
distribution, %{\footnotesize{}
\begin{align*}
\int\mathbf{1}\left\{ s\geq0\right\} dN(s|\dot{\tau}^{\prime}\varDelta,\sigma_{\tau}^{2}) & =\Phi(\frac{\dot{\tau}^{\prime}\varDelta}{\sigma_{\tau}}),\\
\int s^{2}dN(s|\dot{\tau}^{\prime}\varDelta,\sigma_{\tau}^{2}) & =\sigma_{\tau}^{2}+\left(\dot{\tau}^{\prime}\varDelta\right)^{2},\\
\int s^{2}dN(s|\dot{\tau}^{\prime}\varDelta,\sigma_{\tau}^{2},S\geq0) & =\sigma_{\tau}^{2}\left(1-\frac{\dot{\tau}^{\prime}\varDelta}{\sigma_{\tau}}\frac{\phi(\frac{\dot{\tau}^{\prime}\varDelta}{\sigma_{\tau}})}{\Phi(\frac{\dot{\tau}^{\prime}\varDelta}{\sigma_{\tau}})}-\frac{\phi^{2}(\frac{\dot{\tau}^{\prime}\varDelta}{\sigma_{\tau}})}{\Phi^{2}(\frac{\dot{\tau}^{\prime}\varDelta}{\sigma_{\tau}})}\right)+\left(\left(\dot{\tau}^{\prime}\varDelta\right)+\sigma_{\tau}\frac{\phi(\frac{\dot{\tau}^{\prime}\varDelta}{\sigma_{\tau}})}{\Phi(\frac{\dot{\tau}^{\prime}\varDelta}{\sigma_{\tau}})}\right)^{2}.
\end{align*} %}
Statement (i) follows.

\subsubsection*{Proof of statement (ii)}
Similar to the argument in the proof of statement
(ii) of Theorem \ref{thm:minimax.feasible}, $\hat{\delta}_{B,F}^{*}$ is matched with $\hat{\delta}_{B}$
in the limit experiment in the sense of Proposition \ref{prop:limit}.
The conclusion follows via a similar argument to that in \citet[Lemma 1,] []{HiranoPorter2009}.

\bibliographystyle{ecta}
\bibliography{bibliography}

\newpage
\renewcommand{\thepage}{OA-\arabic{page}}
\setcounter{page}{1}

\part*{Appendix for Online Publication}

\section{Comparison with \texorpdfstring{\cite{manski2007admissible}}{Manski and Tetenov (2007)}}\label{sec:app.compare} 

In this section, we clarify the differences between our approach of treatment choice with nonlinear regret criteria and the approach of risk averse welfare criteria taken by \cite{manski2007admissible}. To elaborate, let $f(\cdotp):\mathbb{R}\rightarrow\mathbb{R}$ be a concave function. A concave
transformation of $W(\hat{\delta})$ is $f(W(\hat{\delta}))$.
For the concave transformation $f$, the regret of treatment rule $\hat{\delta}$ defined in terms of \emph{nonlinear welfare} is 
\[
Reg^f(\hat{\delta}) =f(W(\delta^{*}))-f(W(\hat{\delta}))
=f(\mu_{0}+\delta^{*}\tau)-f(\mu_{0}+\hat{\delta}\tau).
\]
In contrast, our paper considers a nonlinear (possibly convex) transformation of regret measured in terms of the original welfare:
\[
g(Reg(\hat{\delta}))  =g\left(\left(\mu_{0}+\delta^{*}\tau\right)-(\mu_{0}+\hat{\delta}\tau)\right),
\]
where $g:\mathbb{R}^{+}\rightarrow\mathbb{R}^{+}$ is a nonlinear function
that does not depend on $\hat{\delta}$, $\mu_{0}$ or $\mu_{1}$. In other words, the loss function in \cite{manski2007admissible} is $Reg^f(\hat{\delta})$ while in our paper the loss function is $g(Reg(\hat{\delta}))$. Note we may write
(assuming $f(\cdotp)$ is sufficiently differentiable)
\[
f(W(\hat{\delta}))=f(W(\delta^{*}))+\sum_{j=1}^{\infty}\frac{\partial^{j}f(x)}{\partial x^{j}}\mid_{x=W(\delta^{*})}\left(W(\hat{\delta})-W(\delta^{*})\right)^{j}.
\]
Since $W(\hat{\delta})-W(\delta^{*})=-Reg(\hat{\delta})$, the \textit{expected
regret} based on \textit{nonlinear welfare} is
\begin{align*}
\mathbb{E}_{P^{n}}[Reg^{f}(\hat{\delta})] & =\sum_{j=1}^{\infty}\underset{\text{state-dependent weights}}{\underbrace{\left[\left(-1\right)^{j+1}\frac{\partial^{j}f(x)}{\partial x^{j}}\mid_{x=W(\delta^{*})}\right]}}\underset{\text{nonlinear regret}}{\underbrace{\mathbb{E}_{P^{n}}\left[\left(Reg(\hat{\delta})\right)^{j}\right]}}.
\end{align*}
Thus, a risk-averse decision maker in the sense of \citet{manski2007admissible}
also cares about the regret distribution beyond the mean. However,
the nonlinear welfare approach effectively considers all moments
of the regret (up to some limit determined by the specific $f$) additively. Moreover, how the decision maker weighs
different moments of the regret is state-dependent. That is, different values of $\tau$ results in different
relative importance of each moment. While our nonlinear
regret approach also aims to bring other features of the regret distribution
into consideration, a typical nonlinear regret criterion often only
focuses on one particular moment of regret and is less convoluted. 
The following example further illustrates the differences.
\begin{example}[expected regret with a quadratic and concave transformation of welfare]
Suppose $\mu_{0}=0$, $\tau\in[-1,1]$, and $f(x)=-x^{2}+2x$ is a
quadratic and concave function. As $\frac{\partial f(x)}{\partial x}=-2x+2$,
$\frac{\partial^{2}f(x)}{\partial x^{2}}=-2$, and $\frac{\partial^{j}f(x)}{\partial x^{j}}=0$
for all $j\geq3$, it follows that 
\begin{align*}
\frac{1}{2}\mathbb{E}_{P^{n}}[Reg^{f}(\hat{\delta})] & =\left[1-\mathbf{1}\left\{ \tau\geq0\right\} \tau\right]\underset{\text{mean regret}}{\underbrace{\mathbb{E}_{P^{n}}\left[Reg(\hat{\delta})\right]}}+\underset{\text{mean square regret}}{\underbrace{\mathbb{E}_{P^{n}}\left[\left(Reg(\hat{\delta})\right)^{2}\right]}}\\
 & =\begin{cases}
\left[1-\tau\right]\mathbb{E}_{P^{n}}\left[Reg(\hat{\delta})\right]+\mathbb{E}_{P^{n}}\left[\left(Reg(\hat{\delta})\right)^{2}\right] & \tau>0\\
\mathbb{E}_{P^{n}}\left[Reg(\hat{\delta})\right]+\mathbb{E}_{P^{n}}\left[\left(Reg(\hat{\delta})\right)^{2}\right] & \tau\leq0
\end{cases}.
\end{align*}
Thus, compared to our mean square regret, the criterion of expected
regret based on a quadratic, concave transformation of welfare always
gives more weight to the mean of regret when it comes to the ranking
of different decision rules. In addition, with a positive treatment
effect, $\mathbb{E}_{P^{n}}[Reg^{f}(\hat{\delta})]$ would put more
weight on $\mathbb{E}_{P^{n}}[Reg(\hat{\delta})]$
when $\tau$ is smaller. If the treatment effect is negative, the
relative importance of $\mathbb{E}_{P^{n}}[Reg(\hat{\delta})]$
and $\mathbb{E}_{P^{n}}[(Reg(\hat{\delta}))^{2}]$
stays the same. 
\end{example}
Therefore, both our approach and that of \citet{manski2007admissible}
share a similar motivation: when it comes to the selection of a good
rule, decision makers may wish to take other aspects of the regret
distribution into account. As a result, fractional rules arise. However, how these two approaches operate
is dramatically different, and it seems our approach offers a more
tractable framework to work out different optimal rules. Below, we take a step further to show that these two approaches
are inherently two different ways of ranking decision rules. 

\begin{prop} \label{prop:nonlinear}
Consider the following statement
\begin{equation}\label{eq:equal.risk}
\mathbb{E}_{P^{n}}{[}Reg^{f}(\hat{\delta}){]}=\mathbb{E}_{P^{n}}{[}g(Reg(\hat{\delta})){]},\text{ for all }\hat{\delta},\,\mu_{0}\text{ and }\mu_{1}.
\end{equation}
Then (\ref{eq:equal.risk}) holds for some concave function $f(\cdotp):\mathbb{R}\rightarrow\mathbb{R}$
and some function $g(\cdotp):\mathbb{R}^{+}\rightarrow\mathbb{R}$
if and only if $f(x)=ax+b$ and $g(x)=ax$ for some constants $a$
and $b$.
\end{prop}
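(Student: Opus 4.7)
My plan is to reduce the equality (\ref{eq:equal.risk}) to a purely analytic functional equation by restricting attention to \emph{constant} decision rules. The ``if'' direction is immediate: when $f(x)=ax+b$ and $g(x)=ax$, we have $Reg^{f}(\hat{\delta}) = a[W(\delta^{*})-W(\hat{\delta})] = a\,Reg(\hat{\delta}) = g(Reg(\hat{\delta}))$ pointwise, so taking expectations gives (\ref{eq:equal.risk}).

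For the ``only if'' direction, fix any $c\in[0,1]$ and consider the constant rule $\hat{\delta}(z_{n})\equiv c$, which is a valid element of $\mathcal{D}$. Neither side of (\ref{eq:equal.risk}) then depends on $z_{n}$, so the equality collapses to the pointwise identity
\[
f(\mu_{0}+\delta^{*}\tau) - f(\mu_{0}+c\tau) = g\bigl((\delta^{*}-c)\tau\bigr) \quad \text{for all } \mu_{0}\in\mathbb{R},\ \tau\in\mathbb{R},\ c\in[0,1].
\]
Specializing to $\tau\geq 0$ (so $\delta^{*}=1$) and writing $u:=\mu_{0}+\tau$, $v:=\mu_{0}+c\tau$, every pair $(u,v)\in\mathbb{R}^{2}$ with $u\geq v$ is attainable (take $c=0$, $\mu_{0}=v$, $\tau=u-v$). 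Therefore $f(u)-f(v)=g(u-v)$ for all $u\geq v$; setting $v=0$ identifies $g(u)=f(u)-f(0)$ on $\mathbb{R}^{+}$.

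Substituting this back and letting $h(x):=f(x)-f(0)$, I obtain the partial Cauchy equation $h(a+b)=h(a)+h(b)$ for all $a\geq 0$ and $b\in\mathbb{R}$, together with $h(0)=0$. Plugging $a=-b$ for $b\leq 0$ gives $h(-b)=-h(b)$, so $h$ is odd, which upgrades the partial additivity to full additivity of $h$ on $\mathbb{R}$. Since $f$ is concave, $h$ is concave, hence continuous on $\mathbb{R}$; a continuous solution of Cauchy's equation is linear, so $h(x)=ax$ for some constant $a$. Consequently $f(x)=ax+b$ with $b=f(0)$, and $g(x)=ax$ on $\mathbb{R}^{+}$, as claimed.

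The main step requiring care is the last one: ruling out pathological (non-measurable) additive solutions of Cauchy's equation. This is precisely where the concavity assumption on $f$ is used --- concavity on all of $\mathbb{R}$ yields continuity on $\mathbb{R}$ for free, after which linearity is standard. Everything else in the argument is a direct application of the definitions and a careful parameterization of the $(u,v)$ plane.
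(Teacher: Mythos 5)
Your proof is correct, and its first half coincides with the paper's: both arguments reduce (\ref{eq:equal.risk}) to the pointwise functional equation $f(u)-f(v)=g(u-v)$ for all $u\geq v$, by observing that the rule, $\mu_{0}$ and $\tau$ can be chosen to realize any such pair (the paper does this through its $\mathbb{F},\mathbb{G}$ notation, which for deterministic rules collapses to $f,g$ exactly as in your constant-rule reduction). Where you genuinely diverge is in solving that equation. The paper sets $v=0$ to get $g(x)=f(x)-f(0)$ on $\mathbb{R}^{+}$, hence $g$ concave there, and sets $u=0$ to get $g(x)=f(0)-f(-x)$ on $\mathbb{R}^{+}$, hence $g$ convex there; being both, $g$ is affine, and matching the two affine pieces of $f$ at zero forces the intercept of $g$ to vanish. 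You instead turn the equation into a (partial) Cauchy equation for $h=f-f(0)$, prove oddness and full additivity, and use concavity only at the end to secure continuity so that the additive solution is linear. Both routes are valid. The paper's concave/convex sandwich is the more elementary one---it needs only the fact that a function both concave and convex on an interval is affine---and it adapts directly to the variants in Proposition \ref{lem:A1} (convex $f$ or convex $g$, and $f$ defined only on a compact interval, where the translation structure your additivity argument exploits is unavailable). Your route is a clean application of standard functional-equation machinery, and you correctly identify the one delicate point: pathological additive solutions are ruled out precisely because a finite concave function on $\mathbb{R}$ is automatically continuous.
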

\begin{proof}
The \emph{if} part is straightforward to show. We focus on the
\emph{only if} part. Let $\mathbb{F}(\cdotp):=\mathbb{E}_{P^{n}}[f(\cdotp)]$
and $\mathbb{G}(\cdotp):=\mathbb{E}_{P^{n}}[g(\cdotp)]$. Since convexity
and concavity are preserved under the expectation operator, it holds
that $\mathbb{F}(\cdotp)$ is concave too. Then, by assumption, 
\begin{equation}
\mathbb{F}(\mu_{0}+\delta^{*}\tau)-\mathbb{F}(\mu_{0}+\hat{\delta}\tau)=\mathbb{G}((\mu_{0}+\delta^{*}\tau)-(\mu_{0}+\hat{\delta}\tau))\label{pf:FG.1}
\end{equation}
for all $\mu_{0},$ $\mu_{1}$ and $\hat{\delta}$, implying 
\begin{equation}
\mathbb{F}(x)-\mathbb{F}(y)=\mathbb{G}\left(x-y\right),\forall x\geq y.\label{eq:FG.2}
\end{equation}
Fixing $y=0$, (\ref{eq:FG.2}) implies 
\begin{equation}
\mathbb{F}(x)-\mathbb{F}(0)=\mathbb{G}(x),\forall x\geq0.\label{eq:FG.3}
\end{equation}
Since $\mathbb{F}$ is concave, (\ref{eq:FG.3}) implies $\mathbb{G}(x)$
is concave as well for all $x\geq0$. Conversely, fixing $x=0$,
(\ref{eq:FG.2}) implies 
\begin{equation}
\mathbb{F}(0)-\mathbb{F}(y)=\mathbb{G}\left(-y\right),\forall y\leq0,\label{eq:FG.4}
\end{equation}
or, equivalently, 
\begin{equation}
\mathbb{F}(0)-\mathbb{F}(-x)=\mathbb{G}(x),\forall x\geq0.\label{eq:FG.5}
\end{equation}

Since $\mathbb{F}$ is concave, (\ref{eq:FG.5}) implies $\mathbb{G}(x)$
is convex for all $x\geq0$. Thus, $\mathbb{G}(x)$ must be both concave
and convex for $x\geq0$, implying $\mathbb{G}(x)$ is an affine function
for all $x\geq0$. This implies $g$ is affine and admits $g(x)=ax+t$
for some constants $a$ and $t$. Since $\mathbb{G}(x)$ is affine,
(\ref{eq:FG.3}) implies that $\mathbb{F}(x)$ is affine for $x\geq0$
and $f(x)=ax+t+\mathbb{F}(0)$ for $x\geq0$. Furthermore, for all $y\leq0$,
(\ref{eq:FG.4}) implies $\mathbb{F}(y)=\mathbb{F}(0)-\mathbb{G}(-y)$,
i.e., $\mathbb{F}(y)$ is affine for $y\leq0$ as well, and $f(y)=ay-t+\mathbb{F}(0)$
for $y\leq0$. At $x=0$, $t+\mathbb{F}(0)=-t+\mathbb{F}(0)$ must hold,
implying $t=0$. Thus, $g(x)=ax$ and $f(x)=ax+\mathbb{F}(0)$ must hold
or, equivalently, $f(x)=ax+b$ for some constants $a$ and $b$.
\end{proof}
Given a concave transformation $f$ of the welfare considered in \cite{manski2007admissible}, Proposition \ref{prop:nonlinear} shows that we cannot find a nonlinear transformation $g$ of the original regret such that the regret of nonlinear welfare defined in \cite{manski2007admissible} equals our nonlinear regret risk for all rules and all states of the world. The results of Proposition \ref{prop:nonlinear} can be extended in
several ways. Firstly, Proposition \ref{lem:A1} shows that even if
we consider either $f$ or $g$ to be convex, or we restrict the domain
of $f$ to be positive, the results of Proposition \ref{prop:nonlinear}
continue to hold. For instance, suppose $g(r)=r^{2}$ and a nonlinear
welfare transformation $f$ were to exist so that
\begin{equation}
\mathbb{E}_{P^{n}}[f(W(\delta^{*}))-f(W(\hat{\delta}))]=\mathbb{E}_{P^{n}}{[}(W(\delta^{*})-W(\hat{\delta}))^{2}{]},\forall\hat{\delta},\mu_{0},\text{and }\mu_{1}.\label{eq:FG.example}
\end{equation}
Proposition \ref{lem:A1} shows that such an $f$ does not exist.
Secondly, one might argue that even though (\ref{eq:equal.risk})
does not hold, the risks of the two approaches could be affine transformations
of each other, so that the optimal rules are the same. In Proposition
\ref{lem:A2}, we show that even in such a scenario, both $f$ and
$g$ also have to be affine. Our approach in introducing nonlinear $g(\cdot)$ is inherently different from that of \cite{manski2007admissible}.

\begin{prop}\label{lem:A1}
\begin{itemize}
\item[(i)] (\ref{eq:equal.risk}) holds for some convex function $f(\cdotp):\mathbb{R}\rightarrow\mathbb{R}$ and some function $g(\cdotp):\mathbb{R}^{+}\rightarrow\mathbb{R}$ if and only if $f(x)=ax+b$ and $g(x)=ax$ for some constants $a$
and $b$.

\item[(ii)] (\ref{eq:equal.risk}) holds for some function $f(\cdotp):\mathbb{R}\rightarrow\mathbb{R}$ and some convex function $g(\cdotp):\mathbb{R}^{+}\rightarrow\mathbb{R}$ if and only if $f(x)=ax+b$ and $g(x)=ax$ for some constants $a$
and $b$.

\item[(iii)] (\ref{eq:equal.risk}) holds  for some concave function $f(\cdotp):\mathcal{C}\rightarrow\mathbb{R}$, where $\mathcal{C}\subseteq\mathbb{R}^{+}$ is a compact interval, and some function $g(\cdotp):\mathbb{R}^{+}\rightarrow\mathbb{R}$  if and only if $f(x)=ax+b$ and $g(x)=ax$ for some constants $a$
and $b$.
\end{itemize}
\end{prop}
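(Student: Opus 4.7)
The plan is to mirror the proof of Proposition \ref{prop:nonlinear} essentially verbatim, adapting the curvature argument to each of the three settings. Setting $\mathbb{F}:=\mathbb{E}_{P^{n}}[f(\cdot)]$ and $\mathbb{G}:=\mathbb{E}_{P^{n}}[g(\cdot)]$ preserves whatever convexity or concavity $f$ or $g$ possesses (and collapses to $\mathbb{F}=f,\mathbb{G}=g$ whenever $\hat{\delta}$ is taken deterministic), and I would first derive the master functional equation $\mathbb{F}(x)-\mathbb{F}(y)=\mathbb{G}(x-y)$ for all $x\geq y$ -- for part (iii), only for $x,y\in\mathcal{C}$. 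The ``if'' direction of each part is immediate by substitution, so all effort concentrates on the ``only if'' direction, namely showing that $\mathbb{G}$ (hence $g$) is linear and $\mathbb{F}$ (hence $f$) is affine on the relevant domains.

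For part (i), replacing $f$ concave with $f$ convex simply swaps the signs of curvature throughout the original argument. Setting $y=0$ yields $\mathbb{G}(x)=\mathbb{F}(x)-\mathbb{F}(0)$, which is convex on $[0,\infty)$, while setting $x=0$ yields $\mathbb{G}(x)=\mathbb{F}(0)-\mathbb{F}(-x)$, which is concave on $[0,\infty)$ because $\mathbb{F}(-x)$ is convex. Thus $\mathbb{G}$ is both convex and concave, hence affine, and the remainder of the original proof transfers without change to yield $g(x)=ax$ and $f(x)=ax+b$.

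For part (ii), the curvature ``sandwich'' of the original proof is unavailable. Instead, I would derive Cauchy's additive functional equation for $\mathbb{G}$: from $y=0$ we obtain $\mathbb{F}(x)=\mathbb{F}(0)+\mathbb{G}(x)$ on $[0,\infty)$, and substituting this representation into the master equation for $x_{1}\geq x_{2}\geq 0$ produces $\mathbb{G}(u+v)=\mathbb{G}(u)+\mathbb{G}(v)$ for all $u,v\geq 0$. Convexity of $\mathbb{G}$ on $[0,\infty)$ entails continuity on $(0,\infty)$, and a continuous additive function is linear, so $\mathbb{G}(x)=ax$ and $g(x)=ax$. The affine form of $f$ then follows by matching the two representations of $\mathbb{F}$ on the positive and negative half-lines, as in the original. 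Part (iii) follows the same Cauchy strategy, restricted to $\mathcal{C}=[c_{1},c_{2}]$: fixing $y=c_{1}$ gives $f(x)=f(c_{1})+g(x-c_{1})$ on $\mathcal{C}$, so concavity of $f$ transfers to concavity of $g$ on $[0,c_{2}-c_{1}]$; the analogous substitution produces $g(u+v)=g(u)+g(v)$ on the restricted domain $\{(u,v):u,v\geq 0,\,u+v\leq c_{2}-c_{1}\}$, and continuity of the concave $g$ on the interior forces $g(x)=ax$ on $[0,c_{2}-c_{1}]$, making $f$ affine on $\mathcal{C}$.

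I expect the main technical obstacle to sit in part (ii): one must justify that Cauchy's equation on $[0,\infty)$ combined with the mere convexity of the candidate solution rules out a pathological (nonmeasurable, everywhere discontinuous) additive function. This is standard but requires invoking that a convex function on an open interval is automatically continuous there. Part (iii) carries a secondary subtlety in that the functional equation pins $g$ down only on $[0,c_{2}-c_{1}]$; however, because this is precisely the range of $W(\delta^{\ast})-W(\hat{\delta})$ reachable when $\mu_{0},\mu_{1}\in\mathcal{C}$, this is exactly the relevant range for the conclusion of the proposition.
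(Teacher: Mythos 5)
Your proof is correct, and for parts (ii) and (iii) it takes a genuinely different route from the paper's. Part (i) coincides with the paper (the sign-flipped rerun of Proposition \ref{prop:nonlinear}). For part (ii), the paper avoids Cauchy's equation entirely: from $\mathbb{F}(y)=\mathbb{F}(T)-\mathbb{G}(T-y)$ for all $y\leq T$ and convexity of $\mathbb{G}$, it deduces that $\mathbb{F}$ is concave on $(-\infty,T]$ for every $T$, hence on $\mathbb{R}$, and then simply re-runs the argument of Proposition \ref{prop:nonlinear} (which only uses concavity of $\mathbb{F}$). For part (iii), the paper pivots at the midpoint of the (normalized) interval $[0,1]$: the substitutions $y=0.5$ and $x=0.5$ give that $\mathbb{G}$ is simultaneously concave and convex, hence affine, on $[0,0.5]$, and the identity $\mathbb{G}(x)=\mathbb{G}(x-0.5)+\mathbb{G}(0.5)-\mathbb{G}(0)$ then propagates affinity to $[0.5,1]$, with $t=0$ read off at $x=0.5$. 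Your alternative—deriving the (restricted) additive equation $\mathbb{G}(u+v)=\mathbb{G}(u)+\mathbb{G}(v)$ and using convexity/concavity only to supply continuity at interior points—also goes through: you correctly identify the one regularity step that must be invoked (a convex or concave function is continuous on the interior, ruling out pathological additive solutions), the rational-homogeneity argument works on the truncated domain in (iii), and the endpoints are recovered from additivity itself ($\mathbb{G}(0)=0$ and $\mathbb{G}(c_2-c_1)=2\,\mathbb{G}((c_2-c_1)/2)$), so the endpoint discontinuity a concave function could in principle exhibit causes no trouble. What the two approaches buy: the paper's argument stays entirely within elementary convex/concave sandwiches and reuses Proposition \ref{prop:nonlinear} as a black box, whereas yours is more unified across (ii) and (iii) and makes transparent that linearity of $g$ needs only additivity plus continuity at a single point; your closing remark that in (iii) the equation pins $g$ down only on $[0,c_2-c_1]$ applies equally to the paper's own proof, which likewise identifies $g$ only on the reachable range of the regret.
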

\begin{proof}
Statement (i): the proof is the same as that of Proposition \ref{prop:nonlinear}.

Statement (ii): We only show the \emph{only if} part. Note (\ref{eq:FG.2}) still holds. Fix $T\in\mathbb{R}$.
It holds that 
\[
\mathbb{F}(T)-\mathbb{F}(y)=\mathbb{G}(T-y),\forall y\leq T,
\]
or, equivalently, that
\begin{equation}
\mathbb{F}(y)=\mathbb{F}(T)-\mathbb{G}(T-y),\forall y\leq T.\label{pf:lemmaA1.1}
\end{equation}
Since $\mathbb{G}$ is convex, (\ref{pf:lemmaA1.1}) implies $\mathbb{F}(y)$
is concave for all $\forall y\leq T$. Letting $T\rightarrow\infty$
implies $\mathbb{F}(y)$ is a concave function in $\mathbb{R}$. The
rest of the proof then follows that of Proposition \ref{prop:nonlinear}.

Statement (iii). We only show the \emph{only if} part. Without loss of generality, suppose $f(\cdotp):[0,1]\rightarrow\mathbb{R}$.
Note (\ref{eq:FG.2}) still holds for all $1\geq x \geq y \geq 0$, implying
\begin{equation}
\mathbb{F}(x)-\mathbb{F}(0.5) =\mathbb{G}(x-0.5),\forall0.5\leq x\leq1,\label{pf:lemmaA1.2}
\end{equation}
i.e.,  $\mathbb{G}$ is concave on the interval $[0,0.5]$. Conversely, (\ref{eq:FG.2}) also implies 
\[
\mathbb{F}(0.5)-\mathbb{F}(y)=\mathbb{G}(0.5-y),\forall0\leq y\leq0.5,
\]
which means $\mathbb{G}$ is convex on $[0,0.5]$. Thus,
$\mathbb{G}$ must be affine on $[0,0.5]$, and $g(x)=ax+t$ for some constants $a$ and $t$, for each $0\leq x \leq 0.5$. Further note $\mathbb{F}(x)-\mathbb{F}(0)	=\mathbb{G}(x),\forall0\leq x\leq1$. Thus, combining this with (\ref{pf:lemmaA1.2}), we find 
\[
\mathbb{G}(x)=\mathbb{F}(x)-\mathbb{F}(0)=\mathbb{G}(x-0.5)+\mathbb{F}(0.5)-\mathbb{F}(0),\forall0.5\leq x\leq1.
\] 
In particular, at $x=0.5$, $\mathbb{G}(0.5)=\mathbb{G}(0)+\mathbb{F}(0.5)-\mathbb{F}(0)$. Hence, 
\[
\mathbb{G}(x)=\mathbb{G}(x-0.5)+\mathbb{G}(0.5)-\mathbb{G}(0),\forall0.5\leq x\leq1,
\]
implying $\mathbb{G}(x)$ is affine and  $g(x)=ax+t$ in $[0.5,1]$ as well. Thus,  $g(x)=ax+t$ for $0\leq x \leq 1$. But then plugging $x=0.5$ into ($\ref{pf:lemmaA1.2}$) implies $t=0$. Then, it is easy to see that $f(x)=ax+b$ for some constant $b$.
\end{proof}
\begin{prop}
\label{lem:A2} Let $A>0$ and $B\in\mathbb{R}$ be some constants.
Consider the following statement 
\begin{equation}
\mathbb{E}_{P^{n}}{[}Reg^{f}(\hat{\delta}){]}=A\mathbb{E}_{P^{n}}{[}g(Reg(\hat{\delta})){]}+B,\text{ for all }\hat{\delta},\,\mu_{0}\text{ and }\mu_{1}.\label{eq:equal.ranking.risk}
\end{equation}
Then (\ref{eq:equal.ranking.risk}) holds for some concave function
$f(\cdotp):\mathbb{R}\rightarrow\mathbb{R}$, some function $g(\cdotp):\mathbb{R}^{+}\rightarrow\mathbb{R}$
and some constants $A>0$ and $B$ if and only if $f(x)=ax+b$ and
$g(x)=\frac{a}{A}x-\frac{B}{A}$ for some constants $a$, $b$, $A>0$
and $B$.
\end{prop}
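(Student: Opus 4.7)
The plan is to mimic the strategy used in the proof of Proposition~\ref{prop:nonlinear}, with the main adjustment being bookkeeping of the extra constants $A$ and $B$. The \emph{if} direction is an immediate substitution, so I focus on the \emph{only if} direction. First, I would reduce the stochastic hypothesis \eqref{eq:equal.ranking.risk} to a purely analytic functional equation on $f$ and $g$. Restricting attention to deterministic treatment rules $\hat{\delta} \equiv c \in [0,1]$, both expectations in \eqref{eq:equal.ranking.risk} collapse to their integrand, giving
\begin{equation*}
f(\mu_0 + \delta^{\ast}\tau) - f(\mu_0 + c\tau) \;=\; A\, g\bigl(\tau(\delta^{\ast} - c)\bigr) + B
\end{equation*}
for every admissible $(\mu_0, \tau, c)$. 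For any fixed $x \geq y$, setting $\tau = x - y \geq 0$ (so $\delta^{\ast} = 1$), $c = 0$ and $\mu_0 = y$ realizes the pair $(x,y)$ on the left and the argument $x - y$ on the right, yielding
\begin{equation*}
f(x) - f(y) \;=\; A\, g(x - y) + B \qquad \text{for all } x \geq y.
\end{equation*}

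Next, I would pin down the shape of $g$ on $\mathbb{R}^{+}$ by fixing each variable in turn. Setting $y = 0$ gives $f(x) - f(0) = A\, g(x) + B$ for $x \geq 0$, and since $f$ is concave and $A > 0$, this forces $g$ to be concave on $\mathbb{R}^{+}$. Conversely, setting $x = 0$ gives $f(0) - f(y) = A\, g(-y) + B$ for $y \leq 0$; rewriting this as $A\, g(z) = f(0) - B - f(-z)$ for $z \geq 0$ and using that $-f$ is convex forces $g$ to be convex on $\mathbb{R}^{+}$. Being simultaneously concave and convex on $\mathbb{R}^{+}$, $g$ must be affine there: $g(x) = \alpha x + \beta$ for some constants $\alpha$ and $\beta$.

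Finally, I would back out $f$ and enforce compatibility at the origin. Substituting the affine form of $g$ into the two identities produces
\begin{equation*}
f(x) = A\alpha\, x + \bigl[f(0) + A\beta + B\bigr] \text{ for } x \geq 0, \quad f(y) = A\alpha\, y + \bigl[f(0) - A\beta - B\bigr] \text{ for } y \leq 0.
\end{equation*}
Continuity at zero forces $A\beta + B = -(A\beta + B)$, so $\beta = -B/A$. Writing $a := A\alpha$ and $b := f(0)$, both pieces of $f$ collapse to $f(x) = ax + b$ on all of $\mathbb{R}$, while $g(x) = (a/A)\,x - B/A$ on $\mathbb{R}^{+}$, matching the claimed form.

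No step is a genuine obstacle: the only delicate points are verifying that varying $(\mu_0, \tau, c)$ sweeps out the entire half-plane $\{x \geq y\}$ (resolved by the explicit choice above) and correctly gluing the two affine pieces of $f$ at the origin (resolved by the continuity check). The argument parallels that of Proposition~\ref{prop:nonlinear}, the only novelty being that the additive constant $B$ survives into the formula for $\beta$ but is absorbed into the intercept of $f$ after the consistency check.
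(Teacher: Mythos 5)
Your proposal is correct and takes essentially the same route as the paper: the paper's proof simply sets $\tilde{g}(x):=Ag(x)+B$ and invokes Proposition \ref{prop:nonlinear}, whereas you unfold that proposition's functional-equation argument (the identity $f(x)-f(y)=Ag(x-y)+B$ for $x\geq y$, concavity of $g$ from $y=0$, convexity from $x=0$, hence $g$ affine, then matching at the origin to pin down $\beta=-B/A$) with the constants $A$ and $B$ carried through explicitly. Both arguments are valid; the paper's one-line reduction is just shorter.
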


\begin{proof}
To see the \emph{only if} part, let $\tilde{g}(x)=Ag(x)+B$. (\ref{eq:equal.ranking.risk})
implies 
\[
\mathbb{E}_{P^{n}}{[}Reg^{f}(\hat{\delta}){]}=\mathbb{E}_{P^{n}}{[}\tilde{g}(Reg(\hat{\delta})){]},\text{ for all }\hat{\delta},\mu_{0}\text{ and }\mu_{1}.
\]
Applying the results of Proposition \ref{prop:nonlinear} yields that
$Ag(x)+B=ax$ and $f(x)=ax+b$ for some constants $a$, $b$, $A>0$ and $B$. That
is, $g(x)=\frac{a}{A}x-\frac{B}{A}$. The $if$ part is straightforward to show and omitted.
\end{proof}

\section{Lemmas for Section \ref{sec:admit}}

\begin{lem}\label{lem:admit.1}
Under the conditions of Theorem \ref{thm:incomplete}, there exists a rule  $\tilde{\delta}$ that is not dominated in $\{P_1,P_2\}$, and is such that \eqref{eq:different.risk.profile} holds.
\end{lem}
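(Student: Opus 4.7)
The plan is to construct $\tilde{\delta}$ as a Bayes optimal rule under a two-point prior $\pi$ supported on $\{P_1,P_2\}$ with $\pi(P_1)=\pi(P_2)=\tfrac{1}{2}$, and then verify the two required properties in turn. Existence of such a Bayes optimal rule follows from Theorem \ref{thm:bayes.finite}: condition (i) is satisfied by the constant rule $\hat{\delta}\equiv 1/2$, which has Bayes risk $g(c/2)<\infty$, while condition (ii) requires the posterior to put nonzero mass on both $\{\tau>0\}=\{P_1\}$ and $\{\tau<0\}=\{P_2\}$ for a.e. $z_n$ under the mixture marginal $\tfrac{1}{2}(P_1^n+P_2^n)$ — which holds provided $P_1^n$ and $P_2^n$ are not mutually singular. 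I will treat this as the main case and handle the degenerate case separately at the end.

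For non-dominance in $\{P_1,P_2\}$, I would invoke the standard Bayes argument: if some $\hat{\delta}'\neq\tilde{\delta}$ satisfied $R_g(\hat{\delta}',P_i)\leq R_g(\tilde{\delta},P_i)$ for $i=1,2$ with strict inequality for some $i$, then taking the $\pi$-average would yield a strictly smaller Bayes risk than that of $\tilde{\delta}$, contradicting its Bayes optimality.

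For \eqref{eq:different.risk.profile}, Theorem \ref{thm:bayes.finite} further guarantees $\tilde{\delta}(z_n)\in(0,1)$ for a.e. $z_n$. Since $P_1^n$ and $P_2^n$ are both absolutely continuous with respect to the mixture marginal, we have $0<\tilde{\delta}<1$ a.s. under each of $P_1^n$ and $P_2^n$. Under $P_1$ the regret equals $c(1-\tilde{\delta})\in(0,c)$ a.s., and by the strict monotonicity of $g$ from Assumption G,
\[
R_g(\tilde{\delta},P_1)=\mathbb{E}_{P_1^n}\!\left[g\bigl(c(1-\tilde{\delta})\bigr)\right]\in\bigl(g(0),g(c)\bigr),
\]
and symmetrically $R_g(\tilde{\delta},P_2)\in(g(0),g(c))$. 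Rules in $\mathcal{D}_S^2$ have risk profile $(g(0),g(c))$ and rules in $\mathcal{D}_S^3$ have risk profile $(g(c),g(0))$, so the profile of $\tilde{\delta}$ lies in the open square $(g(0),g(c))^2$ and differs from each corner profile at \emph{both} coordinates, delivering \eqref{eq:different.risk.profile}.

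The main obstacle is the edge case in which $P_1^n$ and $P_2^n$ are mutually singular, so that condition (ii) of Theorem \ref{thm:bayes.finite} can fail. In that situation the two states are perfectly distinguishable from $Z_n$, and I would instead take $\tilde{\delta}(z_n)=\mathbf{1}\{z_n\in\mathrm{supp}(P_1^n)\}$, which yields zero regret under both $P_1$ and $P_2$. This rule is trivially not dominated in $\{P_1,P_2\}$, and its risk profile $(g(0),g(0))$ still differs from $(g(0),g(c))$ and $(g(c),g(0))$ since strict monotonicity of $g$ forces $g(0)\neq g(c)$. A combination of the two constructions, applied piecewise on the overlap and singular parts of $P_1^n$ and $P_2^n$, covers the general case.
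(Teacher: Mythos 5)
Your construction coincides with the paper's: the Bayes optimal rule under the symmetric two-point prior $\pi(P_1)=\pi(P_2)=\tfrac12$, with non-dominance in $\{P_1,P_2\}$ obtained from the standard Bayes-admissibility argument. Where you differ is in how \eqref{eq:different.risk.profile} is verified. The paper never argues interiority of the Bayes rule; it shows directly that no rule in $\mathcal{D}_S^2\cup\mathcal{D}_S^3$ can be Bayes optimal, because the constant rule $\hat{\delta}_\varepsilon\equiv\varepsilon$ has Bayes risk $\tfrac12 g\bigl(c(1-\varepsilon)\bigr)+\tfrac12 g(c\varepsilon)$ whose one-sided derivative at $\varepsilon=0$ equals $-cg'(c)/2<0$ by Assumption G ($g'(0)=0$, $g'(c)>0$). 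Since every rule in $\mathcal{D}_S^2\cup\mathcal{D}_S^3$ has Bayes risk exactly $\tfrac12\bigl(g(0)+g(c)\bigr)$, none is Bayes optimal, and the Bayes rule's risk profile must therefore differ from the corner profiles at some $P\in\{P_1,P_2\}$. That argument is entirely free of assumptions on how $P_1^n$ and $P_2^n$ relate, so no case analysis is needed; your interiority route buys a slightly stronger conclusion (the risk profile lies strictly inside $(g(0),g(c))^2$) at the cost of extra hypotheses.

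Two soft spots in your version. First, condition (ii) of Theorem \ref{thm:bayes.finite} requires the posterior to charge both states for almost every $z_n$, which is mutual absolute continuity of $P_1^n$ and $P_2^n$, not merely non-singularity as you state; with partial overlap the pointwise Bayes rule equals $1$ (resp.\ $0$) where only the $P_1^n$ (resp.\ $P_2^n$) likelihood is positive and is interior only on the overlap region. Second, your final sentence asserts that a piecewise combination ``covers the general case'' without argument. The repair is available but should be spelled out: the pointwise Bayes rule automatically has exactly this piecewise form; non-singularity forces the overlap region to have positive probability under both $P_1^n$ and $P_2^n$, so both coordinates of the risk profile remain strictly between $g(0)$ and $g(c)$; and non-dominance follows from Bayes optimality alone, without interiority. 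With those details filled in your proof is correct, but the paper's perturbation argument reaches the same conclusion with less machinery.
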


\begin{proof}
Consider a Bayes optimal rule with respect to a prior $\pi$ such
that $\pi(P_{1})=\pi(P_{2})=\frac{1}{2}$. It follows from \citet[Theorem 2, Chapter 2.3]{ferguson1967mathematical}
that this Bayes optimal rule is not dominated in $\{P_{1},P_{2}\}$.
Below, we show that no rule in $\mathcal{D}_{S}^{2}$
or $\mathcal{D}_{S}^{3}$ can be Bayes optimal
with respect to such a prior, implying that the Bayes optimal rule
must satisfy \eqref{eq:different.risk.profile}. To see this, consider
a rule $\hat{\delta}_{\varepsilon}$ such that $\mathbb{P}_{P_{1}^{n}}\left\{ \hat{\delta}_{\varepsilon}=\varepsilon\right\} =\mathbb{P}_{P_{2}^{n}}\left\{ \hat{\delta}_{\varepsilon}=\varepsilon\right\} =1$
for some small $\varepsilon>0$. The Bayes nonlinear regret risk of
$\hat{\delta}_{\varepsilon}$ with respect to prior $\pi(P_{1})=\pi(P_{2})=\frac{1}{2}$
is
\begin{align*}
r_{g}\left(\hat{\delta}_{\varepsilon},\pi\right) & =\frac{1}{2}\mathbb{E}_{P_{1}^{n}}\left[g\left(c(1-\hat{\delta}_{\varepsilon})\right)\right]+\frac{1}{2}\mathbb{E}_{P_{2}^{n}}\left[g\left(c\hat{\delta}_{\varepsilon}\right)\right]\\
 & =\frac{1}{2}g\left(c(1-\varepsilon)\right)+\frac{1}{2}g\left(c\varepsilon\right).
\end{align*}
The directional derivative of $r_{g}\left(\hat{\delta}_{\varepsilon},\pi\right)$
(from above) with respect to $\varepsilon$ when $\varepsilon=0$
is 
\begin{equation}
\frac{\partial r_{g}\left(\hat{\delta}_{\varepsilon},\pi\right)}{\partial\varepsilon}\mid_{\varepsilon\downarrow0}=-\frac{1}{2}cg^{\prime}\left(c\right)+\frac{1}{2}cg^{\prime}\left(0\right)=-\frac{cg^{\prime}\left(c\right)}{2}<0,\label{eq:lem.c.1.1}
\end{equation}
where we have used the assumption that $g^{\prime}\left(0\right)=0$,
$g^{\prime}\left(x\right)>0$ for $x>0$, and $c>0$. (\ref{eq:lem.c.1.1})
implies that no rule in $\mathcal{D}_{S}^{3}$
can be Bayes optimal respect to prior $\pi(P_{1})=\pi(P_{2})=\frac{1}{2}$.
Applying analogous arguments shows that any rule in $\mathcal{D}_{S}^{2}$
is also not Bayes optimal with respect to the prior $\pi(P_{1})=\pi(P_{2})=\frac{1}{2}$. 
\end{proof}

\begin{lem}\label{lem:admit.2}
Under conditions of Theorem \ref{thm:inadmissibility}, for every singleton rule $\hat{\delta}_{t}$ considered in \eqref{eq:singelton.exponential.family}, there exists a fractional rule $\hat{\delta}_{t,\lambda}:=(1-\lambda)\hat{\delta}_{t}+\lambda(1-\hat{\delta}_{t})$
for some $0<\lambda<1$, such that $R_{g}(\hat{\delta}_{t},\tau)\geq R_{g}(\hat{\delta}_{t,\lambda},\tau)$
for all $\tau\in[-\bar{\tau},\bar{\tau}]$ with the inequality strict
for $\tau\neq0$.
\end{lem}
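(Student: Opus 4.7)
The plan is to leverage the homogeneity of $g$ of degree $\alpha > 1$ to reduce the problem to a single scalar inequality. Let $p(\tau) := \mathbb{P}_{\tau}(\hat{\tau} \geq t)$ and $q(\tau) := 1 - p(\tau)$. Since $\hat{\delta}_{t,\lambda}$ takes only the two values $1-\lambda$ (on $\{\hat{\tau} \geq t\}$) and $\lambda$ (on $\{\hat{\tau} < t\}$), homogeneity gives, for $\tau > 0$,
\begin{equation*}
R_g(\hat{\delta}_{t,\lambda},\tau) = g(\tau)\bigl[\lambda^{\alpha} p(\tau) + (1-\lambda)^{\alpha} q(\tau)\bigr],
\end{equation*}
with the analogous expression for $\tau < 0$ obtained by swapping $p$ and $q$ (and replacing $g(\tau)$ by $g(|\tau|)$). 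The case $\lambda = 0$ recovers $R_g(\hat{\delta}_t,\tau) = g(|\tau|) \cdot q(\tau)$ for $\tau > 0$ and $g(|\tau|) \cdot p(\tau)$ for $\tau < 0$. Strict dominance at any $\tau \neq 0$ is therefore equivalent to
\begin{equation*}
K(\lambda) \;:=\; \frac{1 - (1-\lambda)^{\alpha}}{\lambda^{\alpha}} \;>\; \max\!\left(\frac{p(\tau)}{q(\tau)},\, \frac{q(\tau)}{p(\tau)}\right).
\end{equation*}

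Next, I would use the monotone likelihood ratio (MLR) property of the exponential family \eqref{eq:exponential.family} to show the right-hand side is bounded uniformly in $\tau$. Indeed, MLR in $\hat{\tau}$ (with parameter $\tau$) implies $p(\tau)$ is non-decreasing in $\tau$, so $p(\tau)/q(\tau)$ is maximized on $(0,\bar{\tau}]$ at $\tau = \bar{\tau}$ and $q(\tau)/p(\tau)$ is maximized on $[-\bar{\tau},0)$ at $\tau = -\bar{\tau}$. Because $t$ lies in the interior of the common support of the exponential family, both $p(\pm\bar{\tau}) \in (0,1)$ and $q(\pm\bar{\tau}) \in (0,1)$, so $M := \max\{p(\bar{\tau})/q(\bar{\tau}),\, q(-\bar{\tau})/p(-\bar{\tau})\}$ is finite.

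Finally, since $1 - (1-\lambda)^{\alpha} \sim \alpha\lambda$ as $\lambda \to 0^+$ and $\alpha > 1$, one has $K(\lambda) \sim \alpha\lambda^{1-\alpha} \to \infty$. I can therefore pick $\lambda \in (0,1)$ small enough that $K(\lambda) > M$; for this $\lambda$, the strict inequality $R_g(\hat{\delta}_{t,\lambda},\tau) < R_g(\hat{\delta}_t,\tau)$ holds simultaneously for every $\tau \in [-\bar{\tau},\bar{\tau}] \setminus \{0\}$. At $\tau = 0$, $Reg \equiv 0$ and homogeneity gives $g(0)=0$, so both risks vanish and the weak inequality holds trivially.

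The main subtlety is making the bound uniform in $\tau$ over an infinite set of parameter values: this hinges jointly on the compactness of $[-\bar{\tau},\bar{\tau}]$, the interior-support assumption on $t$ (which keeps both $p(\pm\bar{\tau})$ and $q(\pm\bar{\tau})$ strictly inside $(0,1)$), and the strict homogeneity degree $\alpha > 1$, which is what makes $K(\lambda)$ diverge at $\lambda \searrow 0$. The argument collapses when $\alpha = 1$ (where $K(\lambda) \equiv 1$), reflecting the known essential completeness of threshold rules under the classical linear regret criterion.
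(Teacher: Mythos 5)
Your proof is correct, and it takes a cleaner route than the paper's. The paper proceeds by a mean-value (Taylor) expansion of $R_g(\hat{\delta}_{t,\lambda},\tau)$ in $\lambda$ around $\lambda=0$, writes $g'(x)=kx^{\alpha-1}$ by homogeneity of the derivative, and shows the $\lambda$-derivative is negative at the intermediate point; the uniformity over $\tau\in[-\bar{\tau},\bar{\tau}]$ is supplied by a separate lemma (the paper's Lemma \ref{lem:admit.3}) that uses the monotone likelihood ratio structure of the exponential family to bound $\mathbb{P}_\tau(\hat{\delta}_t=0)$ and $\mathbb{P}_\tau(\hat{\delta}_t=1)$ away from zero on $[0,\bar{\tau}]$ and $[-\bar{\tau},0]$ respectively, which yields an explicit threshold $\lambda^{*}$ below which the derivative condition holds for every nonzero $\tau$. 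You instead exploit that $\hat{\delta}_{t,\lambda}$ takes only two values on a partition of the sample space, so degree-$\alpha$ homogeneity of $g$ itself gives the risk in closed form, $g(|\tau|)\bigl[\lambda^{\alpha}p+(1-\lambda)^{\alpha}q\bigr]$ (with $p,q$ swapped according to the sign of $\tau$); strict dominance then reduces to the scalar inequality $K(\lambda)=\bigl(1-(1-\lambda)^{\alpha}\bigr)/\lambda^{\alpha}$ exceeding the relevant odds ratio, and the same MLR monotonicity plus the interior-support condition bounds those odds uniformly by their values at $\pm\bar{\tau}$, while $\alpha>1$ makes $K(\lambda)\to\infty$ as $\lambda\searrow 0$. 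The two arguments hinge on identical structural ingredients (homogeneity with $\alpha>1$, MLR, $t$ interior to the support), but yours avoids differentiation and the mean-value point $\tilde{\lambda}$ entirely, which is more elementary and makes the role of $\alpha>1$ transparent (your remark that $K(\lambda)\equiv 1$ at $\alpha=1$ nicely explains why the construction must fail for linear regret). Two cosmetic points: strict dominance at a given $\tau>0$ is equivalent to $K(\lambda)>p(\tau)/q(\tau)$ alone (your ``max'' of the two ratios is sufficient, not equivalent, but that is all you use); and at $\tau=0$ you correctly note $g(0)=0$ follows from homogeneity, so the weak inequality there is immediate.
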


\begin{proof}
When $\tau=0$, it must hold that $R_{g}(\hat{\delta}_{t,\lambda},\tau)=R_{g}(\hat{\delta}_{t},\tau)$
for all $0<\lambda<1$. Thus, it suffices to show that there exists
some $\hat{\delta}_{t,\lambda}$, where $0<\lambda<1$, such
that $R_{g}(\hat{\delta}_{t},\tau)>R_{g}(\hat{\delta}_{t,\lambda},\tau)$
for each $\tau\in[-\bar{\tau},\bar{\tau}]$ not equal to zero. Emulating
the proof for Lemma \ref{lem:main}, for each $0<\lambda<1$, we can write 
\begin{align*}
R_{g}(\hat{\delta}_{t,\lambda},\tau) & =R_{g}(\hat{\delta}_{t},\tau)+\frac{\partial R_{g}(\hat{\delta}_{t,\lambda},\tau)}{\partial\lambda}\mid_{\lambda=\tilde{\lambda}}\cdotp\lambda,
\end{align*}
where $\tilde{\lambda}:=\tilde{\lambda}(\lambda,\tau,t)$ is
a number between $0$ and $\lambda$ that depends on $\lambda$, $\tau$
and $t$, and 
\[
\begin{aligned}\frac{\partial R_{g}(\hat{\delta}_{t,\lambda},\tau)}{\partial\lambda}\mid_{\lambda=\tilde{\lambda}}= & \tau\left[g'((1-\tilde{\lambda})Reg(1)+\tilde{\lambda}Reg(0))Pr(\hat{\delta}_{t}=1)\right]\\
- & \tau\left[g'((1-\tilde{\lambda})Reg(0)+\tilde{\lambda}Reg(1))Pr(\hat{\delta}_{t}=0)\right].
\end{aligned}
\]
As $\lambda>0$, it suffices to show that there exists some $\hat{\delta}_{t,\lambda}$
such that 
\begin{equation}
\frac{\partial R_{g}(\hat{\delta}_{t,\lambda},\tau)}{\partial\lambda}\mid_{\lambda=\tilde{\lambda}}<0,\text{ for all }\tau\in[-\bar{\tau},0)\cup(0,\bar{\tau}].\label{eq:negative.derivative}
\end{equation}
By Lemma \ref{lem:admit.3}, there exists some constant $p^{+}(t,\bar{\tau})>0$
such that $Pr(\hat{\delta}_{t}=0)\geq p^{+}(t,\bar{\tau})$ for all $\tau\in[0,\bar{\tau}]$, and there exists some $p^{-}(t,\bar{\tau})>0$
such that $Pr(\hat{\delta}_{t}=1)\geq p^{-}(t,\bar{\tau})$
for all $\tau\in[-\bar{\tau},0]$. Let 
\[
\lambda^{*}=\frac{\left(\frac{\min\left\{ p^{+}(t,\bar{\tau}),p^{-}(t,\bar{\tau})\right\} }{1-\min\left\{ p^{+}(t,\bar{\tau}),p^{-}(t,\bar{\tau})\right\} }\right)^{\frac{1}{\alpha-1}}}{1+\left(\frac{\min\left\{ p^{+}(t,\bar{\tau}),p^{-}(t,\bar{\tau})\right\} }{1-\min\left\{ p^{+}(t,\bar{\tau}),p^{-}(t,\bar{\tau})\right\} }\right)^{\frac{1}{\alpha-1}}}>0.
\]
Below, we show that for any $0<\lambda<\lambda^{*}$, the corresponding
fractional rule $\hat{\delta}_{t,\lambda}=(1-\lambda)\hat{\delta}_{t}+\lambda(1-\hat{\delta}_{t})$
satisfies (\ref{eq:negative.derivative}), which completes the proof.

\emph{Case 1}: If $0<\tau\leq\bar{\tau}$, $Reg(1)=0$, $Reg(0)=\tau$. Therefore, for each $0<\lambda<1$,
\[
\begin{aligned}\frac{\partial R_{g}(\hat{\delta}_{t,\lambda},\tau)}{\partial\lambda}\mid_{\lambda=\tilde{\lambda}} & =\tau\left[g'(\tilde{\lambda}\tau)Pr(\hat{\delta}_{t}=1)-g'((1-\tilde{\lambda})\tau)Pr(\hat{\delta}_{t}=0)\right].\end{aligned}
\]
As $g$ is nonzero and homogeneous of degree $\alpha>1$, $g'$ is
nonzero as well and homogeneous of degree $\alpha-1>0$, and can be
written as $g'(x)=kx^{\alpha-1}$ for some $k>0$. Thus,
\begin{align*}
 & \frac{\partial R_{g}(\hat{\delta}_{t,\lambda},\tau)}{\partial\lambda}\mid_{\lambda=\tilde{\lambda}}\\
= & \tau\left[g'(\tilde{\lambda}\tau)Pr(\hat{\delta}_{t}=1)-g'((1-\tilde{\lambda})\tau)Pr(\hat{\delta}_{t}=0)\right]\\
= & k\tau^{\alpha}\left[\tilde{\lambda}^{\alpha-1}Pr(\hat{\delta}_{t}=1)-(1-\tilde{\lambda})^{\alpha-1}Pr(\hat{\delta}_{t}=0)\right]\\
= & k\tau^{\alpha}\left[\tilde{\lambda}^{\alpha-1}-(\tilde{\lambda}^{\alpha-1}+(1-\tilde{\lambda})^{\alpha-1})Pr(\hat{\delta}_{t}=0)\right]\\
< & 0
\end{align*}
if 
\begin{equation}
Pr(\hat{\delta}_{t}=0)>\frac{\tilde{\lambda}^{\alpha-1}}{\tilde{\lambda}^{\alpha-1}+(1-\tilde{\lambda})^{\alpha-1}}.\label{eq:sign.1}
\end{equation}
As 
\[
\begin{aligned}Pr(\hat{\delta}_{t}=0) & \geq p^{+}(t,\bar{\tau})\geq\min\left\{ p^{+}(t,\bar{\tau}),p^{-}(t,\bar{\tau})\right\}>0 \end{aligned}
\]
and $\tilde{\lambda}\leq\lambda$, we deduce that by choosing $0<\lambda<\lambda^{*},$
(\ref{eq:sign.1}) holds true for all $0\leq\tilde{\lambda}\leq\lambda$,
implying (\ref{eq:negative.derivative}) is true for all $0<\tau\leq\bar{\tau}$.

\emph{Case 2}: If $-\bar{\tau}\leq\tau<0$, $Reg(1)=-\tau$, $Reg(0)=0$.
Similar calculations yield that for all  $0<\lambda<\lambda^{*}$,
\[
\begin{aligned}\frac{\partial R_{g}(\hat{\delta}_{t,\lambda},\tau)}{\partial\lambda}\mid_{\lambda=\tilde{\lambda}} & =\tau\left[g'(-(1-\tilde{\lambda})\tau)Pr(\hat{\delta}_{t}=1)\right]-\tau\left[g'(-\tilde{\lambda}\tau)Pr(\hat{\delta}_{t}=0)\right]\\
 & =-\tau\left[g'(-\tilde{\lambda}\tau)Pr(\hat{\delta}_{t}=0)-g'(-(1-\tilde{\lambda})\tau)Pr(\hat{\delta}_{t}=1)\right]\\
 & =k\left(-\tau\right)^{\alpha}\left[\tilde{\lambda}^{\alpha-1}-(\tilde{\lambda}^{\alpha-1}+(1-\tilde{\lambda})^{\alpha-1})Pr(\hat{\delta}_{t}=1)\right]\\
 & <0.
\end{aligned}
\]
This concludes the verification
of (\ref{eq:negative.derivative}).    
\end{proof}

\begin{lem}\label{lem:admit.3}
Under the conditions of Theorem \ref{thm:inadmissibility}, there exists some constant $p^{+}(t,\bar{\tau})>0$
such that $Pr(\hat{\delta}_{t}=0)\geq p^{+}(t,\bar{\tau})$ for all $\tau\in[0,\bar{\tau}]$, and there exists some $p^{-}(t,\bar{\tau})>0$
such that $Pr(\hat{\delta}_{t}=1)\geq p^{-}(t,\bar{\tau})$
for all $\tau\in[-\bar{\tau},0]$.
\end{lem}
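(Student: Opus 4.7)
The plan is to reduce both claims to a simple continuity-plus-compactness argument, exploiting the structure of the exponential family \eqref{eq:exponential.family}. First I would rewrite $Pr(\hat{\delta}_{t}=0)=Pr_{\tau}(\hat{\tau}<t)$ and $Pr(\hat{\delta}_{t}=1)=Pr_{\tau}(\hat{\tau}\geq t)$, where the subscript on $Pr$ records the value of the parameter. A key observation is that under \eqref{eq:exponential.family} the effective support of $\hat{\tau}$, namely $\{x:h(x)>0\}$, does not depend on $\tau$ (since $A(\tau)\exp(x\tau)>0$ for every $\tau$). Therefore the hypothesis that $t$ lies in the interior of the support of $\hat{\tau}$ is unambiguous, and implies that both $\{x<t:h(x)>0\}$ and $\{x\geq t:h(x)>0\}$ carry strictly positive mass under every $\tau\in\mathbb{R}$.

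Next I would argue that $\tau\mapsto Pr_{\tau}(\hat{\tau}<t)$ and $\tau\mapsto Pr_{\tau}(\hat{\tau}\geq t)$ are continuous on $\mathbb{R}$. This follows by dominated convergence applied to $\int\mathbf{1}\{x<t\}A(\tau)\exp(x\tau)h(x)\,dx$ together with continuity of $A(\tau)$, using the fact that on any bounded neighborhood of $\tau$ the integrand is dominated by an integrable envelope of the same exponential-family form. Combined with compactness of $[0,\bar{\tau}]$ and $[-\bar{\tau},0]$, the infima
\[
p^{+}(t,\bar{\tau}) := \inf_{\tau\in[0,\bar{\tau}]} Pr_{\tau}(\hat{\tau}<t),\qquad p^{-}(t,\bar{\tau}) := \inf_{\tau\in[-\bar{\tau},0]} Pr_{\tau}(\hat{\tau}\geq t)
\]
are attained and therefore strictly positive, yielding the lemma.

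I do not anticipate any real obstacle. As an alternative and more informative route, the monotone likelihood ratio property built into the exponential family implies that $\tau\mapsto Pr_{\tau}(\hat{\tau}\geq t)$ is non-decreasing in $\tau$, so the infima are actually attained at the endpoints and one can take the explicit choices $p^{+}(t,\bar{\tau})=Pr_{\bar{\tau}}(\hat{\tau}<t)$ and $p^{-}(t,\bar{\tau})=Pr_{-\bar{\tau}}(\hat{\tau}\geq t)$. Either route requires only that $t$ is interior to the common support and that the family is continuous in its natural parameter; no additional structure beyond what is already assumed in Theorem \ref{thm:inadmissibility} is needed.
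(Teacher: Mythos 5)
Your proposal is correct, and your ``alternative'' route at the end is in fact the paper's own proof: the paper invokes the monotone likelihood ratio property of the one-parameter exponential family (via a standard result in Berger) to conclude that $\tau\mapsto Pr_{\tau}(\hat{\tau}<t)$ is non-increasing, so the minimum over $[0,\bar{\tau}]$ is attained at $\bar{\tau}$ and equals $A(\bar{\tau})\int_{-\infty}^{t}e^{x\bar{\tau}}h(x)\,dx$, which is strictly positive because $t$ is interior to the (parameter-free) support; the symmetric argument at $-\bar{\tau}$ gives $p^{-}$. Your primary route is genuinely different: instead of monotonicity you use continuity of $\tau\mapsto Pr_{\tau}(\hat{\tau}<t)$ plus compactness of $[-\bar{\tau},\bar{\tau}]$ and pointwise positivity. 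That argument goes through: the envelope $\bigl(e^{x\bar{\tau}}+e^{-x\bar{\tau}}\bigr)h(x)$ is integrable (or summable in the pmf case) because both $\pm\bar{\tau}$ lie in the natural parameter set, so dominated convergence gives continuity of the unnormalized integral and of $A(\tau)$, and interiority of $t$ gives positive mass on both sides of $t$ for every $\tau$. The trade-off is that the paper's MLR route yields explicit, computable constants $p^{+}=Pr_{\bar{\tau}}(\hat{\tau}<t)$ and $p^{-}=Pr_{-\bar{\tau}}(\hat{\tau}\geq t)$ (which is what the dominating $\lambda^{*}$ in Lemma \ref{lem:admit.2} is built from), whereas your compactness route is slightly more generic in that it never uses the MLR structure, only continuity in the natural parameter and the common support; either suffices for the lemma as stated.
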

\begin{proof}
We focus on the case when $\hat{\tau}$ is a continuous random variable.
For discrete case, the argument still holds by replacing the integration
with suitable summation. As the distribution $f(x|\tau)$ satisfies the 
monotone likelihood ratio property, it follows by \citet[][Theorem 4(i), p. 527]{berger1985statistical} that $Pr(\hat{\delta}_{t}=0)$ in non-increasing
in $\tau$. Therefore,
\begin{align*}
  p^{+}(t,\bar{\tau}):=& \min_{0\leq\tau\leq\bar{\tau}}Pr(\hat{\delta}_{t}=0)\\
= & \min_{0\leq\tau\leq\bar{\tau}}A(\tau)\int_{-\infty}^{t}\exp^{x\tau}h(x)dx\\
= & A(\bar{\tau})\int_{-\infty}^{t}\exp^{x\bar{\tau}}h(x)dx>0,
\end{align*}
as $t$ is in the interior of the support of $\hat{\tau}$.  And we conclude $\begin{aligned}Pr(\hat{\delta}_{t}=0) & \geq p^{+}(t,\bar{\tau})>0\end{aligned}
$ when $\tau\in[0,\bar{\tau}]$. Analogous arguments yield $Pr(\hat{\delta}_{t}=1)\geq p^{-}(t,\bar{\tau})>0$
when $\tau\in[-\bar{\tau},0]$, where 
\begin{align*}
p^{-}(t,\bar{\tau}) & :=\min_{-\bar{\tau}\leq\tau\leq0}Pr(\hat{\delta}_{t}=1)=A(-\bar{\tau})\int_{t}^{\infty}\exp^{-\bar{\tau}x}h(x)dx.
\end{align*}
\end{proof}

\section{Lemmas for Section \ref{sec:finite}}
\begin{lem}
\label{lem:1} $\tau=0$ is never a solution of (\ref{pf:main.1}).
\end{lem}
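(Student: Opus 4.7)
The plan is transparent: evaluate the worst-case mean square regret at $\tau=0$, observe it vanishes, then show that the supremum is strictly positive, so that $\tau=0$ cannot be in the support of any maximizer.

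First, I would note that for any rule $\hat{\delta}^{*}$,
\[
R_{sq}(\hat{\delta}^{*},P_{0}) = 0^{2}\cdot\mathbb{E}\bigl[\mathbf{1}\{0\geq 0\}-\hat{\delta}^{*}(\bar{Y}_{1})\bigr]^{2} = 0
\]
because of the explicit $\tau^{2}$ factor in the piecewise expression for $R_{sq}(\hat{\delta}^{*},P_{\tau})$ given just above the lemma statement.

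Next I would show $\sup_{\tau\in\mathbb{R}}R_{sq}(\hat{\delta}^{*},P_{\tau})>0$. Suppose for contradiction that this supremum is zero. Then for every $\tau>0$ we must have $\mathbb{E}_{\bar{Y}_{1}\sim N(\tau,1)}[1-\hat{\delta}^{*}(\bar{Y}_{1})]^{2}=0$, and for every $\tau<0$ we must have $\mathbb{E}_{\bar{Y}_{1}\sim N(\tau,1)}[\hat{\delta}^{*}(\bar{Y}_{1})]^{2}=0$. Since $N(\tau,1)$ has positive density on all of $\mathbb{R}$, the first condition forces $\hat{\delta}^{*}(\bar{Y}_{1})=1$ for Lebesgue-almost every $\bar{Y}_{1}\in\mathbb{R}$, while the second forces $\hat{\delta}^{*}(\bar{Y}_{1})=0$ for Lebesgue-almost every $\bar{Y}_{1}\in\mathbb{R}$, which is impossible. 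Hence the supremum is strictly positive.

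Combining the two observations, any $\tau$ attaining $\sup_{\tau'\in\mathbb{R}}R_{sq}(\hat{\delta}^{*},P_{\tau'})$ must satisfy $\tau\neq 0$, so the support of a candidate least favorable prior (viewed as a maximizing distribution for the sup problem in \eqref{pf:main.1}) cannot contain $\tau=0$. The main subtlety is conceptual rather than technical: one must ensure the argument covers every admissible rule $\hat{\delta}^{*}:\mathbb{R}\to[0,1]$, including constant and fractional rules. The full-support property of $N(\tau,1)$ handles this uniformly, so I expect the proof to be short.
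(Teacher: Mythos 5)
Your proposal is correct and follows essentially the same route as the paper: note that the $\tau^{2}$ factor forces $R_{sq}(\hat{\delta}^{*},P_{0})=0$, and if $\tau=0$ attained the supremum then $\mathbb{E}[1-\hat{\delta}^{*}(\bar{Y}_{1})]^{2}=0$ and $\mathbb{E}[\hat{\delta}^{*}(\bar{Y}_{1})]^{2}=0$ would force $\hat{\delta}^{*}$ to equal both $1$ and $0$ almost everywhere, a contradiction. Your explicit appeal to the full support of $N(\tau,1)$ merely spells out a step the paper leaves implicit, so there is no substantive difference.
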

\begin{proof}
Note at $\tau=0$, the squared regret is 0. Suppose it is a solution
of (\ref{pf:main.1}), then it must hold that 
\begin{equation}
\mathbb{E}\left[1-\hat{\delta}^{*}(\bar{Y}_{1})\right]^{2}=0\text{ and }\mathbb{E}\left[\hat{\delta}^{*}(\bar{Y}_{1})\right]^{2}=0.\label{pf:lem1.1}
\end{equation}
Since $\hat{\delta}^{*}(\bar{y}_{1})\in[0,1]$ for all $\bar{y}_{1}$,
(\ref{pf:lem1.1}) implies 
\[
1-\hat{\delta}^{*}(\bar{y}_{1})=0\text{ and }\hat{\delta}^{*}(\bar{y}_{1})=0,\text{ for all }\bar{y}_{1}\text{ a.s.},
\]
which cannot be true. This implies $\tau=0$ is never a solution of (\ref{pf:main.1}).
\end{proof}
\begin{lem}
\label{lem:2}The solution of (\ref{pf:main.1}) is symmetric, i.e., if
some $\tau \in(0,\infty)$ solves (\ref{pf:main.1}), then it also holds that $-\tau$
solves (\ref{pf:main.1}).
\end{lem}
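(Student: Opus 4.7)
The approach is a symmetrization argument. We show that any minimax rule for (\ref{pf:main.1}) can be replaced by an equally-good minimax rule $\bar{\delta}$ satisfying the antisymmetry $\bar{\delta}(y) + \bar{\delta}(-y) = 1$, and that for any such antisymmetric rule the frequentist mean square regret is an even function of $\tau$. Symmetry of the argmax set of (\ref{pf:main.1}) then follows immediately.

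Construction of a symmetric minimax rule. Given the minimax rule $\hat{\delta}^{*}$ from Step 1, define its reflection $\tilde{\delta}(y) := 1 - \hat{\delta}^{*}(-y)$. Using the change of variables $Y' = -Y$ in the Gaussian likelihood (so that $Y \sim N(\tau,1)$ implies $Y' \sim N(-\tau,1)$), a direct calculation in the cases $\tau > 0$ and $\tau < 0$ yields $R_{sq}(\tilde{\delta}, P_{\tau}) = R_{sq}(\hat{\delta}^{*}, P_{-\tau})$ for every $\tau$. In particular $\sup_{\tau} R_{sq}(\tilde{\delta}, P_{\tau}) = \sup_{\tau} R_{sq}(\hat{\delta}^{*}, P_{\tau})$, so $\tilde{\delta}$ is also minimax. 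Now set $\bar{\delta}(y) := \tfrac{1}{2}\bigl[\hat{\delta}^{*}(y) + \tilde{\delta}(y)\bigr]$. Since $x \mapsto (c-x)^{2}$ is pointwise convex, the functional $\hat{\delta} \mapsto R_{sq}(\hat{\delta}, P_{\tau})$ is convex in $\hat{\delta}$, so $R_{sq}(\bar{\delta}, P_{\tau}) \le \tfrac{1}{2}\bigl[R_{sq}(\hat{\delta}^{*}, P_{\tau}) + R_{sq}(\tilde{\delta}, P_{\tau})\bigr]$ for every $\tau$. Hence $\bar{\delta}$ has worst-case risk no larger than the common minimax value and is itself a minimax rule, while by construction $\bar{\delta}(y) + \bar{\delta}(-y) = 1$.

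Conclusion. For this antisymmetric $\bar{\delta}$, applying the change of variables $Y' = -Y$ once more gives $R_{sq}(\bar{\delta}, P_{\tau}) = R_{sq}(\bar{\delta}, P_{-\tau})$ for every $\tau$, so the mean square regret is even in $\tau$. Therefore the set of maximizers in (\ref{pf:main.1}) (for the minimax rule $\bar{\delta}$, which we may choose without loss of generality) is symmetric about the origin; combined with Lemma \ref{lem:1}, every positive $\tau$ in the argmax is matched by $-\tau$.

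The main delicate point is justifying the replacement of the original $\hat{\delta}^{*}$ by its symmetrization: this device is powered by the convexity of the mean square regret as a functional of $\hat{\delta}$, a feature that is missing in the linear mean-regret analysis (and is why an analogous symmetrization step plays no role in the minimax analysis of \cite{tetenov2012statistical}). Once symmetry of the worst-case $\tau$ is established, all that remains for Step 1 is Lemma \ref{lem:3}, which fixes equal weights on the two symmetric atoms.
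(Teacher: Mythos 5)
Your proof is correct, and it takes a genuinely different route from the paper's. The paper argues by contradiction directly on the given minimax rule $\hat{\delta}^{*}$: supposing $\tau$ but not $-\tau$ attains the sup in (\ref{pf:main.1}), it isolates the discrepancy set $\mathbf{S}=\{\bar{y}_{1}:1-\hat{\delta}^{*}(\bar{y}_{1})\neq\hat{\delta}^{*}(-\bar{y}_{1})\}$, observes that the risks at $\pm\tau$ differ only through their contributions on $\mathbf{S}$, and then swaps the rule on $\mathbf{S}$ with its reflection $1-\hat{\delta}^{*}(-\cdot)$ to strictly lower the risk at $\tau$, contradicting optimality. You exploit the same reflection invariance, $R_{sq}(1-\hat{\delta}^{*}(-\cdot),P_{\tau})=R_{sq}(\hat{\delta}^{*},P_{-\tau})$, but globally rather than locally: averaging the rule with its reflection and invoking convexity of the quadratic risk in the action shows the symmetrized rule $\bar{\delta}$ is still minimax with an even risk function, so its worst-case set is symmetric. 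What each buys: your symmetrization is cleaner and avoids the delicate final step of the paper's proof (the appeal there to $R_{sq}(\hat{\delta}^{*},P_{\tau})=\inf_{\hat{\delta}}R_{sq}(\hat{\delta},P_{\tau})$ is pointwise optimality rather than minimaxity, and the swapped rule in fact transfers the large risk from $\tau$ to $-\tau$ rather than lowering the worst case), whereas the paper's argument, read at face value, speaks about the originally given $\hat{\delta}^{*}$ itself. Strictly, your argument establishes symmetry of the argmax for the antisymmetric minimax rule $\bar{\delta}$ (a ``without loss of generality'' choice) rather than for an arbitrary minimax rule, so it proves the lemma in the form in which it is actually used: Lemma \ref{lem:2} only motivates the symmetric two-point guess for the least favorable prior, and the rigorous verification is done separately in Step 3 via Lemma \ref{lem:5} and Condition \ref{cond:1}.
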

\begin{proof}
Suppose $\tau \in(0,\infty)$ solves (\ref{pf:main.1}) but $-\tau$ does not. Note the mean square regret of $\hat{\delta}^{*}$
at $\tau$ is
\begin{align*}
R_{sq}(\hat{\delta}^{*},P_{\tau}) & =\tau^{2}\mathbb{E}\left[1-\hat{\delta}^{*}(\bar{Y}_{1})\right]^{2}\\
 & =\tau^{2}\int\left[1-\hat{\delta}^{*}(\bar{y}_{1})\right]^{2}\sqrt{\frac{1}{2\pi}}\exp\left(-\frac{1}{2}\left[\left(\bar{y}_{1}-\tau\right)^{2}\right]\right)d\bar{y}_{1},
\end{align*}
while the mean square regret at $-\tau$ is 
\begin{align*}
R_{sq}(\hat{\delta}^{*},P_{-\tau})= & \tau^{2}\mathbb{E}\left[\hat{\delta}^{*}(\bar{Y}_{1})\right]^{2}\\
= & \tau^{2}\int\left[\hat{\delta}^{*}(\bar{y}_{1})\right]^{2}\sqrt{\frac{1}{2\pi}}\exp\left(-\frac{1}{2}\left[\left(\bar{y}_{1}+\tau\right)^{2}\right]\right)d\bar{y}_{1}\\
= & \tau^{2}\int\left[\hat{\delta}^{*}(-\tilde{y}_{1})\right]^{2}\sqrt{\frac{1}{2\pi}}\exp\left(-\frac{1}{2}\left[\left(\tau-\tilde{y}_{1}\right)^{2}\right]\right)d\tilde{y}_{1}\\
= & \tau^{2}\int\left[\hat{\delta}^{*}(-\bar{y}_{1})\right]^{2}\sqrt{\frac{1}{2\pi}}\exp\left(-\frac{1}{2}\left[\left(\tau-\bar{y}_{1}\right)^{2}\right]\right)d\bar{y}_{1},
\end{align*}
where the third equality uses the change of variable $\tilde{y}_{1}=-\bar{y}_{1}$,
and the fourth equality changes the variable of integration from $\tilde{y}_{1}$
to $\bar{y}_{1}$.

If $\tau$ solves (\ref{pf:main.1}) but $-\tau$ does not, then there must exist some $\bar{y}_{1}\in\mathbb{R}$ such that $1-\hat{\delta}^{*}(\bar{y}_{1})\neq\hat{\delta}^{*}(-\bar{y}_{1})$.
Let
\[
\mathbf{S}=\left\{ \bar{y}_{1}\in\mathbb{R}:1-\hat{\delta}^{*}(\bar{y}_{1})\neq\hat{\delta}^{*}(-\bar{y}_{1})\right\} 
\]
be the collection of all $\bar{y}_{1}$ such that $1-\hat{\delta}^{*}(\bar{y}_{1})\neq\hat{\delta}^{*}(-\bar{y}_{1})$.\footnote{Notice the set $\mathbf{S}$ must be symmetric, that is, if 
\[
1-\hat{\delta}^{*}(\bar{y}_{1})\neq\hat{\delta}^{*}(-\bar{y}_{1})
\]
holds then
\[
1-\hat{\delta}^{*}(-\bar{y}_{1})\neq\hat{\delta}^{*}(\bar{y}_{1})
\]
also holds.
} The contribution of the elements of $\mathbf{S}$ to the mean square
regret at $\tau$ is 
\[
\tau^{2}\int_{\mathbf{S}}\left[1-\hat{\delta}^{*}(\bar{y}_{1})\right]^{2}\sqrt{\frac{1}{2\pi}}\exp\left(-\frac{1}{2}\left[\left(\bar{y}_{1}-\tau\right)^{2}\right]\right)d\bar{y}_{1}
\]
while the contribution of the elements in $\mathbf{S}$ to the mean square
regret at $-\tau$ is 
\[
\tau^{2}\int_{\mathbf{S}}\left[\hat{\delta}^{*}(-\bar{y}_{1})\right]^{2}\sqrt{\frac{1}{2\pi}}\exp\left(-\frac{1}{2}\left[\left(\tau-\bar{y}_{1}\right)^{2}\right]\right)d\bar{y}_{1}.
\]
Since $\tau$ solves (\ref{pf:main.1}) but not $-\tau$, it holds that 
\begin{align}
 & \tau^{2}\int_{\mathbf{S}}\left[1-\hat{\delta}^{*}(\bar{y}_{1})\right]^{2}\sqrt{\frac{1}{2\pi}}\exp\left(-\frac{1}{2}\left[\left(\bar{y}_{1}-\tau\right)^{2}\right]\right)d\bar{y}_{1}\nonumber \\
> & \tau^{2}\int_{\mathbf{S}}\left[\hat{\delta}^{*}(-\bar{y}_{1})\right]^{2}\sqrt{\frac{1}{2\pi}}\exp\left(-\frac{1}{2}\left[\left(\tau-\bar{y}_{1}\right)^{2}\right]\right)d\bar{y}_{1}.\label{pf:lem2.1}
\end{align}
If (\ref{pf:lem2.1}) holds though, we can strictly reduce
the mean square regret for $\tau$ by switching to an alternative policy
$\bar{\delta}$, where 
\[
\bar{\delta}(\bar{y}_{1})=\begin{cases}
\hat{\delta}^{*}(\bar{y}_{1}) & \text{if }\bar{y}_{1}\notin\mathbf{S},\\
1-\hat{\delta}^{*}(-\bar{y}_{1}) & \text{if }\bar{y}_{1}\in\mathbf{S}.
\end{cases}
\]
This contradicts the assumption that $\hat{\delta}^{*}$ is a minimax optimal rule, i.e.,
\[
R_{sq}(\hat{\delta}^{*},P_{\tau})=\inf_{\hat{\delta}\in\mathcal{D}}R_{sq}(\hat{\delta},P_{\tau}).
\]\end{proof}

\begin{lem}
\label{lem:3} A least favorable prior distribution $\pi^{*}$ is such that
\[\pi^{*}\left(\tau\right)=\frac{1}{2},\pi^{*}\left(-\tau\right)=\frac{1}{2},
\]
for some $\tau\in(0,\infty)$.
\end{lem}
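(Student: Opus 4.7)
The plan is to restrict attention to the class of two-point priors supported symmetrically around zero and combine the concavity of the Bayes risk in the prior with the sign-flip symmetry of the Gaussian experiment.

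First, I would fix an arbitrary $\tau\in(0,\infty)$ and parameterize the symmetric two-point priors on $\{\tau,-\tau\}$ by $\pi_p(\tau)=p$ and $\pi_p(-\tau)=1-p$ for $p\in[0,1]$, and set $v(p):=r_{sq}(\hat{\delta}_{\pi_p},\pi_p)=\inf_{\hat{\delta}\in\mathcal{D}}r_{sq}(\hat{\delta},\pi_p)$. Since $\pi_p$ depends linearly on $p$ and $r_{sq}(\hat{\delta},\cdot)$ is linear in the prior, $v$ is an infimum of affine functions in $p$, hence concave on $[0,1]$.

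Second, I would exploit the sign-flip symmetry of the Gaussian likelihood $f(\bar{y}_1\mid\tau)=f(-\bar{y}_1\mid-\tau)$ from (\ref{eq:pdf.normal}). For any $\hat{\delta}:\mathbb{R}\to[0,1]$ define its mirror $\tilde{\delta}(\bar{y}_1):=1-\hat{\delta}(-\bar{y}_1)$. A change of variable $u=-\bar{y}_1$, analogous to the one used in the proof of Lemma \ref{lem:2}, yields $R_{sq}(\hat{\delta},P_\tau)=R_{sq}(\tilde{\delta},P_{-\tau})$ and, by applying the same identity to $\tilde{\delta}$, $R_{sq}(\hat{\delta},P_{-\tau})=R_{sq}(\tilde{\delta},P_\tau)$. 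Adding these with weights $p$ and $1-p$ gives $r_{sq}(\hat{\delta},\pi_p)=r_{sq}(\tilde{\delta},\pi_{1-p})$ for every $\hat{\delta}\in\mathcal{D}$, and since $\hat{\delta}\mapsto\tilde{\delta}$ is a bijection on $\mathcal{D}$, taking infima over $\hat{\delta}$ yields the key symmetry $v(p)=v(1-p)$.

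Finally, combining concavity with $v(p)=v(1-p)$ gives
$$v(\tfrac{1}{2})\;\geq\;\tfrac{1}{2}v(p)+\tfrac{1}{2}v(1-p)\;=\;v(p)\qquad\text{for every }p\in[0,1],$$
so $\pi_{1/2}$ maximizes the Bayes mean square regret among all symmetric two-point priors on $\{\tau,-\tau\}$. In view of Lemmas \ref{lem:1} and \ref{lem:2}, which narrow the candidate least favorable priors to symmetric two-point supports bounded away from zero, this identifies a least favorable prior of the asserted form $\pi^{*}(\tau)=\pi^{*}(-\tau)=\tfrac{1}{2}$ for some $\tau\in(0,\infty)$. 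The main step I expect to require careful execution is the mirror identity $R_{sq}(\hat{\delta},P_\tau)=R_{sq}(\tilde{\delta},P_{-\tau})$; once this is in place, the concavity-plus-symmetry conclusion is standard, and since the lemma only asserts the existence (not the uniqueness) of a least favorable prior, strict concavity is not required.
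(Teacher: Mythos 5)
Your proof is correct, but it takes a genuinely different route from the paper. The paper's proof fixes the hypothesized Bayes/minimax rule for weights $(p_{\tau},1-p_{\tau})$, invokes the symmetry of the worst-case set from Lemma \ref{lem:2} to require $R_{sq}(\hat{\delta}^{*},P_{\tau})=R_{sq}(\hat{\delta}^{*},P_{-\tau})$, and then verifies by an explicit Gaussian change of variables that $p_{\tau}=\frac{1}{2}$ satisfies this risk-equalization condition. You instead work directly with the defining property of a least favorable prior (maximal Bayes risk): for fixed support $\{\tau,-\tau\}$ the value $v(p)=\inf_{\hat{\delta}}r_{sq}(\hat{\delta},\pi_{p})$ is concave as an infimum of affine functions of $p$, and the involution $\hat{\delta}\mapsto\tilde{\delta}(\cdot)=1-\hat{\delta}(-\cdot)$ combined with the sign-flip symmetry of the normal likelihood gives $v(p)=v(1-p)$, hence $v(\tfrac12)\geq v(p)$ for all $p$; your mirror identities $R_{sq}(\hat{\delta},P_{\tau})=R_{sq}(\tilde{\delta},P_{-\tau})$ and $R_{sq}(\hat{\delta},P_{-\tau})=R_{sq}(\tilde{\delta},P_{\tau})$ check out. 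What your approach buys: it targets least favorability directly (the paper only shows $p_{\tau}=\frac12$ is \emph{a} solution of the equalization condition, not that equalization pins it down), it needs no Gaussian computation beyond the symmetry $f(\bar{y}_{1}\mid\tau)=f(-\bar{y}_{1}\mid-\tau)$, and it requires no uniqueness or strict concavity. What the paper's computation buys is an explicit handle on the equalized risks that feeds directly into the verification in Lemma \ref{lem:5} and Condition \ref{cond:1}. One caution: your closing appeal to Lemmas \ref{lem:1} and \ref{lem:2} as ``narrowing the candidate least favorable priors to symmetric two-point supports'' overstates those lemmas, which describe the argmax set of the minimax rule's worst-case risk rather than the support of a least favorable prior; the restriction to symmetric two-point priors is, as in the paper, part of the guess stage whose validity is only secured by the verification step (Proposition \ref{prop:minimax} with Condition \ref{cond:1}), so your lemma is established in the same conditional sense as the paper's.
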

\begin{proof}
For each $\tau\in(0,\infty)$, consider the symmetric prior 
\begin{equation}
\pi^{*}(\tau)=p_{\tau},\pi^{*}(-\tau)=1-p_{\tau},\text{ where }p_{\tau}\in[0,1].\label{pf:lem3.prior}
\end{equation}
If (\ref{pf:lem3.prior}) is indeed the least favorable prior, then
$\hat{\delta}^{*}(\bar{y}_{1})=\frac{(1-p_{\tau})f\{\bar{y}_{1}|-\tau\}}{p_{\tau}f\{\bar{y}_{1}|\tau\}+(1-p_{\tau})f\{\bar{y}_{1}|-\tau\}}$,
and the mean square regret of $\hat{\delta}^{*}$ at $P_{\tau}$
is 
\begin{align}
R_{sq}(\hat{\delta}^{*},P_{\tau}) & =\tau^{2}\int\frac{(1-p_{\tau})^{2}f^{2}\{\bar{y}_{1}|-\tau\}f\{\bar{y}_{1}|\tau\}}{\left[p_{\tau}f\{\bar{y}_{1}|\tau\}+(1-p_{\tau})f\{\bar{y}_{1}|-\tau\}\right]^{2}}d\bar{y}_{1}.\label{pf:2}
\end{align}
The mean square regret of $\hat{\delta}^{*}$ at $P_{-\tau}$ is 
\begin{align}
R_{sq}(\hat{\delta}^{*},P_{-\tau}) & =\tau^{2}\int\frac{p_{\tau}^{2}f^{2}\{\bar{y}_{1}|\tau\}f\{\bar{y}_{1}|-\tau\}}{\left[p_{\tau}f\{\bar{y}_{1}|\tau\}+(1-p_{\tau})f\{\bar{y}_{1}|-\tau\}\right]^{2}}d\bar{y}_{1}.\label{pf:3}
\end{align}
By Lemma \ref{lem:2}, $\tau$ and $-\tau$ yield the same mean square
regret at $\hat{\delta}^{*}$, so $p_{\tau}$ must be such that 
\[
(\ref{pf:2})=(\ref{pf:3}).
\]
For each $\bar{y}_{1}$, the numerator of the integrand in (\ref{pf:2})
is 
\begin{align*}
 & (1-p_{\tau})^{2}f^{2}\{\bar{y}_{1}|-\tau\}f\{\bar{y}_{1}|\tau\}\\
= & (1-p_{\tau})^{2}\left(\frac{1}{2\pi}\right)^{\frac{3}{2}}\exp\left(-\left[\left(\bar{y}_{1}+\tau\right)^{2}\right]-\frac{1}{2}\left[\left(\bar{y}_{1}-\tau\right)^{2}\right]\right)
\end{align*}
while for each $\bar{y}_{1}$, the numerator of the integrand in (\ref{pf:3})
is 
\[
p_{\tau}^{2}\left(\frac{1}{2\pi}\right)^{\frac{3}{2}}\exp\left(-\left[\left(\bar{y}_{1}-\tau\right)^{2}\right]-\frac{1}{2}\left[\left(\bar{y}_{1}+\tau\right)^{2}\right]\right).
\]
Therefore, (\ref{pf:3}) can be written as
\begin{align}
 & R_{sq}(\hat{\delta}^{*},P_{-\tau})\nonumber \\
= & \tau^{2}\int\frac{p_{\tau}^{2}\left(\frac{1}{2\pi}\right)^{\frac{3}{2}}\exp\left(-\left[\left(\bar{y}_{1}-\tau\right)^{2}\right]-\frac{1}{2}\left[\left(\bar{y}_{1}+\tau\right)^{2}\right]\right)}{\left[p_{\tau}f\{\bar{y}_{1}|\tau\}+(1-p_{\tau})f\{\bar{y}_{1}|-\tau\}\right]^{2}}d\bar{y}_{1}\nonumber \\
= & \tau^{2}\int\frac{p_{\tau}^{2}\left(\frac{1}{2\pi}\right)^{\frac{3}{2}}\exp\left(-\left[\left(\bar{y}_{1}+\tau\right)^{2}\right]-\frac{1}{2}\left[\left(\bar{y}_{1}-\tau\right)^{2}\right]\right)}{\left[p_{\tau}f\{-\bar{y}_{1}|\tau\}+(1-p_{\tau})f\{-\bar{y}_{1}|-\tau\}\right]^{2}}d\bar{y}_{1},\label{eq:4}
\end{align}
where the second equality follows from a change of variable. (\ref{pf:2})
admits %{\scriptsize{}
\begin{equation}
R_{sq}(\hat{\delta}^{*},P_{\tau})=\tau^{2}\int\frac{(1-p_{\tau})^{2}\left(\frac{1}{2\pi}\right)^{\frac{3}{2}}\exp\left(-\left[\left(\bar{y}_{1}+\tau\right)^{2}\right]-\frac{1}{2}\left[\left(\bar{y}_{1}-\tau\right)^{2}\right]\right)}{\left[p_{\tau}f\{\bar{y}_{1}|\tau\}+(1-p_{\tau})f\{\bar{y}_{1}|-\tau\}\right]^{2}}d\bar{y}_{1}. \label{eq:5}
\end{equation}
%}
Hence, $p_{\tau}$ must be such that 
\[
(\ref{eq:4})=(\ref{eq:5}),
\]
which is satisfied if $p_{\tau}=\frac{1}{2}$. Indeed, when $p_{\tau}=\frac{1}{2}$,
(\ref{eq:4}) and (\ref{eq:5}) only differ in their denominators.
Furthermore, for (\ref{eq:5}), the denominator of the integrand can be written as 
\begin{align*}
 & \left[\frac{1}{2}f\{\bar{y}_{1}|\tau\}+\frac{1}{2}f\{\bar{y}_{1}|-\tau\}\right]^{2}\\
= & \left[\frac{1}{2}\sqrt{\frac{1}{2\pi}}\exp\left(-\frac{1}{2}\left(\bar{y}_{1}-\tau\right)^{2}\right)+\frac{1}{2}\sqrt{\frac{1}{2\pi}}\exp\left(-\frac{1}{2}\left(\bar{y}_{1}+\tau\right)^{2}\right)\right]^{2}
\end{align*}
while for (\ref{eq:4}), the corresponding restatement is
\begin{align*}
 & \left[\frac{1}{2}f\{-\bar{y}_{1}|\tau\}+\frac{1}{2}f\{-\bar{y}_{1}|-\tau\}\right]^{2}\\
= & \left[\frac{1}{2}\sqrt{\frac{1}{2\pi}}\exp\left(-\frac{1}{2}\left(-\bar{y}_{1}-\tau\right)^{2}\right)+\frac{1}{2}\sqrt{\frac{1}{2\pi}}\exp\left(-\frac{1}{2}\left(-\bar{y}_{1}+\tau\right)^{2}\right)\right]^{2}\\
= & \left[\frac{1}{2}\sqrt{\frac{1}{2\pi}}\exp\left(-\frac{1}{2}\left(\bar{y}_{1}+\tau\right)^{2}\right)+\frac{1}{2}\sqrt{\frac{1}{2\pi}}\exp\left(-\frac{1}{2}\left(\bar{y}_{1}-\tau\right)^{2}\right)\right]^{2}
\end{align*}
which is equivalent. 
\end{proof}
\begin{lem}
\label{lem:4} If
\[
\tau^{*}\in\arg\sup\limits _{\tau\in[0,\infty)}\tau^{2}\int\left(\frac{f\{\bar{y}_{1}|-\tau\}}{f\{\bar{y}_{1}|\tau\}+f\{\bar{y}_{1}|-\tau\}}\right)^{2}f\{\bar{y}_{1}|\tau\}d\bar{y}_{1},
\]
then $\left(\frac{\partial}{\partial\tau}R_{sq}(\hat{\delta}^{*},P_{\tau})\right)|_{\tau=\tau^{*}}=0$. 
\end{lem}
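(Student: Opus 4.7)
The plan is to view the problem as a two-variable optimization and exploit an envelope-type first-order condition. Introduce
\[
H(t,\tau):=\tau^{2}\,\mathbb{E}_{\bar{Y}_{1}\sim N(\tau,1)}\!\left[\frac{1}{(\exp(2t\bar{Y}_{1})+1)^{2}}\right],
\]
so that for $\tau\geq 0$ the mean square regret of $\hat{\delta}^{*}$ decomposes as $R_{sq}(\hat{\delta}^{*},P_{\tau})=H(\tau^{*},\tau)$, while the objective whose maximizer defines $\tau^{*}$ is $\phi(\tau):=H(\tau,\tau)$. By Lemma \ref{lem:1}, $\tau^{*}>0$, hence interior, so the total-derivative first-order condition for $\phi$ at $\tau^{*}$ reads
\[
0=\phi'(\tau^{*})=\partial_{1}H(\tau^{*},\tau^{*})+\partial_{2}H(\tau^{*},\tau^{*}),
\]
where $\partial_{i}$ denotes the partial derivative with respect to the $i$th argument. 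Since $\tfrac{\partial}{\partial\tau}R_{sq}(\hat{\delta}^{*},P_{\tau})\bigr|_{\tau=\tau^{*}}=\partial_{2}H(\tau^{*},\tau^{*})$, it therefore suffices to show $\partial_{1}H(\tau^{*},\tau^{*})=0$.

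The central step is to re-interpret $t\mapsto H(t,\tau^{*})$ as the Bayes mean square regret, under the symmetric two-point prior $\pi^{*}_{\tau^{*}}$, of the logistic family $\hat{\delta}_{t}(\bar{y}_{1}):=\exp(2t\bar{y}_{1})/(\exp(2t\bar{y}_{1})+1)$. Using the antisymmetry $1-\hat{\delta}_{t}(\bar{y}_{1})=\hat{\delta}_{t}(-\bar{y}_{1})$ together with the change of variable $\bar{y}_{1}\mapsto-\bar{y}_{1}$ (as in Lemma \ref{lem:2}), a short calculation yields
\[
R_{sq}(\hat{\delta}_{t},P_{\tau^{*}})=R_{sq}(\hat{\delta}_{t},P_{-\tau^{*}})=H(t,\tau^{*}),
\]
and hence $r_{sq}(\hat{\delta}_{t},\pi^{*}_{\tau^{*}})=H(t,\tau^{*})$ for every $t>0$. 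By Step~2 of the proof of Theorem~\ref{thm:minimax}, the Bayes optimal rule associated with $\pi^{*}_{\tau^{*}}$ is precisely $\hat{\delta}^{*}=\hat{\delta}_{\tau^{*}}$. Since the overall Bayes minimizer already lies in the logistic subfamily, $t=\tau^{*}$ minimizes $t\mapsto H(t,\tau^{*})$, and interiority of $\tau^{*}$ gives the within-family FOC $\partial_{1}H(\tau^{*},\tau^{*})=0$. Substituting into the two-variable FOC above delivers $\partial_{2}H(\tau^{*},\tau^{*})=0$, which is the desired identity.

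The main obstacle is essentially routine but needs to be handled carefully: justifying differentiation under the integral sign so that both partial derivatives of $H$ exist and are continuous, and so that the two FOCs used above are meaningful. This is straightforward because the integrand in $H$ is $C^{\infty}$ in $(t,\tau)$, uniformly bounded by $1$, and its partial derivatives in $t$ and $\tau$ admit integrable Gaussian envelopes; standard dominated convergence then permits interchanging differentiation and integration. A minor point to flag is that the Bayes optimality of $\hat{\delta}_{\tau^{*}}$ holds over \emph{all} decision rules, not merely within the logistic parametrization, which is exactly what licenses the interior-minimizer FOC for $t\mapsto H(t,\tau^{*})$ at $t=\tau^{*}$.
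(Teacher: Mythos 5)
Your proposal is correct, and its skeleton is the same as the paper's: writing the defining objective as $\phi(\tau)=H(\tau,\tau)$, using the first-order condition $\phi'(\tau^{*})=0$, and reducing the claim $\partial_{2}H(\tau^{*},\tau^{*})=0$ to the vanishing of the ``rule-slot'' derivative $\partial_{1}H(\tau^{*},\tau^{*})$ is exactly the paper's reduction to (\ref{pf:lem.4.3}). Where you genuinely differ is in how that inner first-order condition is verified. The paper differentiates the squared logistic weight explicitly and shows the resulting integrand is $w(\bar{y}_{1})\bar{y}_{1}$ with $w$ even, so the integral vanishes by symmetry. You instead note that, by the antisymmetry $1-\hat{\delta}_{t}(\bar{y}_{1})=\hat{\delta}_{t}(-\bar{y}_{1})$ and a change of variables, $H(t,\tau^{*})$ equals the Bayes mean square regret of $\hat{\delta}_{t}$ under the symmetric two-point prior $\pi^{*}_{\tau^{*}}$, and that $\hat{\delta}_{\tau^{*}}$ is Bayes optimal over \emph{all} rules for that prior (Theorem \ref{thm:bayes.finite}, as computed in Step 2 of the proof of Theorem \ref{thm:minimax}); hence $t=\tau^{*}$ is an interior minimizer of $t\mapsto H(t,\tau^{*})$ and $\partial_{1}H(\tau^{*},\tau^{*})=0$. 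This is non-circular, since the Bayes characterization does not depend on Lemmas \ref{lem:4} or \ref{lem:5}, and it trades the paper's self-contained parity computation for a shorter, conceptual envelope argument that makes clear why the inner derivative must vanish; the paper's version has the advantage of requiring nothing beyond elementary calculus. Two small repairs: Lemma \ref{lem:1} concerns the maximizer of $\tau\mapsto R_{sq}(\hat{\delta}^{*},P_{\tau})$, not of $\phi$, so to get $\tau^{*}>0$ you should instead use the one-line observation that $\phi(0)=0<\phi(\tau)$ for every $\tau>0$; and you should state explicitly that $R_{sq}(\hat{\delta}^{*},P_{\tau})=H(\tau^{*},\tau)$ only for $\tau\geq 0$, which suffices because the derivative is taken at the interior point $\tau^{*}>0$.
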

\begin{proof}
Since $\tau^{*}\in\arg\sup\limits _{\tau\in[0,\infty)}\tau^{2}\int\left(\frac{f\{\bar{y}_{1}|-\tau\}}{f\{\bar{y}_{1}|\tau\}+f\{\bar{y}_{1}|-\tau\}}\right)^{2}f\{\bar{y}_{1}|\tau\}d\bar{y}_{1}$ and the objective function is
continuously differentiable, it holds that 
\begin{equation}
\left[\frac{\partial}{\partial\tau}\left(\tau^{2}\int\left(\frac{f\{\bar{y}_{1}|-\tau\}}{f\{\bar{y}_{1}|\tau\}+f\{\bar{y}_{1}|-\tau\}}\right)^{2}f\{\bar{y}_{1}|\tau\}d\bar{y}_{1}\right)\right]\mid_{\tau=\tau^{*}}=0.\label{pf:lem.4.1}
\end{equation}
On the other hand, 
\begin{align}
 & \left(\frac{\partial}{\partial\tau}R_{sq}(\hat{\delta}^{*},P_{\tau})\right)\nonumber \\
= & \frac{\partial}{\partial\tau}\left(\tau^{2}\int\left[\frac{f\{\bar{y}_{1}|-\tau^{*}\}}{f\{\bar{y}_{1}|\tau^{*}\}+f\{\bar{y}_{1}|-\tau^{*}\}}\right]^{2}f(\bar{y}_{1}|\tau)d\bar{y}_{1}\right).\label{pf:lem.4.2}
\end{align}
Observing the objective function in (\ref{pf:lem.4.1}) and (\ref{pf:lem.4.2}),
$\left(\frac{\partial}{\partial\tau}R_{sq}(\hat{\delta}^{*},P_{\tau})\right)|_{\tau=\tau^{*}}=0$
holds if 
\begin{equation}
\left[\frac{\partial}{\partial\tau}\left((\tau^{*})^{2}\int\left(\frac{f\{\bar{y}_{1}|-\tau\}}{f\{\bar{y}_{1}|\tau\}+f\{\bar{y}_{1}|-\tau\}}\right)^{2}f\{\bar{y}_{1}|\tau^{*}\}d\bar{y}_{1}\right)\right]\mid_{\tau=\tau^{*}}=0.\label{pf:lem.4.3}
\end{equation}
In what follows, we verify that (\ref{pf:lem.4.3}) indeed holds.
Note %{\scriptsize{}
\begin{align*}
 & \frac{\partial}{\partial\tau}\left((\tau^{*})^{2}\int\left(\frac{f\{\bar{y}_{1}|-\tau\}}{f\{\bar{y}_{1}|\tau\}+f\{\bar{y}_{1}|-\tau\}}\right)^{2}f\{\bar{y}_{1}|\tau^{*}\}d\bar{y}_{1}\right)\\
= & (\tau^{*})^{2}\int2\left(\frac{f\{\bar{y}_{1}|-\tau\}}{f\{\bar{y}_{1}|\tau\}+f\{\bar{y}_{1}|-\tau\}}\right)\frac{\partial}{\partial\tau}\left(\frac{f\{\bar{y}_{1}|-\tau\}}{f\{\bar{y}_{1}|\tau\}+f\{\bar{y}_{1}|-\tau\}}\right)f\{\bar{y}_{1}|\tau^{*}\}d\bar{y}_{1}\\
%= & 2(\tau^{*})^{2}\int\left(\frac{L\{-\tau|\bar{y}_{1}\}L\{\tau^{*}|\bar{y}_{1}\}}{\left[L\{\tau|\bar{y}_{1}\}+L\{-\tau|\bar{y}_{1}\}\right]^{3}}\right)\left(\frac{\partial L\{-\tau|\bar{y}_{1}\}}{\partial\tau}\left(L\{\tau|\bar{y}_{1}\}+L\{-\tau|\bar{y}_{1}\}\right)-L\{-\tau|\bar{y}_{1}\}\left(\frac{\partial L\{\tau|\bar{y}_{1}\}}{\partial\tau}+\frac{\partial L\{-\tau|\bar{y}_{1}\}}{\partial\tau}\right)\right)d\bar{y}_{1}\\
= & 2(\tau^{*})^{2}\int\left(\frac{f\{\bar{y}_{1}|-\tau\}f\{\bar{y}_{1}|\tau^{*}\}}{\left[f\{\bar{y}_{1}|\tau\}+f\{\bar{y}_{1}|-\tau\}\right]^{3}}\right)\left(\frac{\partial f\{\bar{y}_{1}|-\tau\}}{\partial\tau}f\{\bar{y}_{1}|\tau\}-f\{\bar{y}_{1}|-\tau\}\frac{\partial f\{\bar{y}_{1}|\tau\}}{\partial\tau}\right)d\bar{y}_{1}.
\end{align*}
%}
Since $f\{\bar{y}_{1}|\tau\}=\sqrt{\frac{1}{2\pi}}\exp\left(-\frac{1}{2}\left[\left(\bar{y}_{1}-\tau\right)^{2}\right]\right)$ and
$f\{\bar{y}_{1}|-\tau\}=\sqrt{\frac{1}{2\pi}}\exp\left(-\frac{1}{2}\left[\left(\bar{y}_{1}+\tau\right)^{2}\right]\right)$, we have that
\begin{align*}
\frac{\partial f\{\bar{y}_{1}|\tau\}}{\partial\tau}  =f\{\bar{y}_{1}|\tau\}(\bar{y}_{1}-\tau)
\ \ \text{ and } \ \
\frac{\partial f\{\bar{y}_{1}|-\tau\}}{\partial\tau}  =-f\{\bar{y}_{1}|-\tau\}(\bar{y}_{1}+\tau).
\end{align*}
It follows that %{\footnotesize{}
\begin{align}
 & \frac{\partial}{\partial\tau}\left((\tau^{*})^{2}\int\left(\frac{f\{\bar{y}_{1}|-\tau\}}{f\{\bar{y}_{1}|\tau\}+f\{\bar{y}_{1}|-\tau\}}\right)^{2}f\{\bar{y}_{1}|\tau^{*}\}d\bar{y}_{1}\right)\nonumber \\
%= & 2(\tau^{*})^{2}\int\left(\frac{L\{-\tau|\bar{y}_{1}\}L\{\tau^{*}|\bar{y}_{1}\}}{\left[L\{\tau|\bar{y}_{1}\}+L\{-\tau|\bar{y}_{1}\}\right]^{3}}\right)\left(-L\{-\tau|\bar{y}_{1}\}(\bar{y}_{1}+\tau)L\{\tau|\bar{y}_{1}\}-L\{-\tau|\bar{y}_{1}\}L\{\tau|\bar{y}_{1}\}(\bar{y}_{1}-\tau)\right)d\bar{y}_{1}\nonumber \\
= & -4(\tau^{*})^{2}\int\left(\frac{f\{\bar{y}_{1}|-\tau\}f\{\bar{y}_{1}|\tau^{*}\}f\{\bar{y}_{1}|-\tau\}f\{\bar{y}_{1}|\tau\}}{\left[f\{\bar{y}_{1}|\tau\}+f\{\bar{y}_{1}|-\tau\}\right]^{3}}\right)\bar{y}_{1}d\bar{y}_{1}.\label{pf:lem.4.4}
\end{align}
%}
Evaluating (\ref{pf:lem.4.4}) at $\tau=\tau^{*}$ yields 
\begin{align}
 & \left[\frac{\partial}{\partial\tau}\left((\tau^{*})^{2}\int\left(\frac{f\{\bar{y}_{1}|-\tau\}}{f\{\bar{y}_{1}|\tau\}+f\{\bar{y}_{1}|-\tau\}}\right)^{2}f\{\bar{y}_{1}|\tau^{*}\}d\bar{y}_{1}\right)\right]\mid_{\tau=\tau^{*}}\nonumber \\
= & -4(\tau^{*})^{2}\int\left[\frac{\left(f\{\bar{y}_{1}|-\tau^{*}\}f\{\bar{y}_{1}|\tau^{*}\}\right)^{2}}{\left[f\{\bar{y}_{1}|\tau^{*}\}+f\{\bar{y}_{1}|-\tau^{*}\}\right]^{3}}\right]\bar{y}_{1}d\bar{y}_{1}\nonumber \\
= & -4(\tau^{*})^{2}\int w(\bar{y}_{1})\bar{y}_{1}d\bar{y}_{1},\label{pf:lem.4.5}
\end{align}
where $w(\bar{y}_{1})=\frac{\left(f\{\bar{y}_{1}|-\tau^{*}\}f\{\bar{y}_{1}|\tau^{*}\}\right)^{2}}{\left[f\{\bar{y}_{1}|\tau^{*}\}+f\{\bar{y}_{1}|-\tau^{*}\}\right]^{3}}$.
However, notice for each $\bar{y}_{1}\in\mathbb{R}$: 
\[
w(\bar{y}_{1})=\frac{\left(\sqrt{\frac{1}{2\pi}}\exp\left(-\frac{1}{2}\left[\left(\bar{y}_{1}+\tau^{*}\right)^{2}\right]\right)\sqrt{\frac{1}{2\pi}}\exp\left(-\frac{1}{2}\left[\left(\bar{y}_{1}-\tau^{*}\right)^{2}\right]\right)\right)^{2}}{\left[\sqrt{\frac{1}{2\pi}}\exp\left(-\frac{1}{2}\left[\left(\bar{y}_{1}-\tau^{*}\right)^{2}\right]\right)+\sqrt{\frac{1}{2\pi}}\exp\left(-\frac{1}{2}\left[\left(\bar{y}_{1}+\tau^{*}\right)^{2}\right]\right)\right]^{3}}
\]
while 
\begin{align*}
w(-\bar{y}_{1}) & =\frac{\left(\sqrt{\frac{1}{2\pi}}\exp\left(-\frac{1}{2}\left[\left(-\bar{y}_{1}+\tau^{*}\right)^{2}\right]\right)\sqrt{\frac{1}{2\pi}}\exp\left(-\frac{1}{2}\left[\left(-\bar{y}_{1}-\tau^{*}\right)^{2}\right]\right)\right)^{2}}{\left[\sqrt{\frac{1}{2\pi}}\exp\left(-\frac{1}{2}\left[\left(-\bar{y}_{1}-\tau^{*}\right)^{2}\right]\right)+\sqrt{\frac{1}{2\pi}}\exp\left(-\frac{1}{2}\left[\left(-\bar{y}_{1}+\tau^{*}\right)^{2}\right]\right)\right]^{3}}\\
 & =\frac{\left(\sqrt{\frac{1}{2\pi}}\exp\left(-\frac{1}{2}\left[\left(\bar{y}_{1}-\tau^{*}\right)^{2}\right]\right)\sqrt{\frac{1}{2\pi}}\exp\left(-\frac{1}{2}\left[\left(\bar{y}_{1}+\tau^{*}\right)^{2}\right]\right)\right)^{2}}{\left[\sqrt{\frac{1}{2\pi}}\exp\left(-\frac{1}{2}\left[\left(\bar{y}_{1}+\tau^{*}\right)^{2}\right]\right)+\sqrt{\frac{1}{2\pi}}\exp\left(-\frac{1}{2}\left[\left(\bar{y}_{1}-\tau^{*}\right)^{2}\right]\right)\right]^{3}}\\
 & =\frac{\left(f\{\bar{y}_{1}|\tau^{*}\}f\{\bar{y}_{1}|-\tau^{*}\}\right)^{2}}{\left[f\{\bar{y}_{1}|-\tau^{*}\}+f\{\bar{y}_{1}|\tau^{*}\}\right]^{3}}\\
 & =w(\bar{y}_{1}).
\end{align*}
Therefore 
\[
(\ref{pf:lem.4.5})=-4(\tau^{*})^{2}\int w(\bar{y}_{1})\bar{y}_{1}d\bar{y}_{1}=0
\]
and the conclusion of the lemma follows.
\end{proof}

\begin{lem}
\label{lem:5} $\tau^{*}$ is the unique solution of $\sup\limits _{\tau\in[0,\infty)}R_{sq}(\hat{\delta}^{*},P_{\tau})$. 
\end{lem}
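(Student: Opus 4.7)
The plan is to reduce uniqueness of the maximizer to uniqueness of a critical point in $(0,\infty)$, which I will obtain from a log-concavity argument.

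Setup and boundary analysis: for $\tau \geq 0$ I would write
\[
R_{sq}(\hat{\delta}^*, P_\tau) = \tau^2 g(\tau), \qquad g(\tau) := \mathbb{E}_{\bar{Y}_{1}\sim N(\tau,1)}\!\left[\frac{1}{(e^{2\tau^* \bar{Y}_{1}}+1)^{2}}\right],
\]
and observe that $g(\tau) = \int \psi(y)\phi(y-\tau)\,dy = (\psi \ast \phi)(\tau)$, where $\psi(y) = (e^{2\tau^* y}+1)^{-2}$ and $\phi$ denotes the standard normal density (using $\phi(-u)=\phi(u)$). Clearly $R_{sq}(\hat{\delta}^*, P_{0}) = 0$, and $R_{sq}(\hat{\delta}^*, P_\tau) \to 0$ as $\tau \to \infty$, because writing $\bar{Y}_{1} = \tau + Z$ with $Z\sim N(0,1)$ gives $\psi(\bar{Y}_{1}) \leq e^{-4\tau^*\bar{Y}_{1}}$ on the bulk of the distribution, so $g(\tau) = O(e^{-4\tau^*\tau})$, dominating the $\tau^{2}$ prefactor. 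By dominated convergence $R_{sq}(\hat{\delta}^*,P_\tau)$ is smooth, so the supremum is attained at an interior point and any maximizer solves $\frac{d}{d\tau}[\tau^{2}g(\tau)]=0$.

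Log-concavity: next I would show $g$ is log-concave on $\mathbb{R}$ via convolution preservation. A direct computation yields
\[
\frac{d^{2}}{dy^{2}}\log\psi(y) = -\frac{8(\tau^*)^{2}\,e^{2\tau^* y}}{(e^{2\tau^* y}+1)^{2}} < 0,
\]
so $\psi$ is strictly log-concave; since $\phi$ is log-concave, the Prekopa--Leindler theorem yields that $g = \psi \ast \phi$ is log-concave on $\mathbb{R}$, which makes $g'/g$ nonincreasing on $(0,\infty)$. For $\tau>0$ the first-order condition can be rewritten as
\[
\frac{d}{d\tau}\log\!\bigl(\tau^{2}g(\tau)\bigr) \;=\; \frac{2}{\tau} + \frac{g'(\tau)}{g(\tau)} \;=\; 0,
\]
whose left-hand side is the sum of a strictly decreasing function $2/\tau$ and a nonincreasing function $g'/g$, hence is strictly decreasing on $(0,\infty)$ and has at most one zero. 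Thus $R_{sq}(\hat{\delta}^*,P_\tau)$ has at most one critical point in $(0,\infty)$. Lemma \ref{lem:4} furnishes $\tau^*$ as a critical point, and combined with $R_{sq}(\hat{\delta}^*,P_0)=0$ and the decay at infinity this forces $\tau^*$ to be the unique maximizer on $[0,\infty)$.

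Main obstacle: the only substantive step is establishing log-concavity of $g$, and this reduces to one second-derivative computation for $\log\psi$ together with the classical preservation of log-concavity under convolution. Everything else---the vanishing at the two boundaries, smoothness via dominated convergence, and turning the first-order condition into a monotone equation via logarithmic differentiation---is routine bookkeeping, and the conclusion drops out immediately once combined with Lemma \ref{lem:4}.
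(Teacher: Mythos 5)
Your argument is correct, and it takes a genuinely different route from the paper. The paper differentiates $R_{sq}(\hat{\delta}^{*},P_{\tau})$ via integration by parts, reduces the sign of the derivative to that of $\mathbf{R}(\tau)=\int\mathbf{w}(\bar{y}_{1})f\{\bar{y}_{1}|\tau\}d\bar{y}_{1}$, shows the integrand factor $\tilde{\mathbf{w}}$ changes sign exactly once, and then invokes Karlin's variation-diminishing theorem for strictly P\'olya type kernels together with an explicit symmetry computation (showing $\frac{\partial}{\partial\tau}\mathbf{R}(\tau^{*})<0$) to place the unique sign change at $\tau^{*}$. You instead write $R_{sq}(\hat{\delta}^{*},P_{\tau})=\tau^{2}g(\tau)$ with $g=\psi\ast\phi$, verify $(\log\psi)''<0$, and use preservation of log-concavity under Gaussian convolution so that $\frac{d}{d\tau}\log\bigl(\tau^{2}g(\tau)\bigr)=\frac{2}{\tau}+\frac{g'(\tau)}{g(\tau)}$ is strictly decreasing, giving at most one critical point on $(0,\infty)$; Lemma \ref{lem:4} (used by both proofs) identifies that point as $\tau^{*}$. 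Your route is shorter, avoids the total-positivity machinery and the delicate second step of the paper's proof, and delivers a slightly stronger structural conclusion (strict unimodality of $\tau\mapsto R_{sq}(\hat{\delta}^{*},P_{\tau})$ on $[0,\infty)$); the paper's route stays within classical Karlin theory and verifies the $+$ to $-$ orientation of the sign change by direct computation. Two small points to tighten: $\psi$ is bounded but not integrable, so rather than citing convolution-closure off the shelf you should apply Pr\'ekopa's marginal theorem to the jointly log-concave function $(\tau,y)\mapsto\psi(y)\phi(\tau-y)$, noting the integral is finite for every $\tau$; and the decay bound holds globally, since $(e^{a}+1)^{-2}\leq e^{-2a}$ for all $a$, so the phrase about ``the bulk of the distribution'' can be replaced by the uniform bound $g(\tau)\leq e^{-4\tau^{*}\tau+8(\tau^{*})^{2}}$.
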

\begin{proof}
Write $\omega^{*}(\bar{y}_{1}):=1-\hat{\delta}^{*}(\bar{y}_{1})$.
We evaluate the first derivative of 
\[
R_{sq}(\hat{\delta}^{*},P_{\tau})=\tau^{2}\int\left[\omega^{*}(\bar{y}_{1})\right]^{2}f\{\bar{y}_{1}|\tau\}d\bar{y}_{1}
\]
as a function of $\tau\in[0,\infty)$. Notice for each $\bar{y}_{1}\in\mathbb{R}$
and each $\tau\in[0,\infty)$, 
\[
\frac{\partial}{\partial\tau}f\{\bar{y}_{1}|\tau\}=f\{\bar{y}_{1}|\tau\}\left(\bar{y}-\tau\right)=-\frac{\partial}{\partial\bar{y}_{1}}f\{\bar{y}_{1}|\tau\}.
\]
Therefore, using integration by parts twice yields %{\footnotesize{}
\begin{align*}
R_{sq}^{(1)}(\tau)  &:=\frac{\partial}{\partial\tau}R_{sq}(\hat{\delta}^{*},P_{\tau})\\
 & =2\tau\int\left[\omega^{*}(\bar{y}_{1})\right]^{2}f(\bar{y}_{1}|\tau)d\bar{y}_{1} %\\
 % & 
 +\tau^{2}\int\left[\omega^{*}(\bar{y}_{1})\right]^{2}\frac{\partial}{\partial\tau}\left(f(\bar{y}_{1}|\tau)\right)d\bar{y}_{1}\\
 & =2\tau\int\left[\omega^{*}(\bar{y}_{1})\right]^{2}f(\bar{y}_{1}|\tau)d\bar{y}_{1} %\\
 % &
 -\tau^{2}\int\left[\omega^{*}(\bar{y}_{1})\right]^{2}\frac{\partial}{\partial\bar{y}_{1}}f(\bar{y}_{1}|\tau)d\bar{y}_{1}\\
 & =2\tau\int\left[\omega^{*}(\bar{y}_{1})\right]^{2}f(\bar{y}_{1}|\tau)d\bar{y}_{1} %\\
 % &
 -\tau^{2}\int\left[\omega^{*}(\bar{y}_{1})\right]^{2}df(\bar{y}_{1}|\tau)\\
 & =2\tau\int\left[\omega^{*}(\bar{y}_{1})\right]^{2}f(\bar{y}_{1}|\tau)d\bar{y}_{1}+2\tau^{2}\left(\int\omega^{*}(\bar{y}_{1})\frac{\partial\left(\omega^{*}(\bar{y}_{1})\right)}{\partial\bar{y}_{1}}f(\bar{y}_{1}|\tau)d\bar{y}_{1}\right)\\
 & =2\tau\left\{ \int\left[\omega^{*}(\bar{y}_{1})\right]^{2}f(\bar{y}_{1}|\tau)d\bar{y}_{1}+\int\omega^{*}(\bar{y}_{1})\frac{\partial\left(\omega^{*}(\bar{y}_{1})\right)}{\partial\bar{y}_{1}}\tau f(\bar{y}_{1}|\tau)d\bar{y}_{1}\right\} \\
 & =2\tau\bigg\{ \int\left[\omega^{*}(\bar{y}_{1})\right]^{2}f(\bar{y}_{1}|\tau)d\bar{y}_{1}+\int\omega^{*}(\bar{y}_{1})\frac{\partial\left(\omega^{*}(\bar{y}_{1})\right)}{\partial\bar{y}_{1}}\left(\tau-\bar{y}_{1}\right)f(\bar{y}_{1}|\tau)d\bar{y}_{1}\\
 &\;\;\;\;\;\;\;\;\;\;\;\; +\int\omega^{*}(\bar{y}_{1})\frac{\partial\left(\omega^{*}(\bar{y}_{1})\right)}{\partial\bar{y}_{1}}\bar{y}_{1}f(\bar{y}_{1}|\tau)d\bar{y}_{1}\bigg\} \\
 & =2\tau\bigg\{ \int\left[\omega^{*}(\bar{y}_{1})\right]^{2}f(\bar{y}_{1}|\tau)d\bar{y}_{1}+\int\omega^{*}(\bar{y}_{1})\frac{\partial\left(\omega^{*}(\bar{y}_{1})\right)}{\partial\bar{y}_{1}}df(\bar{y}_{1}|\tau)\\
 &\;\;\;\;\;\;\;\;\;\;\;\; +\int\omega^{*}(\bar{y}_{1})\frac{\partial\left(\omega^{*}(\bar{y}_{1})\right)}{\partial\bar{y}_{1}}\bar{y}_{1}f(\bar{y}_{1}|\tau)d\bar{y}_{1}\bigg\} \\
 & =2\tau\bigg\{ \int\left[\omega^{*}(\bar{y}_{1})\right]^{2}f(\bar{y}_{1}|\tau)d\bar{y}_{1}-\int\frac{\partial}{\partial\bar{y}_{1}}\left[\omega^{*}(\bar{y}_{1})\frac{\partial\left(\omega^{*}(\bar{y}_{1})\right)}{\partial\bar{y}_{1}}\right]f(\bar{y}_{1}|\tau)d\bar{y}_{1}\\
 &\;\;\;\;\;\;\;\;\;\;\;\; +\int\omega^{*}(\bar{y}_{1})\frac{\partial\left(\omega^{*}(\bar{y}_{1})\right)}{\partial\bar{y}_{1}}\bar{y}_{1}f(\bar{y}_{1}|\tau)d\bar{y}_{1}\bigg\} \\
 & =2\tau\bigg\{ \int\bigg\{ \left[\omega^{*}(\bar{y}_{1})\right]^{2}-\left(\frac{\partial\left(\omega^{*}(\bar{y}_{1})\right)}{\partial\bar{y}_{1}}\right)^{2}-\omega^{*}(\bar{y}_{1})\frac{\partial^{2}\left(\omega^{*}(\bar{y}_{1})\right)}{\partial\left(\bar{y}_{1}\right)^{2}} \\
 &\;\;\;\;\;\;\;\;\;\;\;\;\;\;\;\;\;\; +\omega^{*}(\bar{y}_{1})\frac{\partial\left(\omega^{*}(\bar{y}_{1})\right)}{\partial\bar{y}_{1}}\bar{y}_{1}\bigg\} f(\bar{y}_{1}|\tau)d\bar{y}_{1}\bigg\} .
\end{align*}
%}
The sign of $R_{sq}^{(1)}(\tau)$ is determined by 
\[
\mathbf{R}(\tau):=\int\mathbf{w}(\bar{y}_{1})f\{\bar{y}_{1}|\tau\}d\bar{y}_{1},
\]
where 
\[
\mathbf{w}(\bar{y}_{1})=\left[\omega^{*}(\bar{y}_{1})\right]^{2}-\left(\frac{\partial\left(\omega^{*}(\bar{y}_{1})\right)}{\partial\bar{y}_{1}}\right)^{2}-\omega^{*}(\bar{y}_{1})\frac{\partial^{2}\left(\omega^{*}(\bar{y}_{1})\right)}{\partial\left(\bar{y}_{1}\right)^{2}}+\omega^{*}(\bar{y}_{1})\frac{\partial\left(\omega^{*}(\bar{y}_{1})\right)}{\partial\bar{y}_{1}}\bar{y}_{1}.
\]
We aim to show that $\mathbf{R}(\tau)$ has a unique sign change from
$+$ to $-$ at $\tau^{*}$, with the conclusion immediately following.

\emph{Step 1}: we show $\mathbf{R}(\tau)$ has at most one sign change from
$+$ to $-$. Notice $\omega^{*}(\bar{y}_{1})=\frac{1}{\exp\left(2\tau^{*}\bar{y}_{1}\right)+1}$.
Therefore, 
\begin{align*}
\frac{\partial\left(\omega^{*}(\bar{y}_{1})\right)}{\partial\bar{y}_{1}} & =-\left[\omega^{*}(\bar{y}_{1})\right]^{2}\exp\left(2\tau^{*}\bar{y}_{1}\right)2\tau^{*},\\
\frac{\partial^{2}\left(\omega^{*}(\bar{y}_{1})\right)}{\partial\left(\bar{y}_{1}\right)^{2}} & =2\left(\exp\left(2\tau^{*}\bar{y}_{1}\right)2\tau^{*}\right)^{2}\left[\omega^{*}(\bar{y}_{1})\right]^{3}-\left[\omega^{*}(\bar{y}_{1})\right]^{2}\exp\left(2\tau^{*}\bar{y}_{1}\right)\left(2\tau^{*}\right)^{2}
\end{align*}
Plugging in $\mathbf{w}(\bar{y}_{1})$ yields %{\footnotesize{}
\begin{align*}
\mathbf{w}(\bar{y}_{1}) & =\left[\omega^{*}(\bar{y}_{1})\right]^{2}\bigg\{ 1-3\left(\omega^{*}(\bar{y}_{1})\exp\left(2\tau^{*}\bar{y}_{1}\right)2\tau^{*}\right)^{2}+\omega^{*}(\bar{y}_{1})\exp\left(2\tau^{*}\bar{y}_{1}\right)\left(2\tau^{*}\right)^{2}\\
&\;\;\;\;\;\;\;\;\;\;\;\;\;\;\;\;\;\;\;\; -\omega^{*}(\bar{y}_{1})\exp\left(2\tau^{*}\bar{y}_{1}\right)2\tau^{*}\bar{y}_{1}\bigg\} \\
 & =\left[\omega^{*}(\bar{y}_{1})\right]^{2}\hat{\delta}^{*}(\bar{y}_{1})\left\{\frac{1}{\hat{\delta}^{*}(\bar{y}_{1})}-3\left(2\tau^{*}\right)^{2}\hat{\delta}^{*}(\bar{y}_{1})+\left(2\tau^{*}\right)^{2}-2\tau^{*}\bar{y}_{1}\right\}.
\end{align*}
%}
Since $\left[\omega^{*}(\bar{y}_{1})\right]^{2}>0$ and $\hat{\delta}^{*}(\bar{y}_{1})>0$ for all $\bar{y}_{1}$,
the sign of $\mathbf{w}(\bar{y}_{1})$ is determined by 
\begin{align*}
\tilde{\mathbf{w}}(\bar{y}_{1}) & :=\frac{1}{\hat{\delta}^{*}(\bar{y}_{1})}-3\left(2\tau^{*}\right)^{2}\hat{\delta}^{*}(\bar{y}_{1})+\left(2\tau^{*}\right)^{2}-2\tau^{*}\bar{y}_{1},
\end{align*}
and $\tilde{\mathbf{w}}(\bar{y}_{1})=0$
if and only if 
\begin{equation}
\frac{1}{\hat{\delta}^{*}(\bar{y}_{1})}-3\left(2\tau^{*}\right)^{2}\hat{\delta}^{*}(\bar{y}_{1})+\left(2\tau^{*}\right)^{2}=2\tau^{*}\bar{y}_{1}.\label{pf:lem.5.1}
\end{equation}
Moreover, it is straightforward to check that $\frac{\partial}{\partial\bar{y}_{1}}\hat{\delta}^{*}(\bar{y}_{1})>0$.
It follows the first derivative of the left hand side (LHS) of (\ref{pf:lem.5.1}) is
\[
\frac{\partial\text{LHS}}{\partial\bar{y}_{1}}=\left(-\frac{1}{\left(\hat{\delta}^{*}(\bar{y}_{1})\right)^{2}}-3\left(2\tau^{*}\right)^{2}\right)\frac{\partial}{\partial\bar{y}_{1}}\hat{\delta}^{*}(\bar{y}_{1})<0,
\]
which implies the LHS of (\ref{pf:lem.5.1}) is a decreasing function.
Also, the right hand side of (\ref{pf:lem.5.1}) is an increasing
function. Thus, (\ref{pf:lem.5.1}) has at most one sign change from
$+$ to $-$. Furthermore, note $\underset{\bar{y}_{1}\rightarrow-\infty}{\lim}\tilde{\mathbf{w}}(\bar{y}_{1})=1$
and $\underset{\bar{y}_{1}\rightarrow\infty}{\lim}\tilde{\mathbf{w}}(\bar{y}_{1})=-\infty$,
implying (\ref{pf:lem.5.1}) indeed has one and only one sign change
from $+$ to $-$. It follows from Theorem \ref{thm:Karlin} (i) that $\mathbf{R}(\tau)$
also has at most one sign change.

\emph{Step 2}: we show $\mathbf{R}(\tau)$ indeed has one sign change. Note
\begin{align*}
\mathbf{R}(\tau) & =\int\mathbf{w}(\bar{y}_{1})f\{\bar{y}_{1}|\tau\}d\bar{y}_{1}.
\end{align*}
Also, by Lemma \ref{lem:4}, $\mathbf{R}(\tau^{*})=0$. We may calculate
\begin{align*}
\frac{\partial}{\partial\tau}\mathbf{R}(\tau^*) & =\int\mathbf{w}(\bar{y}_{1})(\bar{y}_{1}-\tau^*)f\{\bar{y}_{1}|\tau^*\}d\bar{y}_{1}\\
&=\int\mathbf{w}(\bar{y}_{1})\bar{y}_{1}f\{\bar{y}_{1}|\tau^*\}d\bar{y}_{1}-\tau^*\underset{0}{\underbrace{\int\mathbf{w}(\bar{y}_{1})f\{\bar{y}_{1}|\tau^*\}d\bar{y}_{1}}}\\
&=\int\mathbf{w}(\bar{y}_{1})\bar{y}_{1}f\{\bar{y}_{1}|\tau^*\}d\bar{y}_{1}.
\end{align*}
Algebra shows:
\begin{align*}
\mathbf{w}(\bar{y}_1) & =w^{*}(y)^{2}\hat{\delta}^{*}(\bar{y}_1)\tilde{\mathbf{w}}(\bar{y}_1)\\
 & =\left(1-\hat{\delta}^{*}(\bar{y}_1)\right)^{2}\hat{\delta}^{*}(\bar{y}_1)\tilde{\mathbf{w}}(\bar{y}_1)\\
 & =\left(\frac{f\{\bar{y}_{1}|-\tau^*\}}{f\{\bar{y}_{1}|\tau^*\}+f\{\bar{y}_{1}|-\tau^*\}}\right)^{2}\frac{f\{\bar{y}_{1}|\tau^*\}}{f\{\bar{y}_{1}|\tau^*\}+f\{\bar{y}_{1}|-\tau^*\}}\tilde{\mathbf{w}}(\bar{y}_1).
\end{align*}
Thus, 
\begin{align*}
\frac{\partial}{\partial\tau}\mathbf{R}(\tau^*) &=\int w_{\tau^{*}}(\bar{y}_1)\tilde{\mathbf{w}}(\bar{y}_1)\bar{y}_1d\bar{y}_1,
\end{align*}
where $w_{\tau^{*}}(\bar{y}_1):=\frac{f^{2}\{\bar{y}_{1}|-\tau^{*}\}f^{2}\{\bar{y}_{1}|\tau^{*}\}}{\left(f\{\bar{y}_{1}|\tau^{*}\}+f\{\bar{y}_{1}|-\tau^{*}\}\right)^{3}}$>0 and is such that $w_{\tau^{*}}(-\bar{y}_1)=w_{\tau^{*}}(\bar{y}_1)$ for all $y\in\mathbb{R}$, and $\tilde{\mathbf{w}}(\bar{y}_1)$ is strictly decreasing and ranges from $+\infty$
to $-\infty$ as $\bar{y}_1$ increases from $-\infty$ to $+\infty$. Let $t^{*}$ be the unique point such that $\tilde{\mathbf{w}}(t^{*})=0$.
Suppose $t^{*}\geq0$. Then, we have the following decomposition
\begin{align*}
\frac{\partial}{\partial\tau}\mathbf{R}(\tau^*)  & =\int_{\bar{y}_1<-t^{*}}w_{\tau^{*}}(\bar{y}_1)\tilde{\mathbf{w}}(\bar{y}_1)\bar{y}_1d\bar{y}_1\\
 & +\int_{-t^{*}\leq \bar{y}_1\leq t^{*}}w_{\tau^{*}}(\bar{y}_1)\tilde{\mathbf{w}}(\bar{y}_1)\bar{y}_1d\bar{y}_1\\
 & +\int_{\bar{y}_1>t^{*}}w_{\tau^{*}}(\bar{y}_1)\tilde{\mathbf{w}}(\bar{y}_1)\bar{y}_1d\bar{y}_1,
\end{align*}
where all three terms above can be signed to be negative. A similar
decomposition also reveals that $\frac{\partial}{\partial\tau}\mathbf{R}(\tau^*) <0$ holds
true when $t^{*}<0$. Therefore, we conclude that $\tau^{*}$ is indeed a point of sign change. Thus, $\mathbf{R}(\tau)$
indeed has one and only one sign change by Theorem \ref{thm:Karlin} (i).

\emph{Step 3}: From Steps 1 and 2, Theorem \ref{thm:Karlin} (ii) further implies
that $\mathbf{R}(\tau)$ and $\tilde{\mathbf{w}}(\bar{y}_{1})$ changes
sign in the same order. Hence, we conclude that $\mathbf{R}(\tau)$
only has one sign change at $\tau^{*}$ from $+$ to $-$, i.e., $\tau^{*}$
is indeed a unique maximum of $\sup\limits _{\tau\in[0,\infty)}R_{sq}(\hat{\delta}^{*},P_{\tau})$. 
\end{proof}

\begin{thm}[Theorem 3 and Corollary 2, \citet{karlin1957polya}]\label{thm:Karlin}
 Let $p$ be strictly
P\'{o}lya type $\infty$ and assume that $p$ can be differentiated $n$
times with respect to $x$ for all $t$. Let $F$ be a measure on
the real line, and let $h$ be a function of $t$ which changes sign
$n$ times. 

(i) If 
\[
g(x)=\int p(x,t)h(t)dF(t)
\]
 can be differentiated $n$ times with respect to $x$ inside the
integral sign, then $g$ changes sign at most $n$ times and has at
most $n$ zeroes counting multiplicities, or is identically zero. The
function $g$ is identically zero if and only if the spectrum of $F$
is contained in the set of zeros of $h$.

(ii) If the number of sign changes of $g$ is $n$, then $g$ and
$h$ change signs in the same order.
\end{thm}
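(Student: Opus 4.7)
The plan is to recognize this as the classical Schoenberg--Karlin variation-diminishing theorem for strictly sign-regular kernels of all orders, and to reconstruct its proof via the \emph{basic composition formula} (the integral analog of Cauchy--Binet). Strict P\'olya type $\infty$ means that for every $k \geq 1$ and all $x_1 < \cdots < x_k$ and $t_1 < \cdots < t_k$, the determinant $\det[p(x_i,t_j)]_{i,j=1}^k$ is strictly positive. The basic composition formula gives, for $g(x) = \int p(x,t) h(t) dF(t)$ and any $x_0 < \cdots < x_k$, identities of the form
\begin{equation*}
\sum_{i=0}^{k} (-1)^i \alpha_i\, g(x_i) = \int_{\mathbb{R}} \phi_k(t) h(t) dF(t), \qquad \phi_k(t) = \sum_{i=0}^k (-1)^i \alpha_i\, p(x_i,t),
\end{equation*}
together with determinantal refinements in the $x_i$ and $t_j$. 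The plan is to use these identities, combined with strict positivity of all $p$-minors, to translate the sign-change count of $h$ directly into that of $g$.

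For part (i), I proceed by contradiction. Suppose $g$ changes sign at least $n+1$ times, so there exist $x_0 < \cdots < x_{n+1}$ with $(-1)^i \epsilon\, g(x_i) > 0$ for some $\epsilon \in \{\pm 1\}$. Choose positive weights $\alpha_i$ and form $\phi(t) = \sum_i (-1)^i \epsilon\, \alpha_i\, p(x_i,t)$. The key lemma, which is the core of the argument, is that strict total positivity of order $n+2$ forces $\phi$ to change sign at most $n+1$ times in $t$; this is a Descartes-rule-style statement for ``generalized polynomials'' in the kernel $p$, proved by induction on $n$ using positivity of all minors. Moreover the sign pattern of $\phi$ on $\mathrm{supp}(F)$ can be arranged, by the choice of the $\alpha_i$, so that $\phi(t) h(t) \leq 0$ on $\mathrm{supp}(F)$. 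This contradicts positivity of the left-hand side of the displayed identity. The identically-zero characterization then drops out immediately: if $g \equiv 0$, each integrand in the Cauchy--Binet form has a fixed sign by strict positivity, so it must vanish $dF^{\otimes k}$-almost everywhere, which is only possible when the $F$-spectrum lies in the zero set of $h$. Zeros of $g$ counted with multiplicity are handled by letting adjacent $x_i$'s coalesce: the confluent limits of the alternating sums produce derivatives of $g$, and the kernel identities persist under the differentiability hypotheses on $p$ together with validity of differentiation under the integral.

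For part (ii), suppose $g$ attains exactly $n$ sign changes. The sign order is pinned down by a tail/localization argument. Taking $x$ far to the right of (respectively, far to the left of) the $F$-support, strict total positivity implies that $p(x,\cdot)$ is dominated by its behavior at the extremes of the $t$-range, so $g(x)$ inherits the asymptotic sign of $h$ at the corresponding end of $\mathrm{supp}(F)$. Since by part (i) the count of $n$ sign changes is maximal, the full sign pattern of $g$ is determined everywhere by these extreme-end signs, and comparing successive intervals of constancy for $g$ with those of $h$ (via the same Cauchy--Binet determinantal comparison applied to subcollections of the $x_i$'s and $\eta_j$'s where $h$ changes sign) yields that the two orders coincide.

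The main obstacle is the variation-diminishing lemma behind part (i) --- showing that $\phi(t)=\sum_i c_i p(x_i,t)$ has at most $n+1$ sign changes with the desired prescribed pattern. This is where strict P\'olya type $\infty$ is used in full strength, and the inductive proof requires careful bookkeeping of all minor inequalities. A secondary obstacle is the passage between pointwise sign-change counts and multiplicity-weighted zero counts, which is exactly why the theorem requires $n$ differentiations of $p$ under the integral sign; without the smoothness hypothesis, the confluent limits needed to capture multiplicities cannot be justified.
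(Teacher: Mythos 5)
This statement is not proved in the paper at all: it is quoted verbatim as Theorem 3 and Corollary 2 of \citet{karlin1957polya} and used as a black box in the proof of Lemma \ref{lem:5}. So there is no in-paper argument to compare your proposal against; the paper's ``proof'' is the citation.

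As a reconstruction of Karlin's classical variation-diminishing theorem, your outline follows the standard route (basic composition formula plus strict total positivity of all orders), but it has two genuine gaps. First, the entire weight of part (i) rests on the lemma you yourself flag: that for $x_0<\cdots<x_{n+1}$ one can choose positive weights $\alpha_i$ so that $\phi(t)=\sum_i(-1)^i\epsilon\,\alpha_i\,p(x_i,t)$ has at most $n+1$ sign changes \emph{and} satisfies $\phi(t)h(t)\le 0$ on $\mathrm{supp}(F)$. The first half is a Chebyshev-system fact that does follow from strict P\'olya type $\infty$, but the second half is a nontrivial interpolation-with-sign-constraints problem (you must place the zeros of $\phi$ at the sign-change points of $h$ while keeping all $\alpha_i>0$), and asserting it is essentially assuming the theorem in dual form. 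The standard proof avoids this by partitioning the $t$-line into the $n+1$ intervals of constant sign of $h$, setting $g_j(x)=\int_{I_j}p(x,t)|h(t)|\,dF(t)$, showing via Cauchy--Binet that $\det[g_j(x_i)]>0$, and then invoking the sign-change bound for linear combinations of a Chebyshev system; that route requires no choice of weights. Second, your part (ii) argument --- that for $x$ far to the right $g(x)$ ``inherits the asymptotic sign of $h$ at the corresponding end of $\mathrm{supp}(F)$'' --- does not follow from strict total positivity alone; TP only gives that the ratios $p(x,t')/p(x,t)$ ($t'>t$) are monotone in $x$, which controls relative weights but not the sign of the integral at any finite $x$. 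The order-of-sign-changes conclusion is instead obtained from the same determinantal comparison used in part (i), applied to the points where $g$ achieves its $n$ alternations. Neither gap is fatal to the program, but both are exactly where the content of Karlin's theorem lies.
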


\section{Lemmas for Section \ref{sec:asymptotic}}

\begin{lem} \label{lem:C1}
Treatment rule $\hat{\delta}_{\tau}^{*}$ is a minimax optimal rule in the limit experiment, i.e.,
$\sup_{h}R_{sq}^{\infty}(\hat{\delta}_{\tau}^{*},h)=R^{*}$. 
\end{lem}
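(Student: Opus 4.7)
The plan is to establish $\sup_h R_{sq}^\infty(\hat{\delta}_\tau^*, h) = R^*$ in two stages, exploiting the fact that the limit experiment is invariant under translations of the parameter and data in the hyperplane orthogonal to $\dot{\tau}$.

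\emph{Stage 1 (reduction from $\sup$ over $h$ to $\sup$ over $b$).} I would first verify that for any rule of the form $\hat{\delta}_\tau(\varDelta_\tau)$ depending on $\varDelta$ only through $\varDelta_\tau = \dot{\tau}'\varDelta$, the risk $R_{sq}^\infty(\hat{\delta}_\tau, h)$ depends on $h$ only through $b = \dot{\tau}'h$. Indeed, under $\varDelta \sim N(h, I_0^{-1})$ the law of $\varDelta_\tau$ is $N(b, \sigma_\tau^2)$, and the limit regret $Reg_\infty(\hat{\delta}_\tau, h)$ itself involves $h$ only through $\dot{\tau}'h$. Since every $h \in \mathbb{R}^k$ can be written as $h_1(b, h_0)$ with $\dot{\tau}'h_0 = 0$, this yields
\[
\sup_{h \in \mathbb{R}^k} R_{sq}^\infty(\hat{\delta}_\tau^*, h) \;=\; \sup_{b \in \mathbb{R}} R_{sq}^\infty(\hat{\delta}_\tau^*, h_1(b,0)) \;=:\; R_r,
\]
and by the defining property of $\hat{\delta}_\tau^*$, $R_r = \inf_{\hat{\delta}_\tau} \sup_b R_{sq}^\infty(\hat{\delta}_\tau, h_1(b,0))$.

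\emph{Stage 2 (the identity $R_r = R^*$).} The inequality $R^* \leq R_r$ is immediate because $\hat{\delta}_\tau^*$ is feasible in the unrestricted problem. For the reverse bound I plan to use an invariance argument. For any $v \in \mathbb{R}^k$ with $\dot{\tau}'v = 0$, define $\hat{\delta}_v(\varDelta) := \hat{\delta}(\varDelta - v)$; a change of variables, combined with $\dot{\tau}'(h-v) = \dot{\tau}'h$, gives $R_{sq}^\infty(\hat{\delta}_v, h) = R_{sq}^\infty(\hat{\delta}, h - v)$. For any probability measure $\mu$ supported on $\{v : \dot{\tau}'v = 0\}$, the averaged rule $\bar{\delta} := \int \hat{\delta}_v \, d\mu(v)$ satisfies, by Jensen's inequality (the map $x \mapsto (1\{\dot{\tau}'h \geq 0\} - x)^2$ is convex) and Fubini,
\[
R_{sq}^\infty(\bar{\delta}, h) \;\leq\; \int R_{sq}^\infty(\hat{\delta}, h-v) \, d\mu(v) \;\leq\; \sup_{h'} R_{sq}^\infty(\hat{\delta}, h').
\]
Taking $\mu_n$ uniform on expanding boxes in the $(k-1)$-dimensional orthogonal hyperplane and extracting a weak$^*$ subsequential limit of the resulting $\bar{\delta}_n$ would produce an invariant rule $\tilde{\delta}$, which must depend on $\varDelta$ only through $\varDelta_\tau$ and satisfy $\sup_h R_{sq}^\infty(\tilde{\delta}, h) \leq \sup_h R_{sq}^\infty(\hat{\delta}, h)$. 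Applied to a sequence of rules approaching the unrestricted infimum, this gives $R_r \leq R^*$, completing the proof.

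The main obstacle is the limiting step in Stage 2: the translation group $\{v \in \mathbb{R}^k : \dot{\tau}'v = 0\} \cong \mathbb{R}^{k-1}$ is non-compact, so exact averaging does not directly produce an invariant rule, and one must pass through a weak$^*$ subsequential limit. Care is needed to verify that the limit exists within the class of $[0,1]$-valued measurable rules on $\mathbb{R}^k$ and that both the translation invariance and the risk bound survive the limit. Boundedness of the loss and convexity of squared regret make this routine; alternatively, one may appeal directly to the Hunt-Stein theorem, since the translation subgroup is abelian and therefore amenable.
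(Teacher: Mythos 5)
Your proposal is correct, and its second half takes a genuinely different route from the paper's. Your Stage 1 coincides with the paper's argument: both use that, for a rule depending on $\varDelta$ only through $\varDelta_{\tau}=\dot{\tau}^{\prime}\varDelta$, the risk at $h_{1}(b,h_{0})$ is free of $h_{0}$, so the supremum over $h$ collapses to a supremum over $b$. For the identity with $R^{*}$, the paper then runs a short sandwich, $R^{*}\le\sup_{b}R_{sq}^{\infty}(\hat{\delta}_{\tau}^{*},h_{1}(b,0))\le\sup_{b}R_{sq}^{\infty}(\hat{\delta}^{*},h_{1}(b,0))\le R^{*}$, with the middle inequality asserted ``by definition of $\hat{\delta}_{\tau}^{*}$''; this implicitly relies on the existence of an unrestricted minimax rule $\hat{\delta}^{*}$ and on the fact that, along the one-dimensional slice $\{h_{1}(b,0)\}$, any rule can be replaced by a $\varDelta_{\tau}$-measurable one without increasing risk (e.g.\ Rao--Blackwellize via $\mathbb{E}[\hat{\delta}(\varDelta)\mid\varDelta_{\tau}]$, which is free of $b$ by sufficiency of $\varDelta_{\tau}$ in the slice submodel, and apply Jensen to the squared loss). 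You instead prove the key reduction directly by an invariance (Hunt--Stein) argument: translate by $v$ with $\dot{\tau}^{\prime}v=0$, average over expanding (F{\o}lner) boxes, use Jensen together with lower semicontinuity of the convex risk under weak$^{*}$ limits, and invoke amenability of the abelian translation group. This buys two things the paper's one-line step does not: you never need an exact unrestricted minimax rule to exist (near-minimizers suffice), and you show explicitly that restricting to $\varDelta_{\tau}$-measurable rules loses nothing for the full minimax problem, not merely on the slice. The cost is heavier machinery (Banach--Alaoglu, passing a.e.-invariance to $\varDelta_{\tau}$-measurability); the lighter sufficiency/Rao--Blackwell route just described would give the needed inequality in one line and is worth noting as an alternative to your weak$^{*}$ limit step.
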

\begin{proof}
The mean square regret of a treatment rule $\hat{\delta}_{\tau}$
for each $h_{1}(b,h_{0})$ is
\begin{align}
R_{sq}^{\infty}(\hat{\delta}_{\tau},h_{1}(b,h_{0})) & =\left[\dot{\tau}^{\prime}h_{1}(b,h_{0})\right]^{2}\mathbb{E}_{\varDelta\sim N(h_{1}(b,h_{0}),I_{0}^{-1})}\left[1\left\{ \dot{\tau}^{\prime}h_{1}(b,h_{0})\geq0\right\} -\hat{\delta}\left(\dot{\tau}^{\prime}\varDelta\right)\right]^{2}\nonumber \\
 & =\left[\dot{\tau}^{\prime}h_{0}+b\right]^{2}\mathbb{E}_{\varDelta\sim N(h_{1}(b,h_{0}),I_{0}^{-1})}\left[1\left\{ \dot{\tau}^{\prime}h_{0}+b\geq0\right\} -\hat{\delta}\left(\dot{\tau}^{\prime}\varDelta\right)\right]^{2}\nonumber \\
 & =R_{sq}^{\infty}(\hat{\delta}_{\tau},h_{1}(b,0)),\label{pf:local.1}
\end{align}
where the last relation follows from $\dot{\tau}^{\prime}h_{1}(b,h_{0})=\dot{\tau}^{\prime}h_{0}+b$
and $\dot{\tau}^{\prime}h_{0}=0$ by construction. Thus, 
\begin{align*}
R^{*} & \leq\sup_{h_{1}}R_{sq}^{\infty}(\hat{\delta}_{\tau}^{*},h_{1})=\sup_{h_{0}}\sup_{b}R_{sq}^{\infty}(\hat{\delta}_{\tau}^{*},h_{1}(b,h_{0}))\\
 & =\sup_{b}R_{sq}^{\infty}(\hat{\delta}_{\tau}^{*},h_{1}(b,0))\leq\sup_{b}R_{sq}^{\infty}(\hat{\delta}^{*},h_{1}(b,0))\leq R^{*},
\end{align*}
where the first relation follows from the definition of $R^{*}$, the
second relation follows from definition of $h_{1}$, the third relation
follows from (\ref{pf:local.1}), the fourth relation follows by definition
of $\hat{\delta}_{\tau}^{*}$, and the final relation follows because $R^{*}$
is the worst case mean square regret of $\hat{\delta}^{*}$ and so
must be no smaller than $\sup_{b}R_{sq}^{\infty}(\hat{\delta}^{*},h_{1}(b,0))$.
\end{proof}

\begin{lem} \label{lem:C2}
$\hat{\delta}_{\tau}^{*}$ can be found by solving 
\begin{equation}
\inf_{\hat{\delta}_{\tau}}\sup_{b}R_{sq}^{\infty}(\hat{\delta}_{\tau},b),\label{eq:minimax.simple}
\end{equation}
where we recall that
\[
R_{sq}^{\infty}(\hat{\delta}_{\tau},b)=b^{2}\mathbb{E}_{\varDelta_{\tau}\sim N(b,\dot{\tau}^{\prime}I_{0}^{-1}\dot{\tau})}\left[1\left\{ b\geq0\right\} -\hat{\delta}_{\tau}\left(\varDelta_{\tau}\right)\right]^{2}.
\]
\end{lem}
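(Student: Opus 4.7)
The plan is to establish pointwise equality of the two inner objective functions on rules of the restricted form $\hat{\delta}_\tau(\Delta_\tau)$, namely
\[
R_{sq}^\infty(\hat{\delta}_\tau, h_1(b,0)) = R_{sq}^\infty(\hat{\delta}_\tau, b), \quad \text{for every } b\in\mathbb{R}.
\]
Once this pointwise identity is in hand, taking $\sup_b$ on both sides and then $\inf_{\hat{\delta}_\tau}$ over rules of the form $\hat{\delta}_\tau(\Delta_\tau)$ shows that the minimax problem (\ref{eq:minimax.simple}) and the problem defining $\hat{\delta}_\tau^*$ have identical objective, hence share any solution.

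\textbf{Key step: a linear Gaussian pushforward.} If $\Delta\sim N(h_1(b,0), I_0^{-1})$, then the scalar contrast $\Delta_\tau := \dot{\tau}'\Delta$ is Gaussian with mean $\dot{\tau}' h_1(b,0)$ and variance $\dot{\tau}' I_0^{-1}\dot{\tau}$. By the construction of $h_1(\cdot,0)$ in the proof of Theorem \ref{thm:minimax.feasible}(i), one has $\dot{\tau}' h_1(b,0) = b$, and by definition $\sigma_\tau^2 = \dot{\tau}' I_0^{-1}\dot{\tau}$, so $\Delta_\tau\sim N(b,\sigma_\tau^2)$. Because any rule of the restricted form is a measurable function of $\Delta_\tau$ alone, the expectation over $\Delta$ reduces to an expectation over $\Delta_\tau$.

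\textbf{Finishing up.} Substituting the pushforward into the definition of $R_{sq}^\infty(\hat{\delta}_\tau, h_1(b,0))$ used in the proof of Theorem \ref{thm:minimax.feasible}(i), and noting $1\{\dot{\tau}' h_1(b,0)\geq 0\} = 1\{b\geq 0\}$, yields
\[
R_{sq}^\infty(\hat{\delta}_\tau, h_1(b,0)) = b^2\, \mathbb{E}_{\Delta_\tau\sim N(b,\sigma_\tau^2)}\bigl[1\{b\geq 0\}-\hat{\delta}_\tau(\Delta_\tau)\bigr]^2 = R_{sq}^\infty(\hat{\delta}_\tau, b),
\]
which is the required pointwise identity. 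Since the argument is only a change of variables inside a Gaussian expectation, I do not anticipate any genuine obstacle; the lemma is a clean reduction of the restricted minimax problem in the $N(h,I_0^{-1})$ limit experiment to a one-dimensional $N(b,\sigma_\tau^2)$ treatment-choice problem, which is precisely the form needed to invoke Theorem \ref{thm:minimax} in the proof of Lemma \ref{lem:C3}.
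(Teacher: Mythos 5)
Your proposal is correct and matches the paper's own argument: the paper proves exactly the same pointwise identity $R_{sq}^{\infty}(\hat{\delta}_{\tau},h_{1}(b,0))=R_{sq}^{\infty}(\hat{\delta}_{\tau},b)$ by pushing $\varDelta\sim N(h_{1}(b,0),I_{0}^{-1})$ forward through $\varDelta_{\tau}=\dot{\tau}^{\prime}\varDelta$ and using $\dot{\tau}^{\prime}h_{1}(b,0)=b$ with $\sigma_{\tau}^{2}=\dot{\tau}^{\prime}I_{0}^{-1}\dot{\tau}$. Taking $\sup_{b}$ and then $\inf_{\hat{\delta}_{\tau}}$ over rules of the restricted form, as you do, is precisely how the lemma is concluded there.
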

\begin{proof}
Note for each $b\in\mathbb{R}$ and $\hat{\delta}_{\tau}=\hat{\delta}(\dot{\tau}^{\prime}\varDelta)$,
\begin{align*}
R_{sq}^{\infty}(\hat{\delta}_{\tau},h_{1}(b,0)) & =b^{2}\mathbb{E}_{\varDelta\sim N(h_{1}(b,h_{0}),I_{0}^{-1})}\left[1\left\{ b\geq0\right\} -\hat{\delta}\left(\dot{\tau}^{\prime}\varDelta\right)\right]^{2}\\
 & =b^{2}\mathbb{E}_{\varDelta_{\tau}\sim N(b,\dot{\tau}^{\prime}I_{0}^{-1}\dot{\tau})}\left[1\left\{ b\geq0\right\} -\hat{\delta}_{\tau}\left(\varDelta_{\tau}\right)\right]^{2}\\
 & =R_{sq}^{\infty}(\hat{\delta}_{\tau},b)
\end{align*}
where the second equality follows from
\[
\varDelta_{\tau}=\dot{\tau}^{\prime}\varDelta\sim N(\dot{\tau}^{\prime}h_{1}(b,h_{0}),\dot{\tau}^{\prime}I_{0}^{-1}\dot{\tau}),
\]
$\dot{\tau}^{\prime}h_{1}(b,h_{0})=b$ and we defined $\hat{\delta}_{\tau}(\varDelta_{\tau})=\hat{\delta}\left(\dot{\tau}^{\prime}\varDelta\right)$.
\end{proof}
\begin{lem} \label{lem:C3}
Under assumptions of Theorem \ref{thm:minimax.feasible}, the minimax optimal policy in the limit experiment is
\[
\hat{\delta}^{*}(\varDelta)=\frac{\exp\left(\frac{2\tau^{*}}{\sqrt{\dot{\tau}^{\prime}I_{0}^{-1}\dot{\tau}}}\dot{\tau}^{\prime}\varDelta\right)}{\exp\left(\frac{2\tau^{*}}{\sqrt{\dot{\tau}^{\prime}I_{0}^{-1}\dot{\tau}}}\dot{\tau}^{\prime}\varDelta\right)+1},
\]
where $\tau^{*}\approx1.23$ and solves (\ref{eq:minimax.1}) or (\ref{eq:minimax.2}).
\end{lem}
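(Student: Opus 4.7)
\textbf{Proof plan for Lemma \ref{lem:C3}.} The plan is to reduce the minimax problem in the Gaussian limit experiment to the finite sample minimax problem already solved in Theorem \ref{thm:minimax} (Example \ref{exa:1}) by a simple affine rescaling that absorbs the variance $\sigma_\tau^2 = \dot\tau' I_0^{-1}\dot\tau$. By Lemmas \ref{lem:C1} and \ref{lem:C2}, it suffices to solve
\[
\inf_{\hat\delta_\tau}\sup_{b\in\mathbb{R}} R^{\infty}_{sq}(\hat\delta_\tau,b), \qquad R^{\infty}_{sq}(\hat\delta_\tau,b)=b^2\,\mathbb{E}_{\varDelta_\tau\sim N(b,\sigma_\tau^2)}\!\left[1\{b\geq 0\}-\hat\delta_\tau(\varDelta_\tau)\right]^2,
\]
over treatment rules $\hat\delta_\tau:\mathbb{R}\to[0,1]$ of the single variable $\varDelta_\tau=\dot\tau'\varDelta$.

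The first step is the rescaling. Set $\tilde\varDelta := \varDelta_\tau/\sigma_\tau$ and $\tilde b := b/\sigma_\tau$, so that $\tilde\varDelta\sim N(\tilde b,1)$. To any rule $\hat\delta_\tau$ associate the rescaled rule $\tilde\delta(\tilde\varDelta):=\hat\delta_\tau(\sigma_\tau\tilde\varDelta)$. Changing variables in the expectation gives
\[
R^{\infty}_{sq}(\hat\delta_\tau,b)=\sigma_\tau^2\cdot\tilde b^{\,2}\,\mathbb{E}_{\tilde\varDelta\sim N(\tilde b,1)}\!\left[1\{\tilde b\geq 0\}-\tilde\delta(\tilde\varDelta)\right]^2.
\]
The map $\hat\delta_\tau\mapsto\tilde\delta$ is a bijection between the set of $[0,1]$-valued Borel rules on $\mathbb{R}$ and itself, and $\mathrm{sign}(b)=\mathrm{sign}(\tilde b)$. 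Hence
\[
\inf_{\hat\delta_\tau}\sup_{b\in\mathbb{R}} R^{\infty}_{sq}(\hat\delta_\tau,b)=\sigma_\tau^2\cdot\inf_{\tilde\delta}\sup_{\tilde b\in\mathbb{R}}\tilde b^{\,2}\,\mathbb{E}_{\tilde\varDelta\sim N(\tilde b,1)}\!\left[1\{\tilde b\geq 0\}-\tilde\delta(\tilde\varDelta)\right]^2,
\]
and the minimax optimal $\hat\delta_\tau^{\ast}$ corresponds to the minimax optimal $\tilde\delta^{\ast}$ via $\hat\delta_\tau^{\ast}(\varDelta_\tau)=\tilde\delta^{\ast}(\varDelta_\tau/\sigma_\tau)$.

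The second step is to invoke Theorem \ref{thm:minimax}. The inner problem on the right is exactly the finite sample minimax mean square regret problem in Example \ref{exa:1}, with $\bar Y_1$ replaced by $\tilde\varDelta$ and $\tau$ by $\tilde b$. Theorem \ref{thm:minimax} therefore gives
\[
\tilde\delta^{\ast}(\tilde\varDelta)=\frac{\exp(2\tau^{\ast}\tilde\varDelta)}{\exp(2\tau^{\ast}\tilde\varDelta)+1},
\]
with $\tau^{\ast}\approx 1.23$ solving (\ref{eq:minimax.1}) (equivalently (\ref{eq:minimax.2})). Unwinding the substitutions, $\varDelta_\tau=\dot\tau'\varDelta$ and $\tilde\varDelta=\dot\tau'\varDelta/\sigma_\tau$, yields
\[
\hat\delta^{\ast}(\varDelta)=\frac{\exp\!\bigl(\tfrac{2\tau^{\ast}}{\sigma_\tau}\dot\tau'\varDelta\bigr)}{\exp\!\bigl(\tfrac{2\tau^{\ast}}{\sigma_\tau}\dot\tau'\varDelta\bigr)+1},
\]
which is the claimed form.

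There is no real obstacle; the content of the lemma is entirely in the three preceding reductions (Lemmas \ref{lem:C1}--\ref{lem:C2}) plus Theorem \ref{thm:minimax}. The only point requiring care is verifying that the variance rescaling commutes with the infimum and supremum, which follows because $\sigma_\tau^2>0$ is a fixed constant (nonsingularity of $I_0$ in Assumption DQM, together with nontriviality of $\dot\tau$, guarantees $\sigma_\tau>0$) and the rule transformation $\hat\delta_\tau\leftrightarrow\tilde\delta$ is a bijection onto the same class of $[0,1]$-valued measurable rules.
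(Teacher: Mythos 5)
Your proposal is correct and follows essentially the same route as the paper's proof: a change of variables $z=\varDelta_\tau/\sigma_\tau$, $b_\tau=b/\sigma_\tau$ that pulls out the constant factor $\sigma_\tau^2$ and reduces the problem to the unit-variance minimax problem solved in Theorem \ref{thm:minimax}, after which the rule is unwound via $\hat\delta_\tau^{*}(\varDelta_\tau)=\hat\delta_1^{*}(\varDelta_\tau/\sigma_\tau)$ with $\varDelta_\tau=\dot\tau'\varDelta$. Your explicit remarks about the bijection of rule classes and the positivity of $\sigma_\tau$ simply make precise steps the paper leaves implicit.
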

\begin{proof}
By Lemmas \ref{lem:C1} and \ref{lem:C2}, it suffices to find $\hat{\delta}_{\tau}^{*}$. Recall $\sigma_{\tau}^{2}=\dot{\tau}^{\prime}I_{0}^{-1}\dot{\tau}$. Thus,
\begin{align*}
R_{sq}^{\infty}(\hat{\delta}_{\tau},b) & =b^{2}\mathbb{E}_{\varDelta_{\tau}\sim N(b,\sigma_{\tau}^{2})}\left[1\left\{ b\geq0\right\} -\hat{\delta}_{\tau}\left(\varDelta_{\tau}\right)\right]^{2}\\
 & =b^{2}\int\left[1\left\{ b\geq0\right\} -\hat{\delta}_{\tau}\left(\varDelta_{\tau}\right)\right]^{2}\frac{1}{\sqrt{2\pi\sigma_{\tau}^{2}}}\exp\left(-\frac{(\varDelta_{\tau}-b)^{2}}{2\sigma_{\tau}^{2}}\right)d\varDelta_{\tau}\\
 & =b^{2}\int\left[1\left\{ b\geq0\right\} -\hat{\delta}_{\tau}\left(\sigma_{\tau}z\right)\right]^{2}\frac{1}{\sqrt{2\pi}}\exp\left(-\frac{(z-\frac{b}{\sigma_{\tau}})^{2}}{2}\right)dz\\
 & =\sigma_{\tau}^{2}\left(b_{\tau}\right)^{2}\int\left[1\left\{ b_{\tau}\geq0\right\} -\hat{\delta}_{1}(z)\right]^{2}\frac{1}{\sqrt{2\pi}}\exp\left(-\frac{(z-b_{\tau})^{2}}{2}\right)dz\\
 & =\sigma_{\tau}^{2}\left(b_{\tau}\right)^{2}\mathbb{E}_{Z\sim N(b_{\tau},1)}\left[1\left\{ b_{\tau}\geq0\right\} -\hat{\delta}_{1}(Z)\right]^{2}
\end{align*}
where the third line follows from the change of variable $z=\frac{\varDelta_{\tau}}{\sigma_{\tau}}$,
and the fourth line follows by letting $b_{\tau}:=\frac{b}{\sigma_{\tau}}$
and $\hat{\delta}_{1}(z):=\hat{\delta}_{\tau}\left(\sigma_{\tau}z\right)$.
Therefore, the minimax optimal rule $\hat{\delta}_{\tau}^{*}(\varDelta_{\tau})=\hat{\delta}_{1}^{*}(\frac{\varDelta_{\tau}}{\sigma_{\tau}})$,
where $\hat{\delta}_{1}^{*}(z)$ solves 
\begin{equation}
\min_{\hat{\delta}_{1}}\sup_{b_{\tau}}R_{sq}^{\infty}(\hat{\delta}_{1},b_{\tau}),\label{pf:minimax.limit.3}
\end{equation}
and where $R_{sq}^{\infty}(\hat{\delta}_{1},b_{\tau})=(b_{\tau})^2\mathbb{E}_{Z\sim N(b_{\tau},1)}\left[1\left\{ b_{\tau}\geq0\right\} -\hat{\delta}_{1}(Z)\right]^{2}$.
By Theorem \ref{thm:minimax}, we know the solution of (\ref{pf:minimax.limit.3})
is $\hat{\delta}_{1}(z)=\frac{\exp\left(2\tau^{*}z\right)}{\exp\left(2\tau^{*}z\right)+1}$,
where $\tau^{*}$ solves (\ref{eq:minimax.1}) or (\ref{eq:minimax.2}). Hence, $\hat{\delta}_{\tau}^{*}(\varDelta_{\tau})=\frac{\exp\left(2\tau^{*}\frac{\varDelta_{\tau}}{\sigma_{\tau}}\right)}{\exp\left(2\tau^{*}\frac{\varDelta_{\tau}}{\sigma_{\tau}}\right)+1}$.
Finally, note $\varDelta_{\tau}=\dot{\tau}^{\prime}\varDelta$. Thus,
the minimax optimal policy in the limit is
\[
\hat{\delta}^{*}(\varDelta)=\frac{\exp\left(\frac{2\tau^{*}}{\sigma_{\tau}}\dot{\tau}^{\prime}\varDelta\right)}{\exp\left(\frac{2\tau^{*}}{\sigma_{\tau}}\dot{\tau}^{\prime}\varDelta\right)+1}.
\]
\end{proof}

\end{document}